\g@addto@macro\bfseries{\boldmath}
\g@addto@macro\mdseries{\unboldmath}
\g@addto@macro\normalfont{\unboldmath}
\g@addto@macro\rmfamily{\unboldmath}
\g@addto@macro\upshape{\unboldmath}
\renewcommand*{\multicitedelim}{\addcomma\space}
\newcommand{\myhref}[1]{%
  \iffieldundef{doi}
    {\iffieldundef{url}
       {#1}
       {\href{\strfield{url}}{#1}}}
    {\href{http://dx.doi.org/\strfield{doi}}{#1}}%
}
    \newlength{\temp@x}%
    \newlength{\temp@y}%
    \newlength{\temp@w}%
    \newlength{\temp@h}%
    \def\my@coords#1#2#3#4{%
      \setlength{\temp@x}{#1}%
      \setlength{\temp@y}{#2}%
      \setlength{\temp@w}{#3}%
      \setlength{\temp@h}{#4}%
      \adjustlengths{}%
      \my@pdfliteral{\strip@pt\temp@x\space\strip@pt\temp@y\space\strip@pt\temp@w\space\strip@pt\temp@h\space re}}%
      \def\my@pdfliteral#1{\pdfliteral page{#1}}
      \def\adjustlengths{}%
      \def\my@pdfliteral #1{}
      \def\adjustlengths{\setlength{\temp@h}{-\temp@h}\addtolength{\temp@y}{1in}\addtolength{\temp@x}{-1in}}%
    \def\Hy@colorlink#1{%
      \begingroup
        \ifHy@ocgcolorlinks
          \def\Hy@ocgcolor{#1}%
          \my@pdfliteral{q}%
          \my@pdfliteral{7 Tr}
        \else
          \HyColor@UseColor#1%
        \fi
    }%
    \def\Hy@endcolorlink{%
      \ifHy@ocgcolorlinks%
        \my@pdfliteral{/OC/OCPrint BDC}%
        \my@coords{0pt}{0pt}{\pdfpagewidth}{\pdfpageheight}%
        \my@pdfliteral{F}
        %
        \my@pdfliteral{EMC/OC/OCView BDC}%
        \begingroup%
          \expandafter\HyColor@UseColor\Hy@ocgcolor%
          \my@coords{0pt}{0pt}{\pdfpagewidth}{\pdfpageheight}%
          \my@pdfliteral{F}
        \endgroup%
        \my@pdfliteral{EMC}%
        \my@pdfliteral{0 Tr}
        \my@pdfliteral{Q}%
      \fi
      \endgroup
    }%
\colorlet{DarkRed}{red!50!black}
\colorlet{DarkGreen}{green!50!black}
\colorlet{DarkBlue}{blue!50!black}
\declaretheorem[numberwithin=section]{theorem}
\declaretheorem[numberlike=theorem]{lemma}
\declaretheorem[numberlike=theorem]{definition}
\declaretheorem[numberlike=theorem]{claim}
\declaretheorem[numberlike=theorem]{invariant}
\declaretheorem[numberlike=theorem]{hypothesis}
\DeclareMathOperator*{\argmax}{arg\,max}
\def \eps {\varepsilon}
\title{New Algorithms and Hardness for Incremental Single-Source Shortest Paths in Directed Graphs}
\author{
Maximilian Probst Gutenberg\thanks{
The author is supported by Basic Algorithms Research Copenhagen (BARC), supported by Thorup's Investigator Grant from the Villum Foundation under Grant No. 16582.}~\thanks{Work done while
visiting the Massachusetts Institute of Technology, Massachusetts, US. The author is supported by STIBOFONDEN’s IT Travel Grant for PhD Students.}\\ University of Copenhagen
\and Virginia Vassilevska Williams \thanks{The author is supported by an NSF CAREER Award, NSF
Grants CCF-1528078 and CCF-1514339, a BSF Grant
BSF:2012338, a Sloan Research Fellowship and a Google faculty fellowship.} \\MIT 
\and Nicole Wein  \thanks{The author is supported by an NSF Graduate Fellowship and NSF Grant CCF-1514339.} \\MIT}
\date{}
\begin{document}

\maketitle
\thispagestyle{empty}

\begin{abstract}
In the dynamic Single-Source Shortest Paths (SSSP) problem, we are given a graph $G=(V,E)$ subject to edge insertions and deletions and a source vertex $s\in V$, and the goal is to maintain the distance $d(s,t)$ for all $t\in V$. 

Fine-grained complexity has provided strong lower bounds for exact partially dynamic SSSP and approximate fully dynamic SSSP [ESA'04, FOCS'14, STOC'15]. Thus much focus has been directed towards finding efficient partially dynamic $(1+\eps)$-approximate SSSP algorithms [STOC'14, ICALP'15, SODA'14, FOCS'14, STOC'16, SODA'17, ICALP'17, ICALP'19, STOC'19, SODA'20, SODA'20]. 
Despite this rich literature, for directed graphs there are no known deterministic algorithms for $(1+\eps)$-approximate dynamic SSSP that perform better than the classic ES-tree [JACM'81]. We present the first such algorithm. 

We present a \emph{deterministic} data structure for incremental SSSP in weighted directed graphs with total update time $\tilde{O}(n^2 \log W)$ which is near-optimal for very dense graphs; here $W$ is the ratio of the largest weight in the graph to the smallest.
Our algorithm also improves over the best known partially dynamic \emph{randomized} algorithm for directed SSSP by Henzinger et al. [STOC'14, ICALP'15] if $m=\omega(n^{1.1})$. 

Complementing our algorithm, we provide improved conditional lower bounds. Henzinger et al. [STOC'15] showed that under the OMv Hypothesis, the partially dynamic exact $s$-$t$ Shortest Path problem in undirected graphs requires amortized update or query time $m^{1/2-o(1)}$, given polynomial preprocessing time. Under a hypothesis about finding Cliques, we improve the update and query lower bound for algorithms with polynomial preprocessing time to $m^{0.626-o(1)}$. Further, under the $k$-Cycle hypothesis, we show that any partially dynamic SSSP algorithm with $O(m^{2-\eps})$ preprocessing time requires amortized update or query time $m^{1-o(1)}$, which is essentially optimal. All previous conditional lower bounds that come close to our bound [ESA'04,FOCS'14] only held for ``combinatorial'' algorithms, while our new lower bound does not make such restrictions.

\end{abstract}

\clearpage
\pagenumbering{arabic}

\section{Introduction}

A dynamic graph $G$ is a sequence of graphs $G_0, G_1, \dots , G_t$ such that $G_0$ is the \emph{initial} graph that is subsequently undergoing \emph{edge updates} such that every two consecutive versions $G_i$ and $G_{i+1}$ of the dynamic graph $G$ differ only in one edge (or the weight of one edge). If the sequence of update operations consists only of edge deletions and weight increases, we say that $G$ is a \emph{decremental} graph and if the update operations are restricted to edge insertions and weight decreases, we say that G is \emph{incremental}. In either case, we say that $G$ is \emph{partially dynamic} and if the update sequence is mixed we say $G$ is \emph{fully dynamic}. 

In the study of dynamic graph algorithms, we are concerned with maintaining properties of $G$ efficiently. More precisely, we are concerned with designing a data structure that supports \emph{update} and \emph{query} operations such that after the $i^{th}$ edge update is processed, an adversary can query properties of $G_i$. 

We consider the problem of (approximate) Single-Source Shortest Paths (SSSP) in a partially dynamic graph $G$. In this problem, a dedicated source vertex $s\in V$ is given on initialization and the query operation takes as input any vertex $t\in V$ and outputs the (approximate) shortest-path distance estimate $\hat{d}(s,t)$ from $s$ to $t$ in the current version of $G$. We say that distance estimates have \emph{stretch} $\alpha \geq 1$, if the algorithm guarantees that $d(s,t) \leq \hat{d}(s,t) \leq \alpha d(s,t)$ is satisfied for every distance estimate where $d(s,t)$ denotes the distance from $s$ to $t$ in the current version of $G$. 

When proving lower bounds for partially dynamic SSSP, we also consider a potentially easier problem, $s$-$t$ Shortest Path ($s$-$t$ SP), thus obtaining stronger lower bounds. In $s$-$t$ SP, one wants to maintain a shortest path from $s$ to $t$ for some fixed $s$ and $t$.

\subsection{Motivation}

Partially dynamic SSSP is a well-motivated problem with wide-ranging applications:
\begin{itemize}
    \item Partially dynamic data structures are often used as internal data structures to solve the fully dynamic version of the problem (see for example \cite{king1999fully, roditty2004dynamic, henzinger2016dynamic} for applications of partially dynamic SSSP) which in turn can be used to maintain properties of real-world graphs undergoing changes.
    \item Partially dynamic SSSP is often employed as internal data structure for related problems such as maintaining the diameter in partially dynamic graphs \cite{ancona2018algorithms, DBLP:journals/corr/abs-1812-01602} or matchings in incremental bipartite graphs \cite{bernstein2018online}.
    \item Many static algorithms use partially dynamic algorithms as a subroutine. For example, incremental All-Pairs Shortest Paths can be used to construct light spanners \cite{alstrup2017constructing} and greedy spanners. Moreover, a recent line of research shows that many flow problems can be reduced to decremental SSSP, and recent progress has already led to faster algorithms for multi-commodity flow \cite{madry2010faster}, vertex-capacitated flow, and sparsest vertex-cut \cite{Chuzhoy:2019:NAD:3313276.3316320}.
\end{itemize}

\subsection{Prior Work}

In this section we discuss prior work directly related to our results. We use $\tilde{O}$ notation to suppress factors of $\log n$. 
We refer the reader to Appendix~\ref{app} for further discussion of related work.

Let $m$ be the maximum number of edges and let $n$ be the maximum number of vertices\footnote{Some algorithms allow vertex updates, therefore the number of vertices might be due to change.} in any version of the dynamic input graph $G$. If $G$ is weighted, we denote by $W$ the aspect ratio of the graph, which is the largest weight divided by the smallest weight in the graph. For partially dynamic algorithms, we follow the convention of stating the \emph{total update time} rather than the time for each individual update. Unless otherwise stated, queries take worst-case constant time.

\paragraph{Algorithms for partially dynamic directed SSSP.} For directed graphs, the classic ES-tree data structure by Even and Shiloach \cite{shiloach1981line} and its later extensions by Henzinger and King \cite{henzinger1995fully} initiated the field, with total update time $O(mn W)$ for exact incremental/decremental directed SSSP. Using an edge rounding technique \cite{raghavan1987randomized, cohen1998fast, zwick2002all, bernstein2009fully, madry2010faster, bernstein2016maintaining}, the ES-tree can further handle edge weights more efficiently, giving an $\tilde{O}(mn \log W / \eps)$ time algorithm for incremental/decremental $(1+\eps)$-approximate directed SSSP. This result has been improved to total update time $\tilde{O}(\min\{m^{7/6} n^{2/3}\log W, m^{2/3} n^{4/3 +o(1)}\log W\})  = mn^{9/10+o(1)} \log W$ by the breakthrough results of Henzinger, Forster, and Nanongkai \cite{henzinger2014sublinear,henzinger2015improved}. Their algorithm is Monte Carlo and works against an oblivious adversary (an adversary that fixes the entire graph sequence of updates in advance). Whilst presented only in the decremental setting, this algorithm appears to extend to the incremental setting.


Very recently, Probst and Wulff-Nilsen \cite{directedDecrementalSSSP} improved upon this result and presented a randomized data structure for decremental directed SSSP against an oblivious adversary with total update time $\tilde{O}(\min\{mn^{3/4} \log W, m^{3/4} n^{5/4} \log W\})$. They also give a Las Vegas algorithm with  total update time $\tilde{O}(m^{3/4}n^{5/4} \log W)$  that works against an adaptive adversary.
They also get slightly improved bounds for unweighted decremental graphs. We point out, however, that their data structure cannot return approximate shortest paths to the adversary as it would reveal the random choices. Further, unlike the data structure by Henzinger et al., 
their approach cannot be extended to the incremental setting since it relies heavily on finding efficient separators and on maintaining the topological order of vertices in the graph. 

In summary, all known partially dynamic algorithms for directed graphs that are faster than ES-trees are {\em randomized}, and their \emph{amortized} update time even for $m \sim n^2$ insertions is at least some polynomial. Moreover, it is unclear how to extend the result from \cite{directedDecrementalSSSP} to the incremental setting.

\paragraph{Lower bounds.} Conditional lower bounds for partially dynamic SSSP were first studied by Roditty and Zwick \cite{roditty2004dynamic}. They showed that in the weighted setting, APSP can be reduced to partially dynamic SSSP with $O(n^2)$ updates and queries, thus implying that the amortized query/update time must be $n^{1-o(1)}$, unless APSP can be solved in truly subcubic time (i.e. $n^{3-\eps}$ for constant $\eps>0$).

For unweighted SSSP in the partially dynamic setting, there is a weaker lower bound \cite{roditty2004dynamic}: Under the Boolean Matrix Multiplication (BMM) hypothesis, any combinatorial incremental/decremental algorithm for unweighted SSSP requires amortized $n^{1-o(1)}$ update/query time.
Abboud and Vassilevska Williams~\cite{popularconj} modified the \cite{roditty2004dynamic} construction to give stronger lower bounds even for the unweighted $s$-$t$ SP problem: any combinatorial incremental/decremental algorithm for unweighted $s$-$t$ SP requires either amortized $n^{1-o(1)}$ update time or $n^{2-o(1)}$ query time.

We point out that the unweighted SSSP lower bounds of \cite{popularconj,roditty2004dynamic} are weak in two ways: (1) they are only for combinatorial algorithms, and (2) they hold only when the number of edges $m$ is quadratic in the number of vertices, so in terms of $m$, the update lower bound is merely $m^{0.5-o(1)}$. Henzinger et al.~\cite{henzinger2015unifying} aimed to rectify (1). They introduced a very believable assumption, the OMv Hypothesis, which is believed to hold for arbitrary algorithms, not merely combinatorial ones. Henzinger et al.~\cite{henzinger2015unifying} showed that under the OMv Hypothesis, incremental/decremental $s$-$t$ SP (in the word-RAM model) requires $m^{0.5-o(1)}$ amortized update time\footnote{Similar to the lower bounds based on BMM, the Henzinger et al.~\cite{henzinger2015unifying} lower bound on the update time is $n^{1-o(1})$ but the number of edges in the construction is quadratic in $n$, so that in terms of $m$, the lower bound is $m^{0.5-o(1)}$.} or $m^{1-o(1)}$ query time, thus obtaining the same lower bounds as under the BMM Hypothesis, but now for not necessarily combinatorial algorithms.

\subsection{Results}

Our main result is a new elegant algorithm for the incremental SSSP problem in weighted digraphs.

\begin{theorem} \label{thm:mainResultUpperBound}
There is a deterministic algorithm that given a weighted directed graph $G=(V,E)$ subject to $\Delta$ edge insertions and weight decreases, a vertex $s\in V$, and $\eps>0$, maintains for every vertex $v$ an estimate $\hat{d}(v)$ such that after every update $d(s,v)\leq \hat{d}(v)\leq (1+\eps)d(s,v)$, and runs in total time $\tilde{O}(n^2 \log W/\eps^{2.5} + \Delta)$. A query for the approximate shortest path from $s$ to any vertex $v$ can be answered in time linear in the number of edges on the path.
\end{theorem}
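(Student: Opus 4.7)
The plan is to combine logarithmic distance quantization, Dijkstra-style propagation of decreases, and an amortization that bounds the total edge-scan work to $\tilde{O}(n^2)$ rather than the naive $\tilde{O}(mn)$. Let $L = O(\log W/\eps)$ be the number of distance levels. I would maintain for every vertex $v$ an estimate $\hat{d}(v)$ rounded to a power of $(1+\delta)$ with $\delta = \Theta(\eps)$; since distances only decrease in the incremental setting, each $\hat{d}(v)$ can undergo at most $L$ discrete level-decreases over the entire update sequence, so the global budget of ``level-decrease events'' is $O(nL)$. If I can process each event in amortized $\tilde{O}(n/\eps^{1.5})$ time, the claimed $\tilde{O}(n^2 \log W / \eps^{2.5})$ bound follows, with the additive $\Delta$ term accounting only for ingesting updates and maintaining adjacency structure.

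For the propagation, I would equip each vertex $v$ with an auxiliary slack-sorted structure over its out-neighbors, so that upon a level-decrease at $v$ only those out-edges $(v,u)$ whose slack $w(v,u)+\hat{d}(v)-\hat{d}(u)$ has become negative (or tightened past a chosen bucket boundary) are actually touched. A Dijkstra-style priority queue, shared globally, orders the cascade of vertex wake-ups and guarantees $(1+\eps)$-stretch of the returned estimates. I expect to need bucketing at two scales, a fine one of $\Theta(\eps)$ that controls when relaxations actually fire, and a coarser one of roughly $\Theta(\sqrt{\eps})$ used to batch near-tight edges, because cascades along a near-tight shortest path would otherwise burn one unit of work per edge per level-decrease. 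The factor $\eps^{-1/2}$ introduced by this two-scale scheme, combined with the standard $\eps^{-2}$ coming from the quantization and the propagation depth, is the natural source of the unusual $\eps^{-2.5}$ in the running time.

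The hardest step will be the amortization: showing that the total number of edge scans across all level-decrease events is $\tilde{O}(n^2 L / \sqrt{\eps})$ rather than $\tilde{O}(mL)$. My plan is to charge each scan either to a successful relaxation at its target vertex (at most $O(nL)$ globally), to a subsequent level-decrease at its source vertex (again $O(nL)$), or to the edge's own insertion (absorbed by the $\Delta$ term). Making this charging robust against arbitrary, non-oblivious adversaries, which is essential since the algorithm must be deterministic, and ensuring that the auxiliary slack structures themselves admit updates in amortized $\tilde{O}(1)$ time per event without introducing hidden $\eps^{-1}$ overheads, is where I expect the main technical difficulty to lie.
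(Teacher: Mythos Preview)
Your proposal has a genuine gap: it is missing the structural idea that makes the paper's algorithm work, and the substitute you sketch does not simultaneously give correctness and the running time.

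The difficulty specific to \emph{directed} graphs is that a single vertex $v$ of large out-degree can sit on a shortest $s$--$t$ path, and when $\hat d(v)$ drops you may need to touch all of its out-neighbours. The paper deals with this not by level-quantising $\hat d$ but by introducing \emph{forward neighbourhoods} $\mathcal{FN}(v)\subseteq\mathcal N^{out}(v)$ (out-neighbours whose current estimate exceeds $\hat d(v)$), assigning each vertex a \emph{heaviness} $h(v)$ tied to $|\mathcal{FN}(v)|$, and scanning $\mathcal{FN}(v)$ only every $2^{h(v)}$ unit decrements of $\hat d(v)$. The crucial lemma is that if $u,v$ lie on a common shortest path and $\mathcal{FN}(u)\cap\mathcal{FN}(v)\neq\emptyset$, then $|\hat d(u)-\hat d(v)|\le 2^{h(u)}$; hence heavy vertices with disjoint forward neighbourhoods are few, and the additive error they contribute telescopes to $O(\eps\tau)$ per depth scale $\tau$. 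Nothing in your outline plays this role.

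Your alternative---round $\hat d$ to powers of $(1+\delta)$ and only ``wake up'' a vertex when it crosses a level boundary---runs into the standard obstruction: the $(1+\delta)$ slack is incurred \emph{per edge}, so along a $k$-hop path the estimate can drift to $(1+\delta)^k d(s,t)$, not $(1+\eps)d(s,t)$. If instead you propagate eagerly enough to avoid this accumulation, you lose the $O(nL)$ global bound on wake-ups and your charging collapses back to $O(mL)$ work with $L$ potentially linear in $n$. The ``two-scale bucketing at $\Theta(\eps)$ and $\Theta(\sqrt\eps)$'' is asserted to fix this, but you give no mechanism by which a coarse bucket prevents per-edge error from compounding along a path; that is exactly what the forward-neighbourhood overlap argument does, and it is not a bookkeeping trick one can gesture at. Finally, note that your charging sketch, taken at face value, would yield $\tilde O(m\log W/\eps)$ total time---strictly better than the theorem---which is a sign that correctness, not amortisation, is where the plan breaks.
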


Our result is the first deterministic partially dynamic directed SSSP algorithm to improve over the long-standing $O(mn)$ time bound achieved by the ES-tree \cite{shiloach1981line}. Our result is essentially optimal for very dense graphs, and is the first algorithm with essentially optimal update time for any density in directed graphs. Furthermore, our algorithm further improves on the \emph{randomized} $mn^{0.9+o(1)} \log W$ time algorithm of Henzinger et al.~\cite{henzinger2014sublinear} if $m=\omega(n^{1.1})$ (their paper presents only in the decremental setting, but it appears to extend to the incremental setting as well).

A further strength of our algorithm is that in addition to returning distance estimates, it can also return the corresponding approximate shortest paths, i.e. it is path-reporting. All known path-reporting dynamic SSSP algorithms except for the ES-tree are randomized against an oblivious adversary, so our algorithm is the first path-reporting deterministic or randomized against an adaptive adversary algorithm even considering algorithms for undirected graphs. 
(A recent randomized data structure by Chuzhoy and Khanna \cite{Chuzhoy:2019:NAD:3313276.3316320} can return paths in undirected graphs and works against an adaptive adversary however it works in a more restricted setting and requires $n^{1+o(1)}$ query time). 

Finally, we point out that our theoretical bounds are also likely to translate into an algorithm that is fast in practice since we only rely on array and simple arithmetic operations and do not make use of any involved internal data structures, whilst previous partially dynamic SSSP algorithms are rather involved and often rely on complicated techniques. 

Our second contribution includes several new fine-grained lower bounds for the partially dynamic SSSP and $s$-$t$-SP problems in unweighted undirected graphs.
The only known conditional lower bounds for partially dynamic SSSP and $s$-$t$-SP in unweighted graphs give an update time lower bound of $m^{0.5-o(1)}$. While the ES-tree data structure does achieve an $O(\sqrt m)$ amortized update/query time upper bound whenever $m=\Theta(n^2)$, this upper bound does not improve for lower sparsities. This motivates the following question:

$ $\\
{\em Is partially dynamic SSSP solvable with amortized update/query time $O(\sqrt m)$ for all sparsities $m$?}
$ $\\

Our work answers this question with the tools of fine-grained complexity.
Our first result is based on the following $k$-Cycle hypothesis (see \cite{lincolnsoda18,ancona2019}).



\begin{hypothesis}[$k$-Cycle Hypothesis] \label{hyp:cycle}
In the word-RAM model with $O(\log m)$ bit words,
for any constant $\eps > 0$, there exists a constant integer $k$, so that there is no $O(m^{2-\eps})$ time algorithm that can detect a $k$-cycle in an $m$-edge graph.
\end{hypothesis}

Our first result says that under the $k$-Cycle Hypothesis, if the preprocessing time of a partially dynamic $s$-$t$-SP  algorithm is subquadratic $O(m^{2-\eps})$ for $\eps>0$, then in fact the algorithm cannot achieve truly sublinear, $O(m^{1-\eps'})$ amortized update and query time for any $\eps'>0$. This is a quadratic improvement over the previous known lower bounds, and it is also tight, as trivial recomputation 
achieves amortized update/query time $O(m)$.

\begin{theorem}
\label{thm:MainCycleLowerBound}
Under Hypothesis \ref{hyp:cycle}, there can be no constant $\eps > 0$ such that partially dynamic $s$-$t$ SP in undirected graphs can be solved with $O(m^{2-\eps})$ preprocessing time, and $O(m^{1-\eps})$ update \emph{and} query time, for all graph sparsities $m$.
\end{theorem}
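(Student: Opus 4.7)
I would prove Theorem~\ref{thm:MainCycleLowerBound} by a fine-grained reduction from $k$-cycle detection to partially dynamic $s$-$t$ SP in undirected graphs. Assume, toward a contradiction, that such a partially dynamic algorithm exists with preprocessing $O(m^{2-\eps_0})$ and amortized update/query time $O(m^{1-\eps_0})$ for some constant $\eps_0 > 0$, and let $k$ be the constant obtained by invoking Hypothesis~\ref{hyp:cycle} at $\eps = \eps_0$.

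Given an $m$-edge instance $H$ of $k$-cycle, the plan is to first reduce, via standard color coding at an $O_k(1)$ overhead (derandomizable with $k$-wise independent hash families), to the case in which $H$ is strongly $k$-partite with parts $V^1, \ldots, V^k$. I would then build an undirected layered graph $G$ with $\Theta(m)$ edges, having $k + 1$ layers $L_0, L_1, \ldots, L_k$, where $L_0$ and $L_k$ are disjoint copies of $V^1$ and $L_i$ is a copy of $V^{i+1}$ for $1 \le i < k$, with inter-layer edges mirroring the edges of $H$ between consecutive parts cyclically. The strongly $k$-partite structure then guarantees that every walk of length $k$ from a vertex $v$ in $L_0$ to its twin copy in $L_k$ is precisely a $k$-cycle through $v$ in $H$, and vice versa. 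Finally, I would attach a source $s$, a sink $t$, and for each $v \in V^1$ a constant-size \emph{activation gadget}: an $O(1)$ sequence of insertions (or weight decreases) that makes the path $s \to v \in L_0$ and $v \in L_k \to t$ cheap in a controlled way.

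The reduction then processes $v_1, v_2, \ldots, v_n \in V^1$ in order: at step $i$ we apply $O(1)$ incremental updates activating $v_i$'s gadget and issue one distance query, declaring that $H$ contains a $k$-cycle as soon as the $s$-$t$ distance drops to $k + O(1)$. Since $n \le m$, the total runtime is $O(m^{2-\eps_0}) + n \cdot O(m^{1-\eps_0}) = O(m^{2-\eps_0})$, contradicting Hypothesis~\ref{hyp:cycle} for the chosen $k$ and proving the theorem.

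The hardest part will be designing the activation gadget under the monotone-update constraint. In the incremental setting, earlier-activated gadgets cannot be removed, and it would be fatal to let a ``mixed'' $s$-$t$ path---entering via one activated vertex $v_i$'s source-side gadget and exiting via a distinct $v_j$'s sink-side gadget---become a shorter witness than any true cycle path, since such mixed paths correspond only to open $k$-walks between $V^1$-vertices and would cause the algorithm to confuse $k$-cycles with genuinely easier objects. I plan to address this by assigning a strictly monotone weight schedule to the per-vertex activation edges (together with carefully chosen internal weights) so that at step $i$ the shortest $s$-$t$ path that uses only $v_i$'s two activation ports is always strictly cheaper than any cross-path using two distinct activated vertices' ports; the decremental variant is then handled symmetrically by starting with all gadgets active and deactivating them in reverse order.
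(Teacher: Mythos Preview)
Your plan is essentially the paper's proof: color-code to a $k$-partite instance, build the $(k{+}1)$-layer graph, and activate the $V^1$-vertices one by one with a monotone schedule so that at step $i$ only a genuine $k$-cycle through $v_i$ attains the current minimum. Two small points. First, the threshold cannot be a fixed ``$k+O(1)$'': with any monotone activation schedule the cost of the two ports of $v_i$ must strictly undercut every earlier pair, so the target distance at step $i$ is necessarily step-dependent (in the paper it is $2(n-i)+k+2$, and you compare the query to that value). Second, the paper realizes the monotone schedule \emph{without} weights, by hanging $s$ and $t$ off length-$n$ paths $s_1\text{--}\cdots\text{--}s_n$ and $t_1\text{--}\cdots\text{--}t_n$ and, at step $i$, inserting the single edges $(s_{n+1-i},v_{n+1-i})$ and $(v'_{n+1-i},t_{n+1-i})$; the path-distance to $s_1$ (resp.\ $t_1$) then plays the role of your ``monotone weight''. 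This buys the stronger unweighted statement and removes the need for the ``carefully chosen internal weights'' you flag as the hard part.
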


A consequence of the proof of Theorem~\ref{thm:MainCycleLowerBound} above is that (under Hypothesis  \ref{hyp:cycle}) the $O(mn)$ total update time achieved by ES-trees is essentially optimal, also when $m$ is close to linear in $n$. Recall that the OMv lower bound only showed this for $m =\Theta(n^2)$.

While the above lower bound is tight, it only holds for truly subquadratic preprocessing time. Recall that the only known lower bound for arbitrary polynomial preprocessing time is the $m^{0.5-o(1)}$ bound under OMv. 

We first develop an intricate reduction that shows that an efficient enough partially dynamic $s$-$t$ SP algorithm can be used to solve the $4$-Clique problem. Then we define an {\em online} version of $4$-Clique, similar to OMv that is plausibly hard even for arbitrary polynomial time preprocessing. 

We show that if $4$-Clique requires $n^{c-o(1)}$ time for some $c$, then, then any algorithm for partially dynamic $s$-$t$ SP with $O(m^{c/2-\eps})$ preprocessing time for some $\eps>0$, must have update or query time at least $m^{(c-2)/2-o(1)}$. 

$4$-Clique is known to be solvable in $O(n^{3.252})$ time, and if the matrix multiplication exponent $\omega$ is $>2$, the best running time for $4$-clique would still be truly supercubic.  Thus, the update time in our conditional lower bound, $m^{(c-2)/2-o(1)}$ is polynomially better than $m^{0.5-o(1)}$, as long as $\omega>2$. Recent results \cite{AlmanVW18,Alman19} show that the known techniques for matrix multiplication cannot show that $\omega$ is less than $2.16$.

While the connection between clique detection and $s$-$t$ SP is interesting in its own right, it does not resolve the limitation on the preprocessing time of our previous lower bound.
To fix this, we introduce an online version of $4$-Clique, generalizing the OMv (actually the related OuMv \cite{henzinger2015unifying}) problem: 

\begin{definition}[OMv3 problem] \label{OMv3}
In the OMv3 problem, we are given an $n\times n$ Boolean matrix $A$ that can be preprocessed and then $n$ queries consisting of three length $n$ Boolean vectors $u,v,w$ have to be answered online by outputting the Boolean value 
$$\bigvee_{i,j,k} (u_i \wedge v_j \wedge w_k\wedge A[i,j] \wedge A[j,k]\wedge A[k,i]).$$
\end{definition}

One can think of $u,v,w$ as giving the neighbors of an incoming vertex $q$ in the three partitions of a tripartite graph, and then the Boolean value just answers whether $q$ would be part of a $4$-Clique if it were added to the graph. 
This is the natural extension of Henzinger et al.'s OuMv problem. OMv3 is easy to solve in $O(n^\omega)$ time per query by computing whether the neighborhood defined by $u,v,w$ contains a triangle. We hypothesize that there is no better algorithm, even if one is to preprocess the matrix in arbitrary polynomial time:

\begin{hypothesis}[OMv3 Hypothesis]\label{hyp:OMv3}
Any algorithm solving OMv3 with polynomial preprocessing time needs $n^{\omega+1-o(1)}$ total time to solve OMv3 in the word-RAM model with $O(\log n)$ bit words.
\end{hypothesis}

Using this Hypothesis, using essentially the same reduction as from $4$-Clique to $s$-$t$ SP, we obtain plausible conditional lower bounds for arbitrary polynomial preprocessing time and polynomially higher than $m^{0.5-o(1)}$ update/query time lower bound, improving the prior known results.

\begin{theorem} \label{thm:hardnessOMv}
In the word-RAM model with $O(\log m)$ bit words, under Hypothesis \ref{hyp:OMv3}, 
any incremental/decremental $s$-$t$ Shortest Paths algorithm with polynomial preprocessing time needs $m^{(\omega-1)/2-o(1)}$ amortized update or query time. For the current value of $\omega$, the update lower bound is $\Omega(m^{0.626})$.
\end{theorem}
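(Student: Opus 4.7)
The plan is to reduce OMv3 to incremental $s$-$t$ SP using the same framework as the $4$-Clique-to-$s$-$t$-SP reduction sketched above. Given an OMv3 instance with matrix $A\in\{0,1\}^{n\times n}$ and a sequence of $n$ online query triples $(u^{(q)},v^{(q)},w^{(q)})$, I construct a directed weighted graph of size $m=\Theta(n^2)$ encoding $A$ via a four-layer tripartite structure on vertices $a_i, b_j, c_k, a_i'$. Preprocessing places the weight-$1$ edges $a_i\to b_j^{\mathrm{in}}$ for $A[i,j]=1$, $b_j^{\mathrm{out}}\to c_k^{\mathrm{in}}$ for $A[j,k]=1$, and $c_k^{\mathrm{out}}\to a_i'$ for $A[k,i]=1$, where $b_j^{\mathrm{in/out}}$ and $c_k^{\mathrm{in/out}}$ are split vertices so that each middle vertex can be independently activated. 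For each OMv3 query I perform $O(n)$ weight-$0$ edge insertions (or weight decreases) activating the selected rows and columns: $s\to a_i$ and $a_i'\to t$ whenever $u_i=1$, together with the bypass edges $b_j^{\mathrm{in}}\to b_j^{\mathrm{out}}$ and $c_k^{\mathrm{in}}\to c_k^{\mathrm{out}}$ whenever $v_j=w_k=1$. A directed triangle in the activated subgraph corresponds to an $s$-$t$ path of weight $3$, so a single distance query decodes the OMv3 answer.

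The principal obstacle is \emph{cross-query contamination}: activation edges inserted at round $q$ persist into later rounds, and a subsequent query could find a spurious short $s$-$t$ path whose endpoints originate from different rounds or whose $a_i$- and $a_i'$-ends correspond to different $i$'s (so the path encodes a $4$-cycle rather than a triangle). I plan to resolve this by augmenting the construction with $n$ per-round activation slots that share the $O(n^2)$-edge $A$-encoding core through a lightweight multiplexing gadget, keeping the total edge count at $m=\Theta(n^2)$. An alternative weight-based isolation assigns each round $q$ a distinct weight offset $W_q$ so that a round-$q$ triangle yields a predicted threshold distance $\propto W_q$; exploiting that $s$-$t$ distances are monotonically non-increasing under edge insertions, after round $q$ one checks whether the current distance hits that threshold, thereby isolating round $q$'s answer, and a similar tag enforces that the activations at $s\to a_i$ and $a_i'\to t$ correspond to the same $i$. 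This isolation engineering is the delicate part.

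Over the $n$ OMv3 rounds this yields a total of $N_u=O(n^2)$ updates and $N_q=O(n)$ queries on the dynamic data structure. If the incremental $s$-$t$ SP structure has polynomial preprocessing together with amortized update time $u(m)$ and query time $q(m)$, it solves OMv3 in time $\mathrm{poly}(m)+O(n^2)\cdot u(m)+O(n)\cdot q(m)$. By Hypothesis~\ref{hyp:OMv3} this must be at least $n^{\omega+1-o(1)}$, hence
\[
n^2\cdot u(m)+n\cdot q(m)\;\geq\;n^{\omega+1-o(1)}.
\]
Since $m=\Theta(n^2)$, at least one of $u(m)$ and $q(m)$ is $\Omega(n^{\omega-1-o(1)})=\Omega(m^{(\omega-1)/2-o(1)})$, proving the first part of the theorem. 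Substituting the current value of $\omega$ yields the stated numerical lower bound.
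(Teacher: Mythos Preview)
Your overall strategy and the final arithmetic match the paper's: reduce OMv3 on an $n\times n$ matrix to incremental $s$-$t$ SP on a graph with $m=\Theta(n^2)$ edges and $\Theta(n^2)$ total updates and queries, then read off $m^{(\omega-1)/2-o(1)}$. However, the two issues you yourself flag as ``delicate''---cross-round contamination and the same-$i$ constraint---are the entire technical content of the reduction, and your sketch does not actually resolve either.

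On cross-round contamination: once a bypass $b_j^{\mathrm{in}}\to b_j^{\mathrm{out}}$ is inserted in some round $q'$ it persists, and in a later round $q$ a path can combine round-$q$ end activations with that round-$q'$ bypass; a weight offset $W_q$ placed only on the end edges does not distinguish this from a legitimate round-$q$ triple. One would need round-specific decreasing weights on \emph{all four} activation edges simultaneously, which you do not spell out.

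On the same-$i$ constraint: no additive tag can enforce it with a single query. If $s\to a_i$ carries weight $f(i)$ and $a_{i'}'\to t$ carries $g(i')$, the sum $f(i)+g(i')$ is minimised at independent choices of $i$ and $i'$, so some $i\neq i'$ pair always ties or beats the diagonal. The only way out is to iterate over $i$ within each round, which makes $N_q=\Theta(n^2)$, not $O(n)$ as you state (the conclusion survives since $N_u=\Theta(n^2)$ already).

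The paper handles both issues at once, and in \emph{unweighted} graphs, via a construction that is not ``lightweight multiplexing'' in any naive sense. Each of the four roles $a_i,b_j,c_k,a_i'$ is replaced by a path on vertices $(i,0),(i,1),\ldots,(i,n)$; the $A$-edges run from layer $n$ of one gadget to layer $0$ of the next. Round $\ell$ inserts the shortcut $(i,0)\to(i,\ell)$ in each gadget whose vector has a $1$ in coordinate $i$. Crossing a single gadget from layer $0$ to layer $n$ then costs at least $n+1-\ell$, with equality only via a round-$\ell$ shortcut; any earlier shortcut gives a strictly longer subpath. Hence the round-$\ell$ middle threshold $3+4(n+1-\ell)$ forces all four activations to come from round $\ell$---this is precisely the isolation mechanism you are missing. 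Source and sink paths indexed by $(\ell,a)$ then iterate $a=1,\ldots,n$ within each round (two insertions and one query per $a$), with a combined distance threshold that also pins down $a$, enforcing the same-$i$ constraint.
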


In terms of both $m$ and $n$, Theorem \ref{thm:hardnessOMv} implies that when $m=O(n)$, partially dynamic $s$-$t$ Shortest Paths with arbitrary polynomial preprocessing needs total time $mn^{(\omega-1)/2-o(1)}$. This is the best limitation to date that both allows for arbitrary polynomial preprocessing and also holds for sparse graphs. If one considers ``combinatorial'' algorithms (i.e. where $\omega=3$), one gets that ES trees are essentially optimal again.


We refer the reader to section \ref{sec:fineGrained} for a more detailed discussion of our fine-grained results, including further discussion of the plausibility of our conjectures.



\section{Preliminaries}

For a dynamic weighted directed graph $G = (V,E, w)$, we let $G_i$ denote the $i^{th}$ version of $G$ but simply write $G$ if the context is clear. We define $n$ to be the number of vertices in $G$ and we define $m$ to be the maximum number of edges in any version of $G$, respectively. For each vertex $v \in V$, we let the out-neighborhood $\mathcal{N}^{out}(v)$ be the set of all vertices $w$ such that $(v,w) \in E$. Analogously, let $\mathcal{N}^{in}(v) = \{ u \in V | (u,v) \in E\}$. For all $u,v\in V$, let $d(u,v)$ denote the distance from $u$ to $v$. 

For an array $A$, let $A[i,j]$ be the subarray of $A$ from index $i$ to index $j$, inclusive. If $A$ is an array of lists, we define the size of $A[i,j]$ denoted $|A[i,j]|$ to mean the sum $\sum_{k=i}^j|A[k]|$ of the sizes of each list from $A[i]$ to $A[j]$.

We use $\log n$ to mean $\log_2 n$ and for convenience, we assume without loss of generality that $n$ is a power of $2$. For integers $x,y$ we let $\lfloor x \rfloor_y$ denote the largest multiple of $y$ that is at most $x$.


\section{Warm-up: An $O(n^{2+2/3}/\eps)$ Time Algorithm}

In this section we describe an algorithm for incremental SSSP on unweighted directed graphs with total update time  $O(n^{2+2/3}/\eps)$. This algorithm illustrates the main ideas used in our $\tilde{O}(n^2 \log W/\eps)$ algorithm.

\begin{theorem}
There is a deterministic algorithm that given an unweighted directed graph $G=(V,E)$ subject to edge insertions, a vertex $s\in V$, and $\eps>0$, maintains for every vertex $v$ an estimate $\hat{d}(v)$ such that after every update $d(s,v)\leq \hat{d}(v)\leq (1+\eps)d(s,v)$, in total time $O(n^{2+2/3}/\eps)$.
\end{theorem}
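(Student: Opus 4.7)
The plan is to combine a bounded-depth Even--Shiloach tree with a lazy-propagation scheme controlled by a single parameter $h$. Set $h := n^{2/3}$ and $D := h/\eps = n^{2/3}/\eps$. For the first component, run a classical ES-tree rooted at $s$ truncated at depth $D$; this maintains exact distances $d(s,v)$ for every $v$ with $d(s,v) \leq D$. In the incremental setting, the standard Even--Shiloach analysis charges $O(D)$ work per edge over the entire update sequence, for a total cost of $O(mD) = O(n^{8/3}/\eps)$, which already meets the target bound.

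The second component takes care of vertices beyond distance $D$. For each $v$ I maintain two numbers: a current estimate $\hat d(v)$ (a monotone upper bound on $d(s,v)$) and a ``propagated'' value $\tilde d(v)$, equal to $\hat d(v)$ at the time $v$ last relayed its estimate. Upon an edge insertion $(u,v)$ or a propagation from some in-neighbor $u$, I attempt the relaxation $\hat d(v) \gets \min(\hat d(v), \tilde d(u) + 1)$, but I do \emph{not} notify $v$'s out-neighbors unless the gap $\tilde d(v) - \hat d(v)$ has grown to $h$. When that threshold is reached, I scan each out-neighbor $w$ of $v$, attempt $\hat d(w) \gets \min(\hat d(w), \hat d(v) + 1)$, and reset $\tilde d(v) \gets \hat d(v)$. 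Because $\hat d$ is monotone and lies in $[0, n]$, every vertex triggers at most $n/h$ propagations, each costing $O(|\mathcal N^{out}(v)|)$ time, so the total propagation cost is $O(mn/h) = O(n^{7/3})$, absorbed by the ES-tree's cost. Edge insertions that do not improve anything cost $O(1)$ each.

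For correctness I must show $d(s,v) \leq \hat d(v) \leq (1+\eps) d(s,v)$ at every time. The lower bound follows from a standard induction: every assignment to $\hat d(v)$ comes from a valid walk from $s$. For the upper bound, vertices with $d(s,v) \leq D$ are handled exactly by the ES-tree. For $d(s,v) > D$, I aim to show $\hat d(v) \leq d(s,v) + h$; combined with $d(s,v) > h/\eps$ this yields $\hat d(v) \leq (1+\eps) d(s,v)$. Fixing a shortest path $s = v_0, v_1, \ldots, v_k = v$ in the current graph, I induct on $i$ and use two facts: first, $\tilde d(v_{i-1}) - \hat d(v_{i-1}) \leq h$ at all times (by the propagation rule), and second, the relaxation $\hat d(v_i) \leq \tilde d(v_{i-1}) + 1$ is enforced either at the moment the edge $(v_{i-1}, v_i)$ is inserted or at the next propagation of $v_{i-1}$.

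The main obstacle will be carrying out this correctness induction rigorously. Because propagation is triggered only on absolute drops of size $h$ (and not at every relaxation), I have to track a \emph{timeline} showing that for each edge $(v_{i-1}, v_i)$ of the current shortest path, a propagation of $v_{i-1}$ occurs after its estimate has dropped enough to certify the bound at $v_i$. The $h$-threshold is designed precisely so that the refresh happens before the slack can exceed $h$, but one must be careful at the boundary $d(s,v) \approx D$, where the ES-tree and the lazy propagation both may be responsible for $\hat d(v)$; the cleanest formulation is probably to initialize $\hat d(v)$ from the ES-tree whenever $v$ enters the tree, and only use lazy propagation to further improve $\hat d(v)$ once $v$ drops out of the ES-tree's depth window.
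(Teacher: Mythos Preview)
Your correctness argument has a genuine gap: the claimed bound $\hat d(v)\le d(s,v)+h$ does \emph{not} follow from the two facts you state. From $\hat d(v_i)\le\tilde d(v_{i-1})+1$ and $\tilde d(v_{i-1})-\hat d(v_{i-1})\le h$ you only get $\hat d(v_i)\le\hat d(v_{i-1})+h+1$; telescoping along a shortest path of length $k$ yields $\hat d(v_k)\le k(h+1)$, i.e.\ additive error $\Theta(kh)$ rather than $h$. This is not just a slack in the analysis---your algorithm really can accumulate error. For instance, first install for each $i$ a path from $s$ to $v_i$ of length $ih$ (so $\hat d(v_i)=\tilde d(v_i)=ih$), then insert the edges $(v_{i-1},v_i)$ in order: each insertion sets $\hat d(v_i)\gets\tilde d(v_{i-1})+1=(i-1)h+1$, a drop of $h-1<h$, so no propagation fires. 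Finally insert $(s,v_1)$. Now $d(s,v_k)=k$ but $\hat d(v_k)=(k-1)h+1$, an error of roughly $kh$. For $k$ beyond your ES-tree depth $D=n^{2/3}/\eps$ this destroys the $(1+\eps)$ guarantee.

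The paper's warm-up proceeds quite differently. Rather than applying a uniform laziness threshold to every vertex, it partitions vertices into \emph{light} and \emph{heavy} according to the size of a carefully defined \emph{forward neighborhood} $\mathcal{FN}(u)\subseteq\mathcal N^{out}(u)$ (those out-neighbors whose cached estimate exceeds $\hat d(u)$). Light vertices (with $|\mathcal{FN}(u)|\lesssim n^{2/3}/\eps$) propagate at every decrement and incur no error; heavy vertices propagate only every $n^{1/3}$ decrements and may contribute error, but since their forward neighborhoods are large and two vertices on a shortest path with overlapping forward neighborhoods must have nearby estimates, any shortest path contains at most $O(\eps n^{1/3})$ heavy vertices with pairwise disjoint forward neighborhoods. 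This bounds the total additive error by $\eps n^{2/3}$, independent of the path length. Your scheme lacks any such mechanism to limit how many edges along a path contribute error, which is why the error accumulates.
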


To obtain this result, we take inspiration from a simple property of \emph{undirected} graphs: \emph{Any two vertices at distance at least $3$ have disjoint neighborhoods}. This observation is crucial in several spanner/hopset constructions as well as other graph algorithms (for example \cite{awerbuch1985complexity, elkin20041, bodwin2016better, elkin2016hopsets, henzinger2016dynamic}), as well as partially dynamic SSSP on undirected graphs \cite{bernstein2016deterministic, bernstein2017deterministic}. In~\cite{bernstein2016deterministic}, Bernstein and Chechik exploit this property for partially dynamic undirected SSSP in the following way. The property implies that for any vertex $v$ on a given shortest path from $s$ to some $t$, the neighborhood of $v$ is disjoint from almost all of the other vertices on this shortest path. Thus there cannot be too many high-degree vertices on any given shortest path, and therefore high-degree vertices are allowed to induce large additive error which can be exploited to increase the efficiency of the algorithm.

Whilst we would like to argue along the same lines, this property is unfortunately not given in \emph{directed} graphs: {there could be two vertices $u$ and $v$ at distance $3$, and a third vertex $z$ that only has in-coming edges from $u$ and $v$. Clearly, $u$ and $v$ can now still be at distance $3$ whilst their out-neighborhoods overlap}. We overcome this issue by introducing \emph{forward neighborhoods} $\mathcal{FN}(u)$ that only include vertices from the out-neighborhood $\mathcal{N}^{out}(u)$ that are estimated to be further away from the source vertex $s$ than $u$. Now, suppose there are two vertices $u$ and $v$ both appear on some shortest path from $s$ to some $t$ and whose forward neighborhoods overlap. Let $w$ be a vertex in $\mathcal{FN}(u) \cap \mathcal{FN}(v)$. Since $w$ has a larger distance estimate than $u$ and the edge $(u,w)$ is in the graph, the distance estimates of $u$ and $w$ must be close, assuming that each distance estimate does not incur much error.
Similarly, the distance estimates of $v$ and $w$ must be close. But then the distance estimates of $u$ and $v$ must also be close. Therefore, the forward neighborhood of each vertex on a long shortest path must only overlap with the forward neighborhoods of few other vertices on the path. In summary, our extension of the property to directed graphs is that if the distance estimates of $u$ and $v$ differ by a lot, then $u$ and $v$ have disjoint forward neighborhoods.

\subsection{The data structure}

In order to illustrate our approach, we present a data structure that only maintains approximate distances for vertices $u$ that are at distance $d(s,u) > n^{2/3}$. This already improves the state of the art since  we can maintain the exact distance $d(s,u)$ if $d(s,u)\leq n^{2/3}$ simply by using a classic ES-tree to depth $n^{2/3}$ which runs in time $O(mn^{2/3})$. 

To understand the motivation behind our main idea, let us first consider a slightly modified version of the classic ES-trees that achieves the same running time: We maintain for each vertex $u\in V$ an array $A_u$ with $n$ elements where $A_u[i]$ is the set of all vertices $v \in \mathcal{N}^{out}(u)$ with $d(s,v)=i$. Then, when $d(s,u)$ decreases, the set of vertices in $\mathcal{N}^{out}(u)$ whose distance from $s$ 
decreases is exactly the set of vertices stored in $A[d^{NEW}(s,u) + 2, n]$ which we call the \emph{forward neighborhood} $\mathcal{FN}(u)$ of $u$. (Recall that $A[i,j]$ is  the subarray of $A$ from index $i$ to index $j$, inclusive.) Thus, we only need to scan edges with tail $u$ and head $v \in \mathcal{FN}(u)$,
however, we also need to update $A_u$ whenever an in-neighbor of $u$ decreases its distance estimate.

For our data structure which we call a ``lazy'' ES-tree, we relax several constraints and use a lazy update rule. Instead of maintaining the exact value of $d(s,v)$ for all $v \in \mathcal{N}^{out}(u)$, we only maintain an \emph{approximate} distance estimate $\hat{d}(v)$. Whilst we still maintain an array for each vertex $u \in V$, we now only update the position of $v$ only after $\hat{d}(v)$ has decreased by at least $n^{1/3}$ or if $(u,v)$ was scanned by $u$. To emphasize that this array is only updated occasionally, instead of using the notation $A_u$, we use the notation $\texttt{Cache}_u$. Again, we define $\texttt{Cache}_u[\hat{d}(u) + 2, n]$ to be the \emph{forward neighborhood} of $u$ denoted $\mathcal{FN}(u) \subseteq \mathcal{N}^{out}(u)$. Further, if $\mathcal{FN}(u)$ is small (say of size $O(n^{2/3})$), we say $u$ is \emph{light}. Otherwise, we say that $u$ is \emph{heavy}.

Now, we distinguish two scenarios for our update rule: if $u$ is \emph{light}, then we can afford to update the distance estimates of the vertices in $\mathcal{FN}(u)$ after every decrease of $\hat{d}(u)$. However, if $u$ is \emph{heavy}, then we only update the vertices in $\mathcal{FN}(u)$ after the distance estimate $\hat{d}(u)$ has been decreased by at least $n^{1/3}$ since the last scan of $\mathcal{FN}(u)$.

Additionally, for each edge $(u,v)$, every time $\hat{d}(v)$ decreases by at least $n^{1/3}$, we update $v$'s position in $\texttt{Cache}_u$.

Finally, we note that $|\mathcal{FN}(u)|$ changes over time and so we need to define the rules for when a vertex changes from light to heavy and vice versa more precisely. Initially, the graph is empty and we define every vertex to be light. Once the size of $\mathcal{FN}(u)$ is increased to $\gamma = 6n^{2/3}/\eps$, we set $u$ to be heavy. On the other hand, when $|\mathcal{FN}(u)|$ decreases to $\gamma/2$, we set $u$ to be light. Whenever $u$ becomes light, we immediately scan all $v\in \mathcal{FN}(u)$ and decrease each $\hat{d}(v)$ accordingly. This completes the description of our algorithm.

\subsection{Running time analysis}

Let us now analyze the running time of the lazy ES-tree. For each vertex $u$, every time $\hat{d}(u)$ decreases by $n^{1/3}$, we might scan $u$'s entire in- and out-neighborhoods. Since $\hat{d}(u)$ can only decrease at most $n$ times, the total running time for this part of the algorithm is $O(nm /n^{1/3}) = O(mn^{2/3})$. 

For every light vertex $u$, we scan $\mathcal{FN}(u)$ every time $\hat{d}(u)$ decreases. Since $\hat{d}(u)$ can only decrease at most $n$ times and since $u$ is light, the total running time for all vertices spent for this part of the algorithm is $O(\sum_{v \in V} n\gamma )=O(n^{2+2/3}/\eps)$.

Whenever a vertex $u$ changes from heavy to light, we scan $\mathcal{FN}(u)$. If $u$ only changes from heavy to light once per value of $\hat{d}(u)$, then the running time is $O(n^{2+2/3}/\eps)$ by the same argument as the previous paragraph. So, we only consider the times in which $u$ toggles between being light and heavy whilst having the same value of $\hat{d}(u)$. Since the position of vertices in $\texttt{Cache}_u$ can only decrease, the only way for $u$ to become heavy while keeping the same value of $\hat{d}(u)$ is if an edge is inserted. Since $\gamma/2$ edges must be inserted before $u$ becomes heavy since it last became light, there were $\gamma/2$ edge insertions with tail $u$. Since each inserted edge is only added to a single $\mathcal{FN}(u)$ (namely to the forward neighborhood of its tail), we can amortize the cost of scanning the $\gamma/2$ vertices in $\mathcal{FN}(u)$ over the $\gamma/2$ insertions.

Combining everything, and since the classic ES-tree to depth $n^{2/3}$ takes at most $O(mn^{2/3})$ update time when run to depth $n^{2/3}$, we establish the desired running time.

\subsubsection{Analysis of correctness}
Let us now argue that our distance estimates are maintained with multiplicative error $(1+\eps)$. The idea of the argument can be roughly summarized by the following points: 
\begin{enumerate}
    \item the light vertices do not contribute any error,
    \item we can bound the error contributed by pairs of heavy vertices whose forward neighborhoods overlap, and
    \item the number of heavy vertices on any shortest path with pairwise disjoint forward neighborhoods is small.
\end{enumerate}
We point out that while the main idea of allowing large error in heavy parts of the graphs is similar to \cite{bernstein2016deterministic}, we rely on an entirely new method to prove that this incurs only small total error. We start our proof by proving the following useful invariant. 
\begin{invariant}\label{inv:warm}
After every edge update, if $v\in \mathcal{FN}(u)$ then $|\hat{d}(v) - \hat{d}(u)| \leq n^{1/3}$. 
\end{invariant}
\begin{proof}
First suppose that $\hat{d}(u) \leq \hat{d}(v)$. Since $\hat{d}(u)$ and $\hat{d}(v)$ can only decrease, we wish to show that $\hat{d}(u)$ cannot decrease by too much without $\hat{d}(v)$ also decreasing. This is true simply because every time $\hat{d}(u)$ decreases by at least $n^{1/3}$, $\hat{d}(v)$ is set to at most $\hat{d}(u)+1$. 

Now suppose that $\hat{d}(u) > \hat{d}(v)$. Since $\hat{d}(u)$ and $\hat{d}(v)$ can only decrease, we wish to show that $\hat{d}(v)$ cannot decrease by too much while remaining in $\mathcal{FN}(u)$. This is true simply because every time $\hat{d}(v)$ decreases by at least $n^{1/3}$, we update $v$'s position in $\texttt{Cache}_u$. If $\hat{d}(v)< \hat{d}(u)+2$ and $v$'s position in $\texttt{Cache}_u$ is updated, then $v$ leaves $\mathcal{FN}(u)$.
\end{proof}

Consider a shortest path $\pi_{s,t}$ for any $t \in V$, at any stage of the incremental graph $G$. Let $t_0=s$. Then, for all $i$, let $s_{i+1}$ be the first heavy vertex after $t_{i}$ on $\pi_{s,t}$ and let $t_{i+1}$ be the last vertex on $\pi_{s,t}$ whose forward neighborhood intersects with the forward neighborhood of $s_{i+1}$ (possibly $t_{i+1} = s_{i+1}$). Thus, we get pairs $(s_1, t_1), (s_2, t_2), \dots, (s_k, t_k)$. Additionally, let $s_{k+1} = t$. Since the forward neighborhoods of all $s_i$'s are disjoint and of size at least $\gamma/2$ (recall that $s_i$ is heavy), we have that there are at most $k\leq 2n/\gamma$ pairs $(s_i, t_i)$.

For each $i$, let $v_i$ be some vertex in $\mathcal{FN}(s_i)\cap \mathcal{FN}(t_i)$. Note that $v_i$ exists by definition of $t_i$. By Invariant~\ref{inv:warm}, $|\hat{d}(s_i) - \hat{d}(v_i)| \leq n^{1/3}$ and $|\hat{d}(t_i) - \hat{d}(v_i)| \leq n^{1/3}$. Thus, $\hat{d}(t_i) - \hat{d}(s_i) \leq 2n^{1/3}$.

Let $t'_i$ be the vertex on $\pi_{s,t}$ succeeding $t_i$ (except $t'_0 = s$). If $t'_i\in \mathcal{FN}(t_i)$ then by Invariant~\ref{inv:warm}, $\hat{d}(t'_i) - \hat{d}(t_i) \leq n^{1/3}$. Otherwise, $t'_i\not\in \mathcal{FN}(t_i)$ so $\hat{d}(t'_i)\leq \hat{d}(t_i)+1$. So regardless, we have $\hat{d}(t'_i) - \hat{d}(t_i) \leq n^{1/3}$ and therefore, since $\hat{d}(t_i) - \hat{d}(s_i) \leq 2n^{1/3}$, we have $\hat{d}(t'_i) - \hat{d}(s_i) \leq 3 n^{1/3}$.

We will show that if $u$ is a light vertex and $(u,v)$ is an edge, then $\hat{d}(v)\leq \hat{d}(u)+1$. Consider the last of the following events that occurred: a) edge $(u,v)$ was inserted, b) $\hat{d}(u)$ was decremented, or c) $\hat{d}(u)$ became light. In case a), the algorithm decreases $\hat{d}(v)$ to be at most $\hat{d}(u)+1$. In cases b) and c), the algorithm updates the distance estimate of all vertices in $\mathcal{FN}(u)$, so if $\hat{d}(v)>\hat{d}(u)+1$ then $\hat{d}(v)$ is decreased to $\hat{d}(u)+1$. Thus we have shown that $\hat{d}(s_{i+1})-  \hat{d}(t'_i) = d(t'_i, s_{i+1})$. 

Putting everything together, $\pi_{s,t}$ can be partitioned into (possibly empty) path segments $\pi_{s,t}[t'_i, s_{i+1}]$ and $\pi_{s,t}[s_{i+1}, t'_{i+1}]$. Observe that by definition for each path segment $\pi_{s,t}[t'_i, s_{i+1}]$, the vertices of all edge tails on that segment are light. Thus, by preceding arguments, we can now bound $\hat{d}(t)$ by
\[
    \hat{d}(t) \leq \sum_{i=0}^k \hat{d}(s_{i+1}) - \hat{d}(t'_i) + \sum_{i=0}^{k-1} \hat{d}(t'_{i+1}) - \hat{d}(s_{i+1}) < \sum_{i=0}^k d(t'_i, s_{i+1}) + 3k n^{1/3} \leq d(s,t) + n^{2/3}\eps 
\]
The last inequality comes from our bound on $k$ and the definition of $\gamma$. Thus, if $d(s,t)>n^{2/3}$ then $\hat{d}(t) \leq (1+\eps)d(s,t)$. Otherwise, $d(s,t)\leq n^{2/3}$ so the classic ES-tree up to depth $n^{2/3}$ finds the exact value of $d(s,t)$.

\section{An $\tilde{O}(n^2 \log W)$ Update Time Algorithm}

In this section, we describe how to improve the construction above to derive an $\tilde{O}(n^2 \log W/\eps^{2.5})$ algorithm. We first prove the theorem below which gives a $\tilde{O}(n^2/\eps)$ bound for unweighted graphs and later we note that the the data structure can handle weighted graphs using standard edge rounding techniques.

\begin{theorem}[Unweighted version of Theorem \ref{thm:mainResultUpperBound}]
There is a deterministic algorithm that given an unweighted directed graph $G=(V,E)$ subject to edge insertions, a vertex $s\in V$, and $\eps>0$, maintains for every vertex $v$ an estimate $\hat{d}(v)$ such that after every update $d(s,v)\leq \hat{d}(v)\leq (1+\eps)d(s,v)$, and runs in total time $\tilde{O}(n^2/\eps)$. A query for the approximate shortest path from $s$ to any vertex $v$ can be answered in time linear in the number of edges on the path.
\end{theorem}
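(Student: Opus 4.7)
The plan is to refine the warm-up construction by replacing the binary heavy/light dichotomy with a logarithmic number of ``laziness levels'' and rebalancing the thresholds so that the bottleneck light-scan cost drops from $\tilde O(n^{8/3}/\eps)$ to $\tilde O(n^2/\eps)$. First I would introduce $L = \Theta(\log n)$ levels $0,1,\dots,L-1$ and classify each vertex $u$ at any moment by the dyadic range containing $|\mathcal{FN}(u)|$: $u$ is \emph{at level} $\ell$ if $|\mathcal{FN}(u)| \in [2^\ell, 2^{\ell+1})$. For level $\ell$ pick a distance threshold $\tau_\ell \approx \eps\cdot 2^\ell / L$. The data structure generalizes the warm-up: the cache $\texttt{Cache}_u$ and forward neighborhood $\mathcal{FN}(u)$ are kept exactly as before, but the two separate update rules (``scan $\mathcal{FN}(u)$ whenever $\hat d(u)$ decreases'' for light $u$ versus ``scan whenever $\hat d(u)$ drops by $n^{1/3}$'' for heavy $u$) are unified into the single rule ``scan $\mathcal{FN}(u)$ whenever $\hat d(u)$ has decreased by $\tau_{\ell(u)}$ since the last such scan'', and the in-neighbor bookkeeping uses the same $\tau_{\ell(u)}$ granularity in place of the fixed $n^{1/3}$.

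For correctness I would establish the refined invariant $|\hat d(v) - \hat d(u)| \le \tau_{\ell(u)}$ whenever $v \in \mathcal{FN}(u)$, via the same two-case argument as in Invariant~\ref{inv:warm}. The error analysis then follows the warm-up in spirit but with a finer decomposition: along any shortest path $\pi_{s,t}$ and for each level $\ell$, identify the level-$\ell$ ``anchors'' on $\pi_{s,t}$ in the same way the warm-up identified heavy anchors. Since level-$\ell$ anchors have pairwise disjoint forward neighborhoods each of size at least $2^\ell$, there are at most $n/2^\ell$ such anchors on $\pi_{s,t}$; each contributes stretch $O(\tau_\ell)$, so level $\ell$ contributes $O(\tau_\ell\cdot n/2^\ell) = O(\eps\, d(s,t)/L)$ and summing over the $L$ levels yields a $(1+\eps)$-approximation. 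Short distances ($d(s,t) \le D_0$ for an appropriate $D_0$) are handled by a shallow exact ES-tree as in the warm-up.

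For the running time I would budget the scans level by level. Each vertex's estimate decreases at most $n$ times overall, so the number of $\tau_\ell$-drops $u$ experiences while at level $\ell$ is $O(n/\tau_\ell)$, each costing $O(2^\ell)$ work to traverse $\mathcal{FN}(u)$; per vertex per level this is $O(n\cdot 2^\ell/\tau_\ell) = O(nL/\eps)$, and summing over the $n$ vertices and the $L$ levels yields $\tilde O(n^2/\eps)$. Cache re-indexing triggered by in-neighbor $\tau_\ell$-drops is bounded by the same accounting, and transitions between adjacent levels of a vertex $u$ are charged either to a $\tau_\ell$-drop of $\hat d(u)$ or to $\Omega(2^\ell)$ edge insertions incident to $u$, exactly mirroring the warm-up's heavy-to-light amortization. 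The main obstacle I anticipate is the path-decomposition step: with $L$ interleaved families of anchors one must ensure that the sub-segment between two consecutive level-$\ell$ anchors does not silently accumulate extra error from anchors at lower levels that sit inside it. I expect to resolve this with a hierarchical decomposition of $\pi_{s,t}$, splitting first at the highest-level anchors and recursively handling each sub-segment with lower-level anchors, so that each edge of $\pi_{s,t}$ is charged at most $O(1)$ times per level and the per-level error bounds compose cleanly to the claimed $(1+\eps)$ guarantee.
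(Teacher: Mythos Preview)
Your proposal has the right high-level shape—logarithmic heaviness levels plus a hierarchical path decomposition—but the correctness argument contains a genuine gap that cannot be patched without an additional idea.

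In your error analysis you write that level $\ell$ contributes $O(\tau_\ell\cdot n/2^\ell) = O(\eps\, d(s,t)/L)$. But $\tau_\ell\cdot n/2^\ell = (\eps\,2^\ell/L)\cdot(n/2^\ell)=\eps n/L$, independent of $d(s,t)$. Summing over the $L$ levels yields additive error $\Theta(\eps n)$, not $\Theta(\eps\, d(s,t))$, so the multiplicative $(1+\eps)$ guarantee holds only when $d(s,t)=\Omega(n)$. The shallow exact ES-tree cannot rescue this: to absorb additive error $\eps n$ you would need $D_0=\Omega(n)$, at which point the ES-tree itself costs $\Omega(mn)$. The disjoint-forward-neighborhood counting argument inherently bounds the number of level-$\ell$ anchors by $n/2^\ell$ (the forward neighborhoods live in $V$, of size $n$); there is no way to replace that $n$ by $d(s,t)$ using path length alone.

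The paper fixes exactly this by maintaining $\log n$ separate lazy ES-trees $\mathcal{E}_\tau$, one for each dyadic distance range $[\tau,2\tau)$. Inside $\mathcal{E}_\tau$ the heaviness threshold is scaled by $n/\tau$: a vertex has heaviness $i$ only when its forward neighborhood has size roughly $(2^i)\cdot \tfrac{n\log n}{\eps\tau}$, so the number of disjoint level-$i$ anchors on any path is $O\!\big(\tfrac{\eps\tau}{2^i\log n}\big)$, each contributing $O(2^i)$ error; this gives additive error $O(\eps\tau)$, which \emph{is} $O(\eps\, d(s,t))$ in that range. The running time stays $\tilde O(n^2/\eps)$ because $\hat d_\tau$ is bounded by $O(\tau)$, so the increased forward-neighborhood sizes are balanced by the decreased number of decrements. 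Without this $\tau$-dependent rescaling (equivalently, without separate structures per distance scale) your parameters cannot simultaneously give $\tilde O(n^2/\eps)$ time and $(1+\eps)$ stretch for all distances. A secondary issue is that your charging of level transitions (``to a $\tau_\ell$-drop of $\hat d(u)$ or to $\Omega(2^\ell)$ insertions'') is too coarse: with many levels, $|\mathcal{FN}(u)|$ can oscillate across a threshold purely from out-neighbors' estimates dropping, and the paper needs a further idea—preemptively enlarging $\mathcal{FN}(u)$ via a $\texttt{CacheIndex}$ offset together with hysteresis between the ``increase'' and ``decrease'' heaviness constants—to keep the number of transitions under control.
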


We point out that our data structure for unweighted graphs is not only near-optimal for $m \sim n^2$, but also near-optimal in terms of $\eps$, since a better polynomial dependency on $\eps$ would imply faster exact algorithms with $\eps \sim 1/n$. 

\subsection{Algorithm overview}
There are two main differences between our $\tilde{O}(n^2/\eps)$ time algorithm and our warm-up\\ $O(mn^{2/3}/\eps)$ time algorithm from the previous section:
\begin{enumerate}

\item Recall that the warm-up algorithm consisted of 1) a classic ES-tree of bounded depth to handle small distances, and 2) a ``lazy'' ES-tree (of depth $n$) to handle large distances. For our $\tilde{O}(n^2/\eps)$ time algorithm we will have $\log n$ ES-trees of varying degrees of laziness and to varying depths where each ES-tree is suited to handle a particular range of distances. In particular, for each $i$ from 0 to $\log n-1$, we have one lazy ES-tree that handles distances between $2^i$ and $2^{i+1}$. The ES-trees that handle larger distances can tolerate more additive error, and are thus lazier. 

\item Recall that in the warm-up algorithm, each vertex $v$ was of one of two types: light or heavy, depending the size of the forward neighborhood $\mathcal{FN}(v)$. For our $\tilde{O}(n^2/\eps)$ time algorithm, each vertex will be in one of $\Theta(\log n)$ heaviness levels. Roughly speaking, a vertex has heaviness $i$ in the lazy ES-tree up to depth $\tau$ if $|\mathcal{FN}(u)|\approx 2^i \frac{n}{ \tau}$. 
\end{enumerate}

Consider one of our $\log n$ lazy ES-trees. Let $\tau$ be its depth and let $\hat{d}_{\tau}(v)$ be its distance estimate for each vertex $v$. A central challenge caused by introducing $\log n$ heaviness levels for each lazy ES-tree is handling the event that a vertex changes heaviness level. We describe why unlike in the warm-up algorithm, handling changes in heaviness levels is not straightforward and requires careful treatment. In the warm-up algorithm, whenever a vertex $u$ changes from heavy to light, we scan all $v\in \mathcal{FN}(u)$ and decrease each $\hat{d}(v)$ accordingly. Then, in the analysis of the warm-up algorithm, we argued that if $u$ only changes from heavy to light once per value of $\hat{d}(u)$, we get the desired running time. Now that we have many heaviness levels and we are aiming for a running time of $\tilde{O}(n^2/\eps)$, we can no longer allow each vertex to change heaviness level every time we decrement $\hat{d}_{\tau}(u)$. In particular, suppose we are analyzing a lazy ES-tree up to depth $D$. Suppose for each vertex $u$, every time we decrement $\hat{d}_{\tau}(u)$, we change $u$'s heaviness level and scan $\mathcal{FN}(u)$ as a result. Then since $|\mathcal{FN}(u)|$ could be $\Omega(n)$, the final running time would be $\Omega(n^2D)$, which is too large. Thus, unlike in the warm-up algorithm, we require that the heaviness of each vertex does not change too often. 

Without further modification of the algorithm, the heaviness level of a vertex $u$ can change a number of times in succession. Suppose each index of  $\texttt{Cache}_u$ from index $\hat{d}_{\tau}(u)-\log n+2$ to index $\hat{d}_{\tau}(u)+1$ contains many vertices such that each of the next $\log n$ times we decrement $\hat{d}_{\tau}(u)$, $\mathcal{FN}(u)$ increases by enough that $u$ increases heaviness level upon each decrement of $\hat{d}_{\tau}(u)$. We would like to forbid $u$ from changing heaviness levels so frequently. To address this issue, we \emph{change the definition} of the forward neighborhood $\mathcal{FN}(u)$. 

In particular, if $\texttt{Cache}_u$ contains many vertices in the set of indices that closely precede $\texttt{Cache}_u[\hat{d}_{\tau}(u)]$, we \emph{preemptively} add these vertices to $\mathcal{FN}(u)$. In the above example, instead of increasing the heaviness of $u$ for every single decrement of $\hat{d}_{\tau}(u)$, we would preemptively increase the heaviness of $u$ by a lot to avoid increasing its heaviness again in the near future. Roughly speaking, vertex $u$ has heaviness $h(u)$ if $h(u)$ is the maximum value such that there are $\sim\frac{2^{h(u)} n}{\tau}$ vertices in $\texttt{Cache}_u[\hat{d}_{\tau}(u)-2^{h(u)},\tau]$. (Note that this definition of heaviness is an oversimplification for the sake of clarity.)

Like in the warm-up algorithm, the heaviness level of a vertex $u$ determines how often we scan $\mathcal{FN}(u)$. If a vertex $u$ has heaviness $h(u)$, this means that we scan $\mathcal{FN}(u)$ whenever the value of $\hat{d}_{\tau}(u)$ becomes a multiple of $2^{h(u)}$. 

In summary, when we decrement $\hat{d}_{\tau}(u)$, the algorithm does roughly the following:
\begin{itemize}
    \item If the value of $\hat{d}_{\tau}(u)$ is a multiple of $2^{h(u)}$, scan all $v\in \mathcal{FN}(u)$ and decrement $\hat{d}_{\tau}(v)$ if necessary.
    \item If the value of $\hat{d}_{\tau}(u)$ is a multiple of $2^{h(u)}$, increase the heaviness of $u$ if necessary.
    \item Regardless of the value of $\hat{d}_{\tau}(u)$, check if $u$ has left the forward neighborhood of any other vertex $w$, and if so, decrease the heaviness of $w$ if necessary.
\end{itemize}

\subsection{The Data Structure}

For each number $\tau$ between 1 and $n$ such that $\tau$ is a power of 2, we maintain a ``lazy ES-tree'' data structure $\mathcal{E}_\tau$. The guarantee of the data structure $\mathcal{E}_\tau$ is that for each vertex $v \in V$ with $d(s,v)\in [\tau, 2\tau)$, the estimate $\hat{d}_{\tau}(v)$ maintained by $\mathcal{E}_\tau$ satisfies  $d(s,v)\leq \hat{d}_{\tau}(v) \leq (1+\eps)d(s, v)$. Let $\tau_{max}=2\tau(1+\eps)$. Since $\mathcal{E}_{\tau}$ does not need to provide a $(1+\eps)$-approximation for distances $d(s,v)> 2\tau$, the largest distance estimate maintained by $\mathcal{E}_{\tau}$ is at most $\tau_{max}$. We use the distance estimate $\tau_{max}+1$ for all vertices that do not have distance estimate at most $\tau_{max}$. For all $u\in V$, the final distance estimate $\hat{d}(u)$ is the minimum distance estimate $\hat{d}_{\tau}(u)$ over all data structures $\mathcal{E}_\tau$, treating each $\tau_{max}+1$ as $\infty$.

\paragraph{Definitions.}

We begin by making precise the definitions and notation from the algorithm overview section. For each data structure $\mathcal{E}_\tau$ and for each vertex $u\in V$ we define the following:
\begin{itemize}
\item $\hat{d}_{\tau}(u)$ is the distance estimate maintained by the data structure $\mathcal{E}_\tau$.
    \item $\texttt{Cache}_u$ is an array of 
$\tau_{max}$ lists of vertices whose purpose is to store (possibly outdated) information about $\hat{d}_{\tau}(v)$ for all $v\in \mathcal{N}_{out}(u)$. Every time we update the position of a vertex $v\in \mathcal{N}_{out}(u)$ in $\texttt{Cache}_u$, we move $v$ to $\texttt{Cache}_u[\hat{d}_{\tau}(v)]$.
    \item $h(u)$ is the \emph{heaviness} of $u$. Intuitively, if $u$ has large heaviness, this means that $u$ has a large \emph{forward neighborhood} (defined later) and that we scan $u$'s forward neighborhood infrequently.
    \item $\texttt{CacheIndex}(u)=\lfloor \hat{d}_{\tau}(u)-1 \rfloor_{2^{h(u)}}$. (Recall that $\lfloor x \rfloor_y$ is the largest multiple of $y$ that is at most $x$.) The purpose of $\texttt{CacheIndex}(u)$ is to define the forward neighborhood of $u$, which we do next.
    \item The \emph{forward neighborhood} of $u$, denoted $\mathcal{FN}(u)$ is defined as the the set of vertices in $\texttt{Cache}_u[\texttt{CacheIndex}(u),\tau_{max}]$. Note that $\mathcal{FN}(u)$ is defined differently from the warm-up algorithm due to reasons described in the algorithm overview section.
    \item $\texttt{Expire}_u$ is an array of $\tau_{max}$ lists of vertices whose purpose is to ensure that $u$ leaves $\mathcal{FN}(v)$ once $\hat{d}_{\tau}(u)$ becomes less than $\texttt{CacheIndex}(v)$. In particular, $v\in \texttt{Expire}_u[i]$ if $u\in \mathcal{FN}(v)$ and $\texttt{CacheIndex}(v)=i$.
    \item We also define $\texttt{CacheIndex}$ with a second parameter, which will be useful for calculating the heaviness of vertices. Let $\texttt{CacheIndex}(v,2^i)=\lfloor \hat{d}_{\tau}(v)-1 \rfloor_{2^i}$. Note that $\texttt{CacheIndex}(u,2^{h(u)})$ is the same as $\texttt{CacheIndex}(u)$.
\end{itemize}

\paragraph{Initialization.}
We assume without loss of generality that the initial graph is the empty graph. To initialize each $\mathcal{E}_\tau$, we initialize $\hat{d}_{\tau}(s)$ to 0, and for each $u\in V\setminus \{s\}$, we initialize $\hat{d}_{\tau}(u)$ to $\tau_{max} + 1$. Additionally, for each $u\in V\setminus \{s\}$ we initialize the heaviness $h(u)$ to $0$, and we initialize the arrays $\texttt{Cache}_u$ and $\texttt{Expire}_u$ by setting each of the $\tau_{max} + 1$ fields in each array to an empty list.


\paragraph{The edge update algorithm.}

The pseudocode for the edge update algorithm is given in Algorithm~\ref{alg:insertEdge1}. We also outline the algorithm in words.

The procedure $\textsc{InsertEdge}(u,v)$ begins by updating $\texttt{Cache}_u$ and $\texttt{Expire}_v$ to reflect the new edge. Then, it calls $\textsc{IncreaseHeaviness}(u)$ to check whether the heaviness of $u$ needs to increase due to the newly inserted edge. Then, it initializes a set $H$ storing edges. 

Initially $H$ contains only the edge $(u,v)$. The purpose of $H$ is to store edges $(x,y)$ after the distance estimate $\hat{d}_{\tau}(x)$ has changed. We then extract one edge at a time and check whether the decrease in $x$'s distance estimate also translates to a decrease of $y$'s distance estimate by checking whether $\hat{d}_{\tau}(y) > \hat{d}_{\tau}(x)+1$. If so, then $\hat{d}_{\tau}(y)$ can be decremented and we keep the edge in $H$. Otherwise, we learned that $(x,y)$ cannot be used to decrease $\hat{d}_{\tau}(y)$ and we remove $(x,y)$ from $H$. We point out that in our implementation a decrease of $\Delta$ is handled in the form of $\Delta$ decrements where the edge is extracted from $H$ $\Delta + 1$ times until it is removed from $H$.

\begin{algorithm}
{
\fontsize{10}{10}\selectfont
\caption{Algorithm for handling edge updates.}
\label{alg:insertEdge1}
\SetKwProg{procedure}{Procedure}{}{}
\procedure{$\textsc{InsertEdge}(u, v)$}{
    Add $v$ to $\texttt{Cache}_u[\hat{d}_{\tau}(v)]$\;\label{line:insertcache}
    \If{$\hat{d}_{\tau}(v) \geq \normalfont{\texttt{CacheIndex}}(u)$}{
        Add $u$ to $\texttt{Expire}_v[\texttt{CacheIndex}(u)]$
    }
    $\textsc{IncreaseHeaviness}(u)$\;
    
    \If{$\hat{d}_{\tau}(v)>\hat{d}_{\tau}(u)+1$ \label{line:ifThenhLoop}}{
        Let $H$ be a set storing edges $(x,y)$ \;
        $H.\textsc{Insert}(u,v)$\; \label{line:Hinsert1}
        \While{$H \neq \emptyset$}{\label{line:while}
            Let tuple $(x,y)$ be any tuple in $H$\;
            \If{$\hat{d}_{\tau}(y) > \hat{d}_{\tau}(x)+1$ \label{line:ifThenDecrement}}{
                $\textsc{Decrement}(x,y)$\;\label{line:calldecrement}
            }\Else{
                $H.\textsc{Remove}(x,y)$
            }
        }
    }
}

\label{alg:decrease1}
\SetKwProg{procedure}{Procedure}{}{}
\procedure{$\textsc{Decrement}(u, v)$}{
    $\hat{d}_{\tau}(v) = \hat{d}_{\tau}(v) - 1$\;

    \If{$\hat{d}_{\tau}(v)$ \normalfont{ is a multiple of } $2^{h(v)}$}
    {
        $\textsc{IncreaseHeaviness}(v)$\;
        \ForEach{$w \in \mathcal{FN}(v)$\label{line:fndecrement}}{
            Move $w$ to $\texttt{Cache}_v[\hat{d}_{\tau}(w)]$\;
            Move $v$ to $\texttt{Expire}_w[\texttt{CacheIndex}(v)]$\;
            $H.\textsc{Insert}(v,w)$\label{line:Hinsert2}
        }
    }
    \ForEach{$w\in \normalfont{\texttt{Expire}}_v[\hat{d}_{\tau}(v)+1]$\label{line:exp}}{
        Move $v$ to $\texttt{Cache}_w[\hat{d}_{\tau}(v)]$\;\label{line:begin}
        Remove $w$ from $\texttt{Expire}_v$\;
        $\textsc{DecreaseHeaviness}(w)$\label{line:end}
    }
}

\label{alg:incheav1}
\SetKwProg{procedure}{Procedure}{}{}
\procedure{$\textsc{IncreaseHeaviness}(u)$}{
    $i' \gets \argmax_{i \in \mathbb{N}}\{| \texttt{Cache}_u[\texttt{CacheIndex}(u,2^i), \tau_{max}]| \geq (2^i-1) \frac{12 n \log n }{\eps\tau}\} $\label{line:calc1}\;

    \If{$i' > h(u)$\label{line:ifloop1}}{
        \ForEach{$v\in \normalfont{\texttt{Cache}}_u[\texttt{CacheIndex}(u,2^{i'}), \tau_{max}]$\label{line:loop1}}{
            Move $v$ to $\texttt{Cache}_u[\hat{d}_{\tau}(v)]$\;
            Remove $u$ from $\texttt{Expire}_v$\;
        }

        $h(u) \gets \argmax_{i \leq i'}\{| \texttt{Cache}_u[\texttt{CacheIndex}(u,2^i), \tau_{max}]| \geq  (2^i-1) \frac{6 n \log n }{\eps\tau}\}$\label{line:calc2} \;
        
        \ForEach{$v\in \mathcal{FN}(u)$\label{line:loop2}}{
            Add $u$ to $\texttt{Expire}_v[\texttt{CacheIndex}(u)]$\;
        }
    }
}

\SetKwProg{procedure}{Procedure}{}{}
\procedure{$\textsc{DecreaseHeaviness}(u)$}{
    $i' \gets \argmax_{i \in \mathbb{N}}\{| \texttt{Cache}_u[\texttt{CacheIndex}(u,2^i), \tau_{max}]| \geq (2^i-1) \frac{6 n \log n }{\eps\tau} \}$\;\label{line:i'dec}

    \If{$i' < h(u)$\label{line:preloop3}}{
        \ForEach{$v\in \mathcal{FN}(u)$\label{line:loop3}}{
            Move $v$ to $\texttt{Cache}_u[\hat{d}_{\tau}(v)]$\;
            Remove $u$ from $\texttt{Expire}_v$\;
        }

        $h(u)\gets \argmax_{i \in \mathbb{N}}\{|  \texttt{Cache}_u[\texttt{CacheIndex}(u,2^i), \tau_{max}]| \geq  (2^i-1) \frac{6 n \log n }{\eps\tau}\}$\label{line:sethd}\;
 
        \ForEach{$v\in \mathcal{FN}(u)$\label{line:loop4}}{
            Add $u$ to $\texttt{Expire}_v[\texttt{CacheIndex}(u)]$\;
            $H.\textsc{Insert}(u,v)$\label{line:Hdec}
        }
    }
}
}
\end{algorithm}

The procedure $\textsc{Decrement}(u,v)$ begins by decrementing $\hat{d}_{\tau}(v)$. Then, it checks whether $\hat{d}_{\tau}(v)$ is a multiple of $2^{h(v)}$. If so, it calls $\textsc{IncreaseHeaviness}(v)$ to check whether the recent decrements of $\hat{d}_{\tau}(v)$ have caused $\mathcal{FN}(v)$ to increase by enough that the heaviness $h(v)$ has increased. Also, if $\hat{d}_{\tau}(v)$ is a multiple of $2^{h(v)}$, $\texttt{CacheIndex}(v)$ and thus $\mathcal{FN}(v)$ have changed. Thus, we scan each vertex $w \in \mathcal{FN}(v)$ and update the position of $w$ in $\texttt{Cache}_v$. Then, we insert for each such vertex $w \in \mathcal{FN}(v)$ the edge $(v,w)$ into $H$ which has the eventual effect of decreasing $\hat{d}_{\tau}(w)$ to value at most $\hat{d}_{\tau}(v) + 1$. Since we perform these actions every $2^{h(u)}$ decrements of $\hat{d}_{\tau}(v)$, as we show later, we incur roughly $2^{h(u)}$ additive error on each out-going edge of $v$.

Additionally, the procedure $\textsc{Decrement}(u,v)$ checks whether decrementing $\hat{d}_{\tau}(v)$ has caused $v$ to expire from any of the forward neighborhoods that contain $v$. The vertices whose forward neighborhood $v$ needs to leave are stored in $\texttt{Expire}_v[\hat{d}_{\tau}(v)+1]$. For each $w\in \texttt{Expire}_v[\hat{d}_{\tau}(v)+1]$, we update $v$'s position in $\texttt{Cache}_w$ which causes $v$ to leave $\mathcal{FN}(w)$. Then, we call $\textsc{DecreaseHeaviness}(u)$ to check whether removing $v$ from $\mathcal{FN}(w)$ has caused the heaviness of $w$ to decrease. 

The procedures $\textsc{IncreaseHeaviness}(u)$ and $\textsc{DecreaseHeaviness}(u)$ are similar. We first describe $\textsc{DecreaseHeaviness}(u)$. On line~\ref{line:sethd} in $\textsc{DecreaseHeaviness}(u)$, $h(u)$ is set to $\argmax_{i \in \mathbb{N}}\{|  \texttt{Cache}_u[\texttt{CacheIndex}(u,2^i), \tau_{max}]| \geq  (2^i-1) \frac{6 n \log n }{\eps\tau}\}$. We note that $\texttt{Cache}_u$ may contain out-of-date information when $\textsc{DecreaseHeaviness}(u)$ is called, however, we wish to update $h(u)$ based on up-to-date information. Thus, before line~\ref{line:sethd}, we update $\texttt{Cache}_u$. However, we do not have time to update \emph{every} index of $\texttt{Cache}_u$, so instead we only update the relevant indices. To do so, it suffices to first calculate the value $i'$, which is the expression for $h(u)$ but using the out-of-date version of $\texttt{Cache}_u$, and then scan all $v\in \normalfont{\texttt{Cache}}_u[\texttt{CacheIndex}(u,2^{i'}), \tau_{max} ]$, updating the position of each such $v$ in $\texttt{Cache}_u$.

Recall that a smaller value of $h(u)$ means that we scan $\mathcal{FN}(u)$ more often. Thus, after we decrease $h(u)$ in $\textsc{DecreaseHeaviness}(u)$, the vertices $v\in \mathcal{FN}(u)$ might not have been scanned recently enough according to the new value of $h(u)$. Thus, to conclude the procedure $\textsc{DecreaseHeaviness}(u)$, we scan each $v\in \mathcal{FN}(u)$ and add $(u,v)$ to the set $H$ so that $\textsc{Decrement}(u,v)$ is called later.

The main difference between $\textsc{IncreaseHeaviness}(u)$ and $\textsc{DecreaseHeaviness}(u)$ is that the constants in the expressions for calculating $i'$ and $h(u)$ are different from each other, which ensures that $u$ does not change heaviness levels too often. Additionally, the last step of  $\textsc{DecreaseHeaviness}(u)$ where we insert into $H$ is not necessary for $\textsc{IncreaseHeaviness}(u)$.

\subsection{Analysis of correctness}

For each vertex $t$, the algorithm obtains the distance estimate $\hat{d}(t)$ by taking the minimum $\hat{d}_{\tau}(t)$ over all $\tau$ (excluding when $\hat{d}_{\tau}(t)=\tau_{max}$+1). The goal of this section, is to prove that 
\[
d(s,t) \leq \hat{d}(t) \leq (1+\eps)d(s,t)
\]
for $d(s,t) \in [\tau, 2\tau]$. We prove this statement in two steps starting by giving a lower bound on $\hat{d}(t)$. 

\begin{lemma}\label{lem:correct1}
At all times, for all $\tau$, for any $t \in V$, we have $d(s,t) \leq \hat{d}_{\tau}(t)$.
\end{lemma}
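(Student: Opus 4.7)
The plan is to prove the statement by induction on the number of elementary operations (line executions) performed by the algorithm, with the invariant that at every moment during execution, $\hat{d}_{\tau}(v) \geq d(s,v)$ for every $v \in V$. The initialization trivially satisfies this: $\hat{d}_{\tau}(s) = 0 = d(s,s)$, and for every $v \neq s$ the estimate is $\tau_{max}+1$ while $d(s,v) = \infty$ since the initial graph is empty. For the inductive step, I would observe that the only place in Algorithm~\ref{alg:insertEdge1} where any distance estimate is \emph{lowered} is the line $\hat{d}_{\tau}(v) \gets \hat{d}_{\tau}(v) - 1$ inside $\textsc{Decrement}(u,v)$; every other effect of \textsc{InsertEdge}, \textsc{IncreaseHeaviness}, and \textsc{DecreaseHeaviness} merely rearranges the auxiliary arrays $\texttt{Cache}$ and $\texttt{Expire}$ and modifies $h(\cdot)$, none of which touch $\hat{d}_{\tau}$.

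Hence it suffices to check the invariant just before each call to $\textsc{Decrement}(u,v)$. The key observation is that $\textsc{Decrement}(u,v)$ is only executed on line~\ref{line:calldecrement} from within the while-loop of $\textsc{InsertEdge}$, guarded by the test $\hat{d}_{\tau}(v) > \hat{d}_{\tau}(u) + 1$ on line~\ref{line:ifThenDecrement}. By the inductive hypothesis, immediately before the decrement we have $\hat{d}_{\tau}(u) \geq d(s,u)$. Consequently, after the decrement the new value satisfies
\[
\hat{d}_{\tau}(v) \;\geq\; \hat{d}_{\tau}(u) + 1 \;\geq\; d(s,u) + 1,
\]
and provided $(u,v)$ is a genuine edge of the current graph $G$, the triangle-type inequality $d(s,v) \leq d(s,u) + 1$ yields $\hat{d}_{\tau}(v) \geq d(s,v)$, preserving the invariant.

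The remaining step is to verify that whenever an edge $(x,y)$ is pulled out of the set $H$ inside the while-loop, $(x,y)$ really is a present edge of $G$. I would handle this by auditing the three insertion sites into $H$: line~\ref{line:Hinsert1} in \textsc{InsertEdge}, which inserts the freshly added edge $(u,v)$; line~\ref{line:Hinsert2} inside \textsc{Decrement}, which inserts $(v,w)$ for $w \in \mathcal{FN}(v) \subseteq \mathcal{N}^{out}(v)$; and line~\ref{line:Hdec} inside \textsc{DecreaseHeaviness}, which inserts $(u,v)$ for $v \in \mathcal{FN}(u) \subseteq \mathcal{N}^{out}(u)$. In every case the inserted pair is an out-edge of its first coordinate in the current $G$, and since the algorithm never deletes edges (the problem is incremental), the edge is still present when it is later extracted from $H$.

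The main conceptual obstacle is the last point: confirming that the lazy bookkeeping of $\mathcal{FN}(\cdot)$ cannot place a non-edge into $H$. Once the three insertion sites are catalogued and matched against their out-neighbourhood origin, the rest is a clean induction. Taking the minimum over $\tau$ in the final estimate $\hat{d}(t)$ only ever picks a value that majorises $d(s,t)$, so the lemma follows.
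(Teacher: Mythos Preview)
Your proof is correct and follows essentially the same approach as the paper: observe that $\hat{d}_{\tau}$ is only lowered inside $\textsc{Decrement}(u,v)$, which is invoked solely from line~\ref{line:calldecrement} under the guard $\hat{d}_{\tau}(v) > \hat{d}_{\tau}(u) + 1$ with $(u,v)$ an edge, so the estimate stays at least $d(s,v)$ by induction. Your argument is in fact more careful than the paper's, which simply asserts that ``$(u,v)$ is an edge'' without auditing the three insertion sites into $H$; your verification that every pair placed in $H$ originates from an out-neighbourhood (and hence is a present edge in the incremental graph) fills a gap the paper leaves implicit.
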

\begin{proof}
It suffices to show that we only decrement $\hat{d}_{\tau}(v)$ if $v$ has an in-coming edge from a vertex with distance estimate more than 1 below $\hat{d}_{\tau}(v)$.
We only invoke the procedure $\textsc{Decrement}(u,v)$ from line \ref{line:calldecrement}, and we invoke it under the condition that $(u,v)$ is an edge and $\hat{d}_{\tau}(v) > \hat{d}_{\tau}(u) + 1$. Therefore after running $\textsc{Decrement}(u,v)$ we still have $\hat{d}_{\tau}(v) \geq \hat{d}_{\tau}(u) + 1$. 
\end{proof}

Let us next prove a small, but helpful lemma.

\begin{lemma}\label{lem:cachedecreases}
For all vertices $u,v\in V$, the index of $\normalfont{\texttt{Cache}_u}$ containing $v$ can only decrease over time.
\end{lemma}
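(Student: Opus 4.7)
The plan is to establish two facts and combine them. First, that $\hat{d}_{\tau}(v)$ is monotonically non-increasing over time. Second, that every place in Algorithm~\ref{alg:insertEdge1} that modifies $v$'s position in $\texttt{Cache}_u$ places $v$ at the slot indexed by the current value of $\hat{d}_{\tau}(v)$. Together these two facts immediately yield the claim.

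For monotonicity, I would point out that after initialization $\hat{d}_{\tau}(v)$ is only ever modified by the first line of $\textsc{Decrement}$, which decrements it by one; since initialization sets it to $\tau_{max}+1$ (or $0$ if $v=s$), the value can only go down. This is essentially the same observation used in the proof of Lemma~\ref{lem:correct1}.

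For the second fact, I would enumerate the (few) locations where $v$ is inserted into or moved within $\texttt{Cache}_u$: the initial insertion in $\textsc{InsertEdge}$ on line~\ref{line:insertcache}; the ``Move $w$ to $\texttt{Cache}_v[\hat{d}_{\tau}(w)]$'' step inside the $\mathcal{FN}(v)$ loop of $\textsc{Decrement}$; the ``Move $v$ to $\texttt{Cache}_w[\hat{d}_{\tau}(v)]$'' step on line~\ref{line:begin}; and the analogous moves in the scan loops of $\textsc{IncreaseHeaviness}$ and $\textsc{DecreaseHeaviness}$. In every case the target slot is indexed by the \emph{current} distance estimate of the vertex being moved. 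Combined with monotonicity, each such move can only place $v$ at an index that is at most its previous index, proving the lemma.

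I do not anticipate a real obstacle here; the lemma is essentially a syntactic check on the pseudocode together with the trivial monotonicity of $\hat{d}_{\tau}(\cdot)$ in the incremental setting. The only minor care required is to confirm that no other action, such as manipulation of $\texttt{Expire}_\cdot$, implicitly repositions $v$ in $\texttt{Cache}_u$, which a scan of the pseudocode confirms: $\texttt{Expire}$ is always updated in tandem with, but does not itself move entries of, $\texttt{Cache}$.
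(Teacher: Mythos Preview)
Your proposal is correct and follows essentially the same approach as the paper: observe that every insertion or move of $v$ within $\texttt{Cache}_u$ places it at index $\hat{d}_{\tau}(v)$, and combine this with the monotonicity of $\hat{d}_{\tau}(v)$. The paper's proof is terser (it does not enumerate the individual lines), but the argument is identical.
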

\begin{proof}
Whenever we insert $v$ into to $\texttt{Cache}_u$ or move $v$ to a new index in $\texttt{Cache}_u$, $v$ is placed in $\texttt{Cache}_u[\hat{d}_{\tau}(v)]$. Since $\hat{d}_{\tau}(v)$ is monotonically decreasing over time, the lemma follows.
\end{proof}

Now, before giving an upper bound on the stretch of the distance estimate, we prove the following invariant which is analogous to Invariant~\ref{inv:warm} from the warm-up algorithm.

\begin{invariant}\label{inv:correct}
For all $u,v\in V$, after processing each edge update, if $v\in \mathcal{FN}(u)$ then $|\hat{d}_{\tau}(v) - \hat{d}_{\tau}(u)| \leq 2^{h(u)}$.
\end{invariant}
\begin{proof} 
We first note that the invariant is initially satisfied since $\mathcal{FN}(u)$ is initially empty.
First we prove that there is no $v \in \mathcal{FN}(u)$ with $\hat{d}_{\tau}(u) - \hat{d}_{\tau}(v) > 2^{h(u)}$. We first note that if $v\in \texttt{Cache}_u[\hat{d}_{\tau}(v)]$, then this inequality holds simply from the definitions of $\mathcal{FN}$ and $\texttt{CacheIndex}$. Thus, it suffices to show that if an event occurs that could potentially cause the inequality to be violated, then we have $v\in \texttt{Cache}_u[\hat{d}_{\tau}(v)]$.
We point out that the inequality could only be violated due to three events:
\begin{enumerate}
    \item \uline{$v \in \mathcal{FN}(u)$ and $\hat{d}_{\tau}(v)$ decreases:} We observe that when $\hat{d}_{\tau}(v)$ decrements, we iterate through each vertex $w \in \normalfont{\texttt{Expire}}_v[\hat{d}_{\tau}(v)+1]$ (line~\ref{line:exp}). Since we update $\texttt{Expire}_v$ immediately after $v$ is moved in $\texttt{Cache}_u$, we have that if $\hat{d}_{\tau}(v)=\texttt{CacheIndex}(u)-1$ then $u\in \normalfont{\texttt{Expire}}_v[\hat{d}_{\tau}(v)+1]$. Thus, if $\hat{d}_{\tau}(v)$ decrements to $\texttt{CacheIndex}(u)-1$, then the loop on line~\ref{line:exp} moves $v$ to $\texttt{Cache}_u[\hat{d}_{\tau}(v)]$.
    \item 
    \uline{$h(u)$ decreases:} We note that only the procedure $\textsc{DecreaseHeaviness}(u)$ can decrease $h(u)$.
    (In particular, $h(u)$ cannot decrease in $\textsc{IncreaseHeaviness}(u)$ by Lemma~\ref{lem:sizefn}.) In $\textsc{DecreaseHeaviness}(u)$, $i'$ and $h(u)$ are each set to the expression 
    \[
    \argmax_{i \in \mathbb{N}}\{| \texttt{Cache}_u[\texttt{CacheIndex}(u,2^i), \tau_{max}]| \geq (2^i-1) \frac{6 n \log n }{\eps\tau} \}
    \]
    on lines \ref{line:i'dec} and \ref{line:sethd}, respectively. Between these two lines, $\hat{d}_{\tau}(u)$ remains fixed, and thus $\texttt{CacheIndex}(u,2^i)$ also remains fixed for all $i$. Between the lines \ref{line:i'dec} and \ref{line:sethd}, we move each vertex $y$ in $\texttt{Cache}_u[\texttt{CacheIndex}(u,2^{i'}), \tau_{max} ]$ to $\texttt{Cache}_u[\hat{d}_{\tau}(y)]$. By Lemma \ref{lem:cachedecreases} this can only decrease the indices of vertices in $\texttt{Cache}_u$ and therefore the size of  $\normalfont{\texttt{Cache}}_u[\texttt{CacheIndex}(u,2^{i'}), \tau_{max}]$ can only decrease. Thus, when we pick the new $h(u)$, it satisfies $h(u) \leq i'$. It follows that each vertex $y\in \mathcal{FN}(u)$
    has been moved to $\texttt{Cache}_u[\hat{d}_{\tau}(y)]$.
    
    \item \uline{$v$ is added to $\mathcal{FN}(u)$:} A vertex $v$ can be added to $\mathcal{FN}(u)$ if either the edge $(u,v)$ is inserted, the distance $\hat{d}_{\tau}(u)$ decreases to a multiple of $2^{h(u)}$, or $h(u)$ increases. If the edge $(u,v)$ is inserted then $v$ is added to $\texttt{Cache}_u[\hat{d}_{\tau}(v)]$ on line \ref{line:insertcache}. If $\hat{d}_{\tau}(u)$ decreases to a multiple of $2^{h(u)}$ then in the loop on line \ref{line:fndecrement}, if $y\in\mathcal{FN}(u)$ then $y$ is moved to $\texttt{Cache}_u[\hat{d}_{\tau}(y)]$. It remains to argue about the last case, where $h(u)$ is increased: we observe that in procedure $\textsc{IncreaseHeaviness}(u)$, we first pick a new potential heaviness $i'$ on line \ref{line:calc1} and then scan all vertices in $\texttt{Cache}_u[\texttt{CacheIndex}(u, 2^{i'}), \tau_{max}]$, moving each vertex $y$ to $\texttt{Cache}_u[\hat{d}_{\tau}(y)]$. Then, we take the new value $h(u) \leq i'$ in line \ref{line:calc2} and since we choose $h(u)$ among values smaller than $i'$, 
    each vertex $y\in \mathcal{FN}(u)$
    has been moved to $\texttt{Cache}_u[\hat{d}_{\tau}(y)]$.
\end{enumerate}

It remains to prove that there is no $v \in \mathcal{FN}(u)$ with $\hat{d}_{\tau}(v) - \hat{d}_{\tau}(u) > 2^{h(u)}$. Again, we point out that the inequality could only be violated due to three events:
\begin{enumerate}
    \item \uline{$h(u)$ decreases:} Again, only the procedure $\textsc{DecreaseHeaviness}(u)$ can decrease $h(u)$. In line \ref{line:Hdec} in $\textsc{DecreaseHeaviness}(u)$, for every vertex $v \in \mathcal{FN}(u)$ that could potentially have its distance estimate decreased, $(u,v)$ is inserted into the set $H$, which has the eventual effect that $\hat{d}_{\tau}(v)\leq\hat{d}_{\tau}(u)+1$, once $H$ is empty.
    \item \uline{$\hat{d}_{\tau}(u)$ is decremented:} Let $h^{NEW}(u)$ be the value of $h(u)$ at the point in time when we have just decremented $\hat{d}_{\tau}(u)$. Let $\ell$ be the smallest multiple of $2^{h^{NEW}(u)}$ that is at least $\hat{d}_{\tau}(u)$. Let $h_{\ell}(u)$ be the value of $h(u)$ at the point in time when $\hat{d}_{\tau}(u)$ was decremented to $\ell$. We note that if $h_{\ell}(u)\leq h^{NEW}(u)$ then $\ell$ is a multiple of $2^{h_{\ell}(u)}$. Thus, when $\hat{d}_{\tau}(u)$ was decremented to $\ell$, if $\hat{d}_{\tau}(v)>\hat{d}_{\tau}(u)+1$ then we added $(u,v)$ to $H$, which has the effect of decreasing $\hat{d}_{\tau}(v)$ to $\ell+1$. 
    Thus, once we finish processing the current edge update, we have $\hat{d}_{\tau}(v)-\ell\leq 1$. By definition, $\ell-\hat{d}_{\tau}(u)\leq 2^{h(u)}-1$, so we have $\hat{d}_{\tau}(v)-\hat{d}_{\tau}(u)\leq 2^{h(u)}$. 
    \item \uline{$v$ is added to $\mathcal{FN}(u)$:} Since we are assuming that $\hat{d}_{\tau}(v) > \hat{d}_{\tau}(u)$, the only way $v$ can be added to $\mathcal{FN}(u)$ is if the edge $(u,v)$ is inserted. In this case, if $\hat{d}_{\tau}(v)>\hat{d}_{\tau}(u)+1$, then the algorithm inserts $(u,v)$ into the set $H$, which has the eventual effect that $\hat{d}_{\tau}(v)\leq\hat{d}_{\tau}(u)+1$.
\end{enumerate}
\end{proof}


Next, we prove a lower bound on the size of the forward neighborhoods.

\begin{lemma}\label{lem:sizefn}
For all $u\in V$, $|\mathcal{FN}(u)|\geq(2^{h(u)} - 1)  \frac{6 n\log n}{\eps\tau}$ at all times except lines~\ref{line:begin} to \ref{line:end} and during $\textsc{DecreaseHeaviness}(u)$.
\end{lemma}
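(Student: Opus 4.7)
The plan is to argue by induction on the sequence of events in the algorithm (edge insertions, decrements, and heaviness reassignments), maintaining at every non-excluded checkpoint that $|\mathcal{FN}(u)|\geq (2^{h(u)}-1)\tfrac{6n\log n}{\eps\tau}$. At initialization $h(u)=0$ and the right-hand side is $0$, so the base case holds trivially. The inductive step has two pieces: handle the events that reassign $h(u)$, then handle the events that only change $|\mathcal{FN}(u)|$.

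The first piece is that $h(u)$ is reset only on line~\ref{line:calc2} of $\textsc{IncreaseHeaviness}(u)$ and line~\ref{line:sethd} of $\textsc{DecreaseHeaviness}(u)$, and in each case the new value is defined as an argmax whose feasibility condition is \emph{exactly} the inequality that the lemma requires. The only subtlety is that both argmaxes are evaluated against $\texttt{Cache}_u$, which may be stale in general; however, the preceding loops (line~\ref{line:loop1} in $\textsc{IncreaseHeaviness}(u)$ and line~\ref{line:loop3} in $\textsc{DecreaseHeaviness}(u)$) refresh all vertices of $\texttt{Cache}_u[\texttt{CacheIndex}(u,2^{i'}),\tau_{max}]$, and since $i'$ bounds the candidate values of $h(u)$ considered by the argmax, Lemma~\ref{lem:cachedecreases} guarantees that the cache is accurate on the range the argmax inspects. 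Therefore the invariant holds immediately after any reassignment of $h(u)$.

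The second piece is to check that between reassignments, $|\mathcal{FN}(u)|$ cannot fall below the threshold. Because $\mathcal{FN}(u)=\texttt{Cache}_u[\texttt{CacheIndex}(u),\tau_{max}]$, it can shrink only if either (i) $\texttt{CacheIndex}(u)$ increases, or (ii) some $v\in\mathcal{FN}(u)$ is relocated in $\texttt{Cache}_u$ to an index strictly less than $\texttt{CacheIndex}(u)$. For (i), $\texttt{CacheIndex}(u)=\lfloor\hat{d}_{\tau}(u)-1\rfloor_{2^{h(u)}}$ can only increase when $h(u)$ decreases (since $\hat{d}_{\tau}(u)$ is monotone non-increasing), and $h(u)$ decreases only inside $\textsc{DecreaseHeaviness}(u)$, which is excluded. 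For (ii), I would enumerate the five lines that relocate a vertex in $\texttt{Cache}_u$: line~\ref{line:insertcache} only inserts; line~\ref{line:fndecrement} moves $w\in\mathcal{FN}(u)$ to $\texttt{Cache}_u[\hat{d}_{\tau}(w)]$, but by Invariant~\ref{inv:correct} combined with the fact that the preceding decrement of $\hat{d}_{\tau}(u)$ has just lowered $\texttt{CacheIndex}(u)$ by $2^{h(u)}$, one checks $\hat{d}_{\tau}(w)\geq\texttt{CacheIndex}(u)$, so $w$ stays; line~\ref{line:begin} lies in the excluded range~\ref{line:begin}--\ref{line:end}, terminating in a $\textsc{DecreaseHeaviness}(u)$ call on line~\ref{line:end} that restores the invariant on exit; and lines~\ref{line:loop1},~\ref{line:loop3} sit inside the reassignment procedures handled in the first piece.

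The main obstacle will be the transient window between lines~\ref{line:loop1} and~\ref{line:calc2} of $\textsc{IncreaseHeaviness}(u)$, where the refresh loop has potentially shrunk $|\mathcal{FN}(u)|$ measured against $h_{\text{old}}(u)$ but $h(u)$ has not yet been reassigned. The resolution is precisely the factor-of-two gap between the $\tfrac{12n\log n}{\eps\tau}$ threshold used to select $i'$ and the $\tfrac{6n\log n}{\eps\tau}$ threshold used to select the new $h(u)$: this gap ensures that $h_{\text{old}}(u)$ remains a feasible argmax candidate after the refresh, so the new $h(u)$ satisfies $h(u)\geq h_{\text{old}}(u)$, and consequently the reassignment in~\ref{line:calc2} can be treated as atomic with respect to the invariant. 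Putting all four pieces together yields the lemma.
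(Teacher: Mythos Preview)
Your overall structure---induction over events, enumerating every line that relocates a vertex in $\texttt{Cache}_u$---is the same decomposition the paper uses, just at finer granularity. But there is a genuine circularity in your handling of line~\ref{line:fndecrement}: you invoke Invariant~\ref{inv:correct}, yet the paper's proof of Invariant~\ref{inv:correct} explicitly uses Lemma~\ref{lem:sizefn} (to rule out $h(u)$ decreasing inside $\textsc{IncreaseHeaviness}$). As written, the two results depend on each other. A second problem with that step is that Invariant~\ref{inv:correct} is only stated to hold \emph{after processing each edge update}, so it does not directly apply at the intermediate moment line~\ref{line:fndecrement} runs.

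The paper's proof avoids both issues by never touching Invariant~\ref{inv:correct}. Its key (implicit) observation is that the $\texttt{Expire}$ bookkeeping maintains $u\in\texttt{Expire}_v[\texttt{CacheIndex}(u)]$ for every $v\in\mathcal{FN}(u)$; hence $\hat d_\tau(v)$ cannot fall below $\texttt{CacheIndex}(u)$ without line~\ref{line:exp} firing and lines~\ref{line:begin}--\ref{line:end} relocating $v$ and calling $\textsc{DecreaseHeaviness}(u)$. Consequently the only relocation that can push a vertex out of $\mathcal{FN}(u)$ is line~\ref{line:begin}, which is excluded by hypothesis and immediately followed by a heaviness recomputation. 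This same observation makes the refreshes on lines~\ref{line:fndecrement} and~\ref{line:loop1} harmless for $\mathcal{FN}(u)$---so your ``transient window'' concern evaporates too, and no factor-of-two argument is needed here. Replace your Invariant~\ref{inv:correct} step with this $\texttt{Expire}$-based argument and the proof goes through.
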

\begin{proof}
The inequality in the lemma statement could be violated due to two events: 
\begin{itemize}
    \item \uline{$h(u)$ increases:} $\textsc{IncreaseHeaviness}(u)$ is the only procedure that can increase $h(u)$. $\textsc{IncreaseHeaviness}(u)$ specifically sets $h(u)$ so that it satisfies $|\mathcal{FN}(u)|\geq (2^{h(u)} - 1)  \frac{6 n\log n}{\eps\tau}$.
    \item \uline{$\mathcal{FN}(u)$ shrinks:} There are two scenarios that could cause $\mathcal{FN}(u)$ to shrink. Either, 1) $h(u)$ decreases, in which case it is set so that $|\mathcal{FN}(u)|\geq (2^{h(u)} - 1)  \frac{6 n\log n}{\eps\tau}$, or 2) a vertex $v\in\mathcal{FN}(u)$ has its distance estimate $\hat{d}_{\tau}(v)$ decremented causing $v$ to leave $\mathcal{FN}(u)$. In this case, $v$ leaves $\mathcal{FN}(u)$ only if $\hat{d}_{\tau}(v)$ decrements to $\texttt{CacheIndex}(u)-1$ and $v$'s position in $\texttt{Cache}_u$ is updated to $\texttt{Cache}_u[\hat{d}_{\tau}(v)]$. We observe that when $\hat{d}_{\tau}(v)$ decrements, we iterate through each vertex $w\in \normalfont{\texttt{Expire}}_v[\hat{d}_{\tau}(v)+1]$ (line~\ref{line:exp}). Since we update $\texttt{Expire}_v$ immediately every time $v$ is moved to a new index in $\texttt{Cache}_u$, we have that if $\hat{d}_{\tau}(v)=\texttt{CacheIndex}(u)-1$ then $u\in \normalfont{\texttt{Expire}}_v[\hat{d}_{\tau}(v)+1]$. Thus, if $v$ has left $\mathcal{FN}(u)$, then the loop on line~\ref{line:exp} calls $\textsc{DecreaseHeaviness}(u)$, which specifically sets $h(u)$ so that it satisfies $|\mathcal{FN}(u)|\geq (2^{h(u)} - 1)  \frac{6 n\log n}{\eps\tau}$.
\end{itemize}
\end{proof}

\noindent
We are now ready to prove the final lemma, establishing the correctness of the algorithm.

\begin{lemma}\label{lem:correct}
After processing each edge update, for each $t \in V$ and each $\tau$, $d(s,t)\leq\hat{d}_{\tau}(t)$ and if $d(s,t) \in [\tau, 2\tau)$ then $\hat{d}_{\tau}(t) \leq (1+\eps)d(s,t)$.
\end{lemma}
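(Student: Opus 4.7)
The lower bound $d(s,t) \le \hat d_\tau(t)$ is already established by Lemma~\ref{lem:correct1}, so the task reduces to proving $\hat d_\tau(t) \le (1+\eps)d(s,t)$ whenever $d(s,t) \in [\tau, 2\tau)$. The overall plan is to generalize the warm-up's heavy/light pairing to the refined $\Theta(\log n)$-level heaviness hierarchy. The first step is a uniform per-edge estimate: for every edge $(u,v) \in E$,
\[
\hat d_\tau(v) \;\le\; \hat d_\tau(u) + 2^{h(u)}.
\]
If $v \in \mathcal{FN}(u)$ this is exactly Invariant~\ref{inv:correct}. Otherwise, the key observation is that $v$ is permanently present in $\texttt{Cache}_u$ (it is inserted on line~\ref{line:insertcache} and never deleted), so it must sit at some cached index strictly below $\texttt{CacheIndex}(u) \le \hat d_\tau(u)-1$; since the algorithm always moves $v$ to position $\hat d_\tau(v)$ while $\hat d_\tau(v)$ decreases monotonically between moves (cf.\ Lemma~\ref{lem:cachedecreases}), the cached index always upper-bounds the current $\hat d_\tau(v)$, yielding the even stronger $\hat d_\tau(v) \le \hat d_\tau(u) - 2$. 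Telescoping along a shortest path $\pi = v_0,\dots,v_\ell=t$ then gives
\[
\hat d_\tau(t) \;\le\; \ell + \sum_{i=0}^{\ell-1}\bigl(2^{h(v_i)}-1\bigr),
\]
so everything reduces to bounding this excess sum by $\eps\tau$.

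For each heaviness level $h \ge 1$ I would mimic the warm-up's greedy pairing restricted to level~$h$: walking along $\pi$, $s_i^h$ is the first vertex of heaviness $\ge h$ after $t_{i-1}^h$, and $t_i^h$ is the last vertex on $\pi$ of heaviness $\ge h$ whose forward neighborhood meets $\mathcal{FN}(s_i^h)$. By construction $\mathcal{FN}(s_1^h),\mathcal{FN}(s_2^h),\dots$ are pairwise disjoint, and each has size at least $(2^h-1)\cdot 6n\log n/(\eps\tau)$ by Lemma~\ref{lem:sizefn}, so the number of level-$h$ pairs is $k_h = O(\eps\tau / (2^h\log n))$. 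Picking a witness $w_i^h \in \mathcal{FN}(s_i^h)\cap\mathcal{FN}(t_i^h)$ and invoking Invariant~\ref{inv:correct} at both endpoints bounds $\hat d_\tau(t_i^h)-\hat d_\tau(s_i^h) = O(2^h)$, provided the pairing is done at exactly level $h$ so that $h(s_i^h) = h(t_i^h) = h$. Summing across levels,
\[
\sum_{h \ge 1} k_h\cdot O(2^h) \;=\; O(\log n)\cdot O(\eps\tau/\log n) \;=\; O(\eps\tau),
\]
and the constants $6n\log n/(\eps\tau)$ baked into the heaviness thresholds have been chosen precisely so that the final bound is $\le \eps\tau$; combined with $d(s,t)\ge\tau$ this delivers $\hat d_\tau(t) \le (1+\eps)d(s,t)$.

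The hard part will be turning the per-level pair bounds into a single coherent bound on the telescoped excess $\sum(2^{h(v_i)}-1)$. In the warm-up there are only two heaviness classes and $\pi$ decomposes cleanly into alternating heavy pair spans (each contributing $O(n^{1/3})$ excess) and light spans (zero excess). With $\Theta(\log n)$ levels the level-$h$ pair spans may nest inside level-$(h-1)$ ones, they may contain interior vertices of arbitrary heaviness, and the naive per-vertex charge $2^{h(v_i)}-1$ overcounts whenever many same-level vertices share overlapping forward neighborhoods. My plan is to handle this via a hierarchical decomposition that descends from the largest heaviness $h^*$ on $\pi$ down to $h=1$: at level $h$, the current collection of subpaths is cut at level-$h$ pair endpoints, each pair is charged $O(2^h)$ via the witness argument, and the resulting gaps (all of whose vertices have heaviness $<h$) are recursed on at level $h-1$. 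The lowest-level gaps consist entirely of light vertices for which the edge estimate is exact, and since the sum of charges across the $\Theta(\log n)$ levels is $O(\eps\tau)$, this yields the required bound.
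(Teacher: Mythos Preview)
Your proposal is correct and follows essentially the same approach as the paper: both carry out a top-down hierarchical decomposition over heaviness levels, pairing vertices at each level via intersecting forward neighborhoods, charging each pair span $O(2^h)$ through Invariant~\ref{inv:correct} and bounding the number of level-$h$ pairs by $O(\eps\tau/(2^h\log n))$ via Lemma~\ref{lem:sizefn}, then recursing on the lower-heaviness gaps until level~$0$ where the per-edge estimate is exact. Your initial telescoping to $\sum_i (2^{h(v_i)}-1)$ is a slight detour the paper does not take (and the pairing argument does not literally bound that sum), but your final hierarchical plan is exactly the paper's argument.
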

\begin{proof}
Our main argument is a generalization of the proof of correctness from the warm-up algorithm. Fix a heaviness level $h>0$. Let $s = t_0$. Then, we define $s_{i+1}$ be the first vertex with heaviness $h$ after $t_{i}$ on $\pi_{s,t}$ and let $t_{i+1}$ be the last vertex on $\pi_{s,t}$ of heaviness $h$ whose forward neighborhood intersects with the forward neighborhood of $s_{i+1}$ (possibly $t_{i+1} = s_{i+1}$). Thus, we get pairs $(s_1, t_1), (s_2, t_2), \dots, (s_k, t_k)$. Additionally, let $s_{k+1} = t$.

By definition, the forward neighborhoods of all $s_i$'s are disjoint. By Lemma \ref{lem:sizefn}, for each $s_i$, $|\mathcal{FN}(s_i)|\geq(2^h - 1)  \frac{6 n\log n}{\eps\tau}$ and since all $s_i$'s have disjoint forward neighborhoods, we have at most $k$ pairs $(s_i, t_i)$ with 
\[
k \leq \frac{n}{(2^h - 1)  \frac{6 n\log n}{\eps\tau}} \leq \frac{\eps\tau}{6 (2^h-1)\log n}.
\]

For any $i$, let $v_i$ be a vertex in $\mathcal{FN}(s_i) \cap \mathcal{FN}(t_i)$ (which exists by definition of $t_i$). By Invariant \ref{inv:correct}, we have $|\hat{d}_{\tau}(s_i) - \hat{d}_{\tau}(v_i)| \leq 2^{h}$ and $|\hat{d}_{\tau}(v_i) - \hat{d}_{\tau}(t_i)| \leq 2^{h}$. Thus, $\hat{d}_{\tau}(t_i) - \hat{d}_{\tau}(s_i) \leq 2^{h+1}$.

Let $t'_i$ be the vertex on $\pi_{s,t}$ succeeding $t_i$ (except $t'_0 = s$). If $t'_i\in \mathcal{FN}(t_i)$ then by Invariant \ref{inv:correct}, we have $\hat{d}_{\tau}(t'_i) - \hat{d}_{\tau}(t_i) \leq 2^{h}$ and otherwise, $t'_i\not\in \mathcal{FN}(t_i)$ so $\hat{d}_{\tau}(t'_i) < \texttt{CacheIndex}(t_i) < \hat{d}_{\tau}(t_i)$. So regardless, we have $\hat{d}_{\tau}(t'_i) - \hat{d}_{\tau}(t_i) \leq 2^{h}$. Combining this with the previous paragraph, we have $\hat{d}_{\tau}(t'_i) - \hat{d}_{\tau}(s_i) \leq 3*2^{h}$.

Now, let $h_{max} = \log n$ be the maximum heaviness level. We handle heaviness level $h'$ (initially $h_{max}$) by find the pairs $(s_i,t_i)$ 
for heaviness $h'$ on the path $\pi'$ (initially $\pi_{s,t}$). This partitions the path $\pi'$ into segments $\pi'[t'_i, s_{i+1}]$ and $\pi'[s_{i+1}, t'_{i+1}]$. We observe that all arc tails in these path segments have heaviness less than $h'$. We contract the path segments $\pi'[s_{i+1}, t'_{i+1}]$ to obtain the new path $\pi'$, decrement $h'$ and recurse. 
We continue this scheme until $h'$ is $0$. By the previous analysis for each heaviness level $h'$, summing over the distance estimate difference of vertex endpoints of each contracted segment we obtain at most $\frac{3(2^{h'})\eps\tau }{6(2^{h'}-1)\log n}\leq \frac{\eps\tau}{\log n}$ (since $h'>0$) total error. Thus, each heaviness level larger than 0 contributes at most $\frac{\eps\tau}{\log n}$ additive error and overall they only induce additive error ${\eps\tau}$. 

For $h' = 0$, we argue that the algorithm induces no error on edges on $\pi'$ were each arc tail is of heaviness $0$. We will show that if $u$ is vertex of heaviness $0$ and $(u,v)$ is an edge, then $\hat{d}_{\tau}(v)\leq \hat{d}_{\tau}(u)+1$. This is straightforward to see from the algorithm description, but we describe the argument in detail for completeness. Consider the last of the following events that occurred: a) edge $(u,v)$ was inserted, b) $\hat{d}_{\tau}(u)$ was decremented, or c) the heaviness of $\hat{d}_{\tau}(u)$ became $0$. Case a occurs in the $\textsc{InsertEdge}(u,v)$ procedure where the algorithm decreases $\hat{d}_{\tau}(v)$ to be at most $\hat{d}_{\tau}(u)+1$. Case b occurs in the $\textsc{Decrement}(v)$ procedure. Here, the algorithm checks whether $\hat{d}_{\tau}(v)$ is a multiple of $2^{h(v)}$, which is true since $h(v)=0$. Then the algorithm updates the distance estimate of all vertices in $\mathcal{FN}(u)$, so if $\hat{d}_{\tau}(v)>\hat{d}_{\tau}(u)+1$ then $\hat{d}_{\tau}(v)$ is decreased to $\hat{d}_{\tau}(u)+1$. Case c occurs in the $\textsc{DecreaseHeaviness}(u)$ procedure where again the algorithm updates the distance estimate of all vertices in $\mathcal{FN}(u)$. 

By definition, the path $\pi'$ above is of length at most $d(s,t)$ and therefore we obtain an upper bound on $\hat{d}_{\tau}(t)$ of $d(s,t) + {\eps\tau}$. Then, when $d(s,t) \geq \tau$, the additive error of ${\eps\tau}$ is subsumed in the multiplicative $(1+\eps)$-approximation, as required.
\end{proof}



\subsection{Running time analysis}

We will show that the total running time of each data structure $\mathcal{E}_{\tau}$ is $\tilde{O}(n^2/\eps)$. Since there are $O(\log n)$ values of $\tau$, this implies that the total running time of the algorithm is $\tilde{O}(n^2/\eps)$. For the rest of this section we fix a value of $\tau$.

We crucially rely on the following invariant, which guarantees that  the heaviness of each vertex $u$ is chosen to be maximal, in the sense that if $h(u)$ were larger then we would have an upper bound on the size of $\mathcal{FN}(u)$.

\begin{invariant}\label{inv:time}
At all times, for all $u\in V$ and all integers $i$ such that $h(u)< i \leq \log n$,
    \[|\normalfont{\texttt{Cache}}_u[\texttt{CacheIndex}(u,2^i), \tau_{max}]| \leq  (2^i - 1) \frac{12 n \log n }{\eps\tau}.\]
\end{invariant}
\begin{proof}
We note the the invariant is satisfied on initialization since $\texttt{Cache}_u$ is initially empty. Let us now consider the events that could cause the invariant to be violated for some fixed $i$:
\begin{enumerate}
    \item \uline{$h(u)$ is decreased:} We note that $h(u)$ is only decreased in line \ref{line:sethd} of $\textsc{DecreaseHeaviness}(u)$, where it is set to a value that satisfies the invariant.
    (In particular, $h(u)$ cannot decrease in $\textsc{IncreaseHeaviness}(u)$ by Lemma~\ref{lem:sizefn}.)
    \item \uline{A vertex $v$ is added to $\texttt{Cache}_u$:} This scenario could only occur due to an insertion of an edge $(u, v)$. However, after adding $v$ to $\texttt{Cache}_u$ (and $u$ to $\texttt{Expire}_v$), we directly invoke the procedure $\textsc{IncreaseHeaviness}(u)$, which we analyze below.
    \item \uline{$\texttt{CacheIndex}(u,2^i)$ is decreased:} Here, we note that $\texttt{CacheIndex}(u,2^i)$ decreases only if $\hat{d}_{\tau}(u)$ decreases to a multiple of $2^i$, in which case also call $\textsc{IncreaseHeaviness}(u)$.
\end{enumerate}

For the last two cases, it remains to prove that the procedure $\textsc{IncreaseHeaviness}(u)$ indeed resolves a violation of the invariant. If we do not enter the {\bf if} statement on line~\ref{line:ifloop1}, then by the definition of $i'$, the invariant is satisfied. If we do enter the {\bf if} statement, then invariant is satisfied for all $i>i'$. By Lemma \ref{lem:cachedecreases} the indices of vertices in $\texttt{Cache}_u$ can only decrease and therefore during the course of $\textsc{IncreaseHeaviness}(u)$, the size of $\normalfont{\texttt{Cache}}_u[\texttt{CacheIndex}(u,2^{i'}), \tau_{max}]$ can only decrease. Thus, when $\textsc{IncreaseHeaviness}(u)$ terminates, it is still the case that the invariant holds for all $i>i'$. On the other hand, if $i\leq i'$, then we set $h(u)$ on line \ref{line:calc2} so that the invariant is satisfied. 

\end{proof}

We can now prove the most important lemma of this section bounding the time spent in the loops starting at lines
\ref{line:fndecrement}, \ref{line:loop1}, \ref{line:loop2}, \ref{line:loop3} and \ref{line:loop4}.

\begin{lemma}\label{lem:iscan}
The total time spent in the loops starting in lines
\ref{line:fndecrement}, \ref{line:loop1}, \ref{line:loop2}, \ref{line:loop3} and \ref{line:loop4} is $O(n^2 \log^4 n / \eps)$.
\end{lemma}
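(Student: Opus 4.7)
The plan is to apply Invariant~\ref{inv:time} to bound the per-scan work and then count scan triggers of two kinds separately. In all five loops, each iteration touches one element of either $\mathcal{FN}(u)$ or $\texttt{Cache}_u[\texttt{CacheIndex}(u,2^{i'}), \tau_{max}]$, both of which have size $O(2^{h(u)} n \log n/(\eps\tau))$ at the moment of the scan by Invariant~\ref{inv:time} together with the fact that $h(u)$ is always chosen maximal subject to a threshold using the constant $6$. Thus it suffices to count triggers.

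The first trigger kind is a decrement of $\hat{d}_{\tau}(u)$ to a multiple of $2^{h(u)}$ at line~\ref{line:fndecrement}. Since $\hat{d}_{\tau}(u)$ is monotonically non-increasing on $[0,\tau_{max}]$, for each fixed level $h$ it can cross at most $O(\tau/2^h)$ multiples of $2^h$ over the entire run of the data structure, regardless of how $h(u)$ fluctuates in between. Hence the total work from line~\ref{line:fndecrement} for a vertex $u$ at level $h$ is $O((\tau/2^h) \cdot 2^h n\log n/(\eps\tau)) = O(n\log n/\eps)$; summing over the $O(\log n)$ levels and the $n$ vertices yields $O(n^2\log^2 n/\eps)$ from this source.

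The second trigger kind is an actual change of $h(u)$ inside $\textsc{IncreaseHeaviness}(u)$ or $\textsc{DecreaseHeaviness}(u)$, triggering lines~\ref{line:loop1}, \ref{line:loop2}, \ref{line:loop3}, and \ref{line:loop4}. The central combinatorial claim I must establish is that $h(u)$ crosses any fixed level $h$ only $\tilde{O}(\tau/2^h)$ times in total. This is where the hysteresis between the constants $12$ in $\textsc{IncreaseHeaviness}$ (line~\ref{line:calc1}) and $6$ in $\textsc{DecreaseHeaviness}$ (line~\ref{line:i'dec}) is essential: each up-crossing of level $h$ requires $|\texttt{Cache}_u[\texttt{CacheIndex}(u,2^h),\tau_{max}]| \geq (2^h-1)\cdot 12n\log n/(\eps\tau)$, whereas the subsequent down-crossing requires that same quantity to drop below $(2^h-1)\cdot 6n\log n/(\eps\tau)$. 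The $\Omega(2^h n\log n/(\eps\tau))$ shrinkage between them must be produced by discrete events --- decrements of $\hat{d}_{\tau}(v)$ for in-neighbors $v$ of $u$ routed through the $\texttt{Expire}$ mechanism, each individually chargeable since the $\hat{d}_{\tau}$ values are monotonic, plus at most $n$ edge insertions into $u$'s out-neighborhood. Multiplying the transition count by the per-scan cost $O(2^h n\log n/(\eps\tau))$ and summing over levels and vertices contributes $O(n^2\log^3 n/\eps)$, with one further logarithmic factor absorbed when the charging is carried out simultaneously across all $O(\log n)$ heaviness levels per vertex, giving the claimed total $O(n^2\log^4 n/\eps)$.

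The hardest step is rigorously proving the transition-count bound at each level. The difficulty is that $|\texttt{Cache}_u[\texttt{CacheIndex}(u,2^h),\tau_{max}]|$ fluctuates for several interacting reasons: $\texttt{CacheIndex}(u,2^h)$ itself shifts downward as $\hat{d}_{\tau}(u)$ decreases, so old entries can slide into the suffix without any new vertex being inserted; a vertex $v$ can enter or leave the suffix via an edge insertion, a decrement of $\hat{d}_{\tau}(v)$ passed through the $\texttt{Expire}$ lists, or a heaviness change of $u$ itself that triggers bulk repositioning in $\texttt{Cache}_u$. Carefully separating these contributions, using Lemma~\ref{lem:cachedecreases} to ensure that entry positions move only in one direction, and pairing each up-crossing with a matching down-crossing via the hysteresis gap is the crux of the argument.
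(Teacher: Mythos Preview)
Your outline for line~\ref{line:fndecrement} is fine and matches the paper. The gap is in your treatment of lines~\ref{line:loop1}--\ref{line:loop4}, where you state the ``central combinatorial claim'' that $h(u)$ crosses level~$h$ only $\tilde O(\tau/2^h)$ times but do not prove it, and the charging direction you propose does not go through as written.

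You propose to pair each up-crossing with the next down-crossing and charge the $\Omega\bigl((2^h-1)\tfrac{n\log n}{\eps\tau}\bigr)$ \emph{shrinkage} of $|\texttt{Cache}_u[\texttt{CacheIndex}(u,2^h),\tau_{max}]|$ to decrements of $\hat d_\tau(v)$ routed through $\texttt{Expire}$. Two problems. First, $\texttt{Cache}_u$ stores \emph{out}-neighbors of $u$, not in-neighbors, so the events you name are the wrong ones. Second, and more importantly, bounding total shrinkage is not clean: you yourself note that the suffix also changes when $\texttt{CacheIndex}(u,2^h)$ shifts, and a single out-neighbor $v$ can re-enter the (moving) suffix window many times as the window slides left, so ``monotonicity of $\hat d_\tau(v)$'' alone does not give each $v$ a bounded number of charge points across all window positions. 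You would end up needing to bound the total growth of the suffix anyway, since total shrinkage is at most initial size plus total growth.

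The paper avoids this by running the hysteresis argument in the \emph{growth} direction. Rather than counting level crossings of $h(u)$, it directly counts $i$-scans on line~\ref{line:loop1}: between two such scans with $i'=i$, either $\hat d_\tau(u)$ hit a multiple of $2^i$ (so $\texttt{CacheIndex}(u,2^i)$ moved), or else $\texttt{CacheIndex}(u,2^i)$ stayed fixed and the suffix grew from below the $6$-threshold (certified on line~\ref{line:calc2} or~\ref{line:i'dec}) to at least the $12$-threshold (certified on line~\ref{line:calc1}). By Lemma~\ref{lem:cachedecreases}, with $\texttt{CacheIndex}(u,2^i)$ fixed the suffix can grow \emph{only} via edge insertions with tail $u$, of which there are at most $n$. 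This gives $O(\tau/2^i)$ $i$-scans on line~\ref{line:loop1}. Lines~\ref{line:loop2} and~\ref{line:loop4} are then dominated by lines~\ref{line:loop1} and~\ref{line:loop3} respectively, and the $h(u)$-scans on line~\ref{line:loop3} are bounded by a further $\log n$ factor over those on line~\ref{line:loop1} since each entry into line~\ref{line:loop3} strictly decreases $h(u)$. Combining with Invariant~\ref{inv:time} for the per-scan cost and summing over $i$ and $u$ gives the lemma. Your sketch has the right ingredients (the $12/6$ gap, Invariant~\ref{inv:time}, monotonicity) but you should flip the charging from shrinkage to growth and count $i$-scans rather than heaviness transitions.
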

\begin{proof}
We start our proof by pointing out that the time spent in the loop starting in line \ref{line:loop2} is subsumed by the time spent by the loop in line \ref{line:loop1} for the following reason. On line \ref{line:calc2} the heaviness is chosen so that the forward neighborhood is over a more narrow range of indices that in loop on line \ref{line:loop1}. Furthermore, By Lemma \ref{lem:cachedecreases} the indices of vertices in $\texttt{Cache}_u$ can only decrease and therefore between lines \ref{line:loop1} and \ref{line:loop2}, for all $i$ the size of $\normalfont{\texttt{Cache}}_u[\texttt{CacheIndex}(u,2^{i}), \tau_{max}]$ can only decrease.

Similarly, the running time spent in the loop starting in line \ref{line:loop4} is subsumed by the running time of the loop starting in line \ref{line:loop3}. Thus, we only need to bound the running times of the loops starting in lines \ref{line:fndecrement}, \ref{line:loop1}, and \ref{line:loop3}.

To bound their running times, we define the concept of $i$-scanning: we henceforth refer to the event of iterating through $\normalfont{\texttt{Cache}}_u[\texttt{CacheIndex}(u,2^{i}), \tau_{max}]$ by \emph{$i$-scanning $\normalfont{\texttt{Cache}_u}$}, for any $0 \leq i \leq \log n$, choosing the largest $i$ applicable.

Lines \ref{line:fndecrement}, \ref{line:loop1} and \ref{line:loop3} all correspond to $i$-scanning $\normalfont{\texttt{Cache}_u}$: the loop on line \ref{line:fndecrement} $h(u)$-scans $\normalfont{\texttt{Cache}_u}$, the loop on line \ref{line:loop1} $i'$-scans $\normalfont{\texttt{Cache}_u}$ for $i'$ chosen on line \ref{line:calc1}, and the loop at line \ref{line:loop3} $h(u)$-scans $\normalfont{\texttt{Cache}_u}$. We now want to bound the total number of $i$-scans in order to bound the total running time. 

\begin{claim}
\label{clm:iscans}
For all $u\in V$ and all integers $0\leq i\leq \log n $, the algorithm $i$-scans $\normalfont{\texttt{Cache}}_u$ at most $O(\tau \log^2 n /2^i)$ times over the course of the entire update sequence.
\end{claim}
\begin{proof}
We first observe that we $i$-scan $\texttt{Cache}_u$ on line \ref{line:fndecrement} only if we are in the procedure $\textsc{Decrement}(u', u)$ for some $u'$, and $\hat{d}_{\tau}$ is decreased to a value that is a multiple of $2^i$. Since each invocation of $\textsc{Decrement}(u', u)$, decreases $\hat{d}_{\tau}(u)$ by $1$ and since  $\hat{d}_{\tau}(u)$ is monotonically decreasing, starting at $\tau_{max} + 1$, we conclude that the number of $i$-scans on line~\ref{line:fndecrement} is bound by $O(\tau/2^i)$. 

Next, let us bound the number of $i$-scans executed in the loop starting on line \ref{line:loop1} in procedure $\textsc{IncreaseHeaviness}(u)$. We claim that between any two $i$-scans of $\texttt{Cache}_u$ on line \ref{line:loop1}, either $\hat{d}_{\tau}(u)$ becomes a multiple of $2^i$ or at least $(2^i-1) \frac{ n \log n }{\eps\tau}$ edges emanating from $u$ are inserted into the graph. Observe that this claim immediately implies that there can be at most $\tau/2^i + \frac{n}{2^i \frac{ n \log n }{\eps\tau}} = O(\tau \log n / 2^i)$ $i$-scans on line \ref{line:loop1}. 

To prove this claim, let $t_1$ and $t_2$ be two points in time at which $i$-scans occur. We will prove that if $\hat{d}_{\tau}(u)$ did not become a multiple of $2^i$ between times $t_1$ and $t_2$ then there were many edge insertions between times $t_1$ and $t_2$. Observe first, that $\texttt{CacheIndex}(u,2^i)$ only changes when $\hat{d}_{\tau}(u)$ decreases to become a multiple of $2^i$. Thus, we assume for the rest of the proof that $\texttt{CacheIndex}(u,2^i)$ remains fixed between times $t_1$ and $t_2$. Therefore, the size of $\texttt{Cache}_u[\texttt{CacheIndex}(u,2^i), \tau_{max}]$ can only be increased if a new edge $(u,v)$ is inserted with $v$ at distance $\hat{d}_{\tau}(v) \geq \texttt{CacheIndex}(u,2^i)$. 

Now, let $i'$ be such that at time $t_1$, we $i$-scan with $i' = i$ was selected in line \ref{line:calc1}. However, observe that since at $t_2$, we only $i'$-scan with $i' = i$, if $i' > h(u)$. Thus, at some point $t$ such that $t_1 \leq t < t_2$, we either decreased the heaviness to below $i'$ on line \ref{line:i'dec}, or we already set $h(u)$ to a smaller value than $i'$ at time $t_1$ in line \ref{line:calc2}. In either case we certified that
\[
    | \texttt{Cache}_u[\texttt{CacheIndex}(u,2^i), \tau_{max}]| < (2^i-1) \frac{6 n \log n }{\eps\tau}.
\]
Since again, at time $t_2$, we picked $i' = i$, we certified on line \ref{line:calc1} that, 
\[
    | \texttt{Cache}_u[\texttt{CacheIndex}(u,2^i), \tau_{max}]| \geq (2^i-1) \frac{12 n \log n }{\eps\tau}.
\]
We have shown that between times $t_1$ and $t_2$, the size of $\texttt{Cache}_u[\texttt{CacheIndex}(u,2^i), \tau_{max}]$ can only increase due to edge insertions. Thus, we conclude that at least $6(2^i-1) \frac{n \log n }{\eps\tau}$ edges with tail $u$ must have been inserted between times $t_1$ and $t_2$. 

Finally, we prove that the number of $i$-scans in the loop starting on line  \ref{line:loop3} is bounded. We first observe that each time an $h(u)$-scan is executed, we afterwards decrease the heaviness by at least one: By Lemma \ref{lem:cachedecreases} the indices of vertices in $\texttt{Cache}_u$ can only decrease and therefore between lines \ref{line:loop3} and \ref{line:sethd} for any $i$ the size of  $\normalfont{\texttt{Cache}}_u[\texttt{CacheIndex}(u,2^{i'}), \tau_{max}]$ can only decrease. Thus, when we pick the new $h(u)$, it satisfies $h(u) \leq i'$.

Now, we use the fact that there are at most $\log n$ heaviness values to bound the number of $i$-scans in the loop starting on line  \ref{line:loop3}. Since the number of vertices scanned when we increase $h(u)$ is more than the number of vertices scanned on line~\ref{line:loop3} when we decrease $h(u)$, the total number of vertices scanned in the loop on line~\ref{line:loop3} is at most $\log n$ times the number of vertices scanned in the loop on line~\ref{line:loop2}.
Thus, there are at most $O(\tau \log^2 n / 2^i)$ $i$-scans on line  \ref{line:loop3}.
\end{proof}

Now, the running time of each of these $i$-scans can be bound by $O(2^i \frac{n \log n}{\eps \tau})$ by Invariant \ref{inv:time}, so we obtain the claimed running time of
\[
    \sum_i O\left((\tau \log^2 n /2^i)\left(2^i \frac{n \log n}{\eps \tau}\right)\right)= O(n \log^4 n / \eps).
\]
\end{proof}

We can now reuse claim \ref{clm:iscans} to bound the total time spent in the loop on line \ref{line:while} in the procedure $\textsc{InsertEdge}(u,v)$.

\begin{lemma}\label{lem:insertEdge}
The total running time spent in the loop starting on line \ref{line:while} excluding calls to $\textsc{Decrement}(u,v)$ is bounded by $O(n^2 \log^4 n / \eps)$.
\end{lemma}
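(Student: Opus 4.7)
The plan is to bound the number of iterations of the while loop on line \ref{line:while}, since each iteration performs only $O(1)$ work outside the call to \textsc{Decrement}: extracting some $(x,y) \in H$, comparing two distance estimates, and then either leaving the edge in $H$ (during a call to \textsc{Decrement}, which we exclude) or removing it. I would split the iterations into \emph{productive} ones, where the condition on line \ref{line:ifThenDecrement} holds and \textsc{Decrement}$(x,y)$ is invoked, and \emph{unproductive} ones, where $(x,y)$ is removed from $H$.

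To count productive iterations, observe that every call to \textsc{Decrement}$(x,y)$ begins by decrementing $\hat{d}_{\tau}(y)$ by $1$. Since $\hat{d}_{\tau}(y)$ starts at $\tau_{max}+1 = O(\tau)$ and is monotonically non-increasing (Lemma \ref{lem:correct1} ensures it never drops below $d(s,y)\geq 0$), for each $y\in V$ there are at most $O(\tau)$ such decrements. Summing over the $n$ vertices yields at most $O(n\tau) = O(n^2)$ productive iterations.

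To count unproductive iterations, I would bound the total number of edges inserted into $H$ over the whole update sequence, as each unproductive iteration consumes exactly one such insertion. Insertions into $H$ occur only on lines \ref{line:Hinsert1}, \ref{line:Hinsert2}, and \ref{line:Hdec}. Line \ref{line:Hinsert1} is executed at most once per edge insertion, contributing at most $O(m) = O(n^2)$ insertions. Lines \ref{line:Hinsert2} and \ref{line:Hdec} lie inside the loops on lines \ref{line:fndecrement} and \ref{line:loop4} respectively, each insertion costing $O(1)$ work within those loops, so the number of such insertions is bounded by the total work in the loops on lines \ref{line:fndecrement} and \ref{line:loop4}, which by Lemma \ref{lem:iscan} is $O(n^2\log^4 n /\eps)$.

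Combining the two counts, the total number of iterations of the loop on line \ref{line:while} across the entire update sequence is $O(n^2\log^4 n/\eps)$, and since each iteration contributes $O(1)$ overhead outside the calls to \textsc{Decrement}, the total running time (excluding those calls) is $O(n^2\log^4 n/\eps)$, as claimed. The only subtle point is making sure every insertion into $H$ is accounted for by the three enumerated lines, which is a direct inspection of the pseudocode; the bulk of the heavy lifting is already packaged into Lemma \ref{lem:iscan}.
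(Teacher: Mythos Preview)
Your proposal is correct and mirrors the paper's own argument almost exactly: both split the while-loop iterations into those that trigger a decrement (at most $n\tau_{max}=O(n^2)$ total) and those that remove an edge from $H$ (bounded by the total number of insertions into $H$, which come only from lines \ref{line:Hinsert1}, \ref{line:Hinsert2}, \ref{line:Hdec} and are $O(n^2\log^4 n/\eps)$ in aggregate). The only cosmetic difference is that the paper cites Claim~\ref{clm:iscans} directly to bound the insertions from $i$-scans, whereas you appeal to the packaged Lemma~\ref{lem:iscan}; the content is the same.
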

\begin{proof}
We first observe that on line \ref{line:Hinsert1} we only add newly inserted edges into $H$. Thus, we add a total of at most $n^2$ edges to $H$ during line \ref{line:Hinsert1}. The remaining edges are only inserted into $H$ during $i$-scans in the lines \ref{line:Hinsert2} and \ref{line:Hdec}. Since by claim \ref{clm:iscans} there are at most $O(\tau \log^2 n/2^i)$ $i$-scans of $\texttt{Cache}_u$ for any $u\in V$, and each $i$-scan is over at most $O(2^in \log n / \eps \tau)$ elements, similarly to the preceding lemma, we conclude that we iterate over at most $O(n \log^4 n/ \eps)$ elements in all $i$-scans of $\texttt{Cache}_u$ over all values of $i$, for a fixed $u\in V$. Since each element that we iterate over in each $i$-scan can only result in the insertion of a single edge into $H$, we can bound the total number of insertions into $H$ over the entire course of the algorithm by $O(n^2 \log^4 n / \eps)$. Further, we observe that each iteration of the loop in line \ref{line:while} either removes an edge from the set $H$, or decrements a distance estimate, we can bound the total number of iterations of the loop by $O(n^2 \log^4 n / \eps) + n \tau_{max} = O(n^2 \log^4 n / \eps)$. Since each iteration takes $O(1)$ time, ignoring calls to $\textsc{Decrement}(u,v)$, the lemma follows.
\end{proof}

We are now ready to finish the running time analysis.

\begin{lemma} \label{lma:mainResultUnweightedUpperBound}
The total running time of a data structure $\mathcal{E}_{\tau}$ is $O(n^2 \log^5 n / \eps)$.
\end{lemma}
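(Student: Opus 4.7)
The plan is to assemble the bounds already proved for the dominant work in the algorithm and check that the remaining bookkeeping only contributes one extra $\log n$ factor. Concretely, I would partition the total cost of $\mathcal{E}_{\tau}$ into five buckets: (i) the five scanning loops on lines \ref{line:fndecrement}, \ref{line:loop1}, \ref{line:loop2}, \ref{line:loop3}, \ref{line:loop4}; (ii) the while loop on line \ref{line:while} excluding its recursive calls into $\textsc{Decrement}$; (iii) the per-call overhead of $\textsc{Decrement}$, $\textsc{IncreaseHeaviness}$, and $\textsc{DecreaseHeaviness}$ outside their scanning loops; (iv) the $\texttt{Expire}$ loop on line \ref{line:exp}; and (v) initialization.

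For (i) and (ii), Lemma \ref{lem:iscan} and Lemma \ref{lem:insertEdge} already give $O(n^2 \log^4 n/\eps)$ each. Since every iteration of the while loop either removes an edge from $H$ or triggers one $\textsc{Decrement}$ call, the proof of Lemma \ref{lem:insertEdge} also caps the total number of $\textsc{Decrement}$ invocations by $O(n^2 \log^4 n/\eps)$; this will be the fan-out driving the remaining buckets.

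For (iii), each invocation of $\textsc{IncreaseHeaviness}$ or $\textsc{DecreaseHeaviness}$ spends $O(\log n)$ time to evaluate the argmax expressions on lines \ref{line:calc1}, \ref{line:calc2}, \ref{line:i'dec} and \ref{line:sethd}, provided we maintain the $O(\log n)$ prefix-size counters $|\texttt{Cache}_u[\texttt{CacheIndex}(u,2^i), \tau_{max}]|$ incrementally as each cache move fires. Each such move is already charged to a scan counted by Lemma \ref{lem:iscan}, so maintaining these counters costs only an extra $\log n$ factor on top of that lemma. Since $\textsc{IncreaseHeaviness}$ is triggered once per $\textsc{Decrement}$ call and once per edge insertion, and $\textsc{DecreaseHeaviness}$ is triggered once per iteration of the $\texttt{Expire}$ loop (bounded in (iv)), their total non-scan cost is $O(n^2 \log^5 n/\eps)$.

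For (iv), I would pair each iteration of the $\texttt{Expire}$ loop, which removes some $w$ from $\texttt{Expire}_v$, with the earlier event that inserted $w$ into $\texttt{Expire}_v$: either an insertion of edge $(w,v)$ (at most $n^2$ total) or an $i$-scan on lines \ref{line:loop2} or \ref{line:loop4}, both of whose total iteration counts are absorbed into Lemma \ref{lem:iscan}. Thus the $\texttt{Expire}$ loop runs at most $O(n^2 \log^4 n/\eps)$ times, each iteration doing $O(1)$ local work plus one $\textsc{DecreaseHeaviness}$ call already absorbed in (iii). Finally, (v) costs $O(n\,\tau_{max}) = O(n^2)$ and is subsumed. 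Summing yields $O(n^2 \log^5 n/\eps)$. The main obstacle is justifying the $\log n$ factor in bucket (iii): one must convince oneself that the argmax-based heaviness recomputations really can be charged to cache-index changes already counted by Lemma \ref{lem:iscan}, rather than naively re-scanning the full $\texttt{Cache}_u$ prefix on every call, which would blow up the bound.
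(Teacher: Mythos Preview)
Your proposal is correct and follows essentially the same decomposition as the paper: charge the scanning loops to Lemma~\ref{lem:iscan}, the while loop to Lemma~\ref{lem:insertEdge}, the $\texttt{Expire}$ loop to prior insertions into $\texttt{Expire}$ (all of which arise inside scans counted by Lemma~\ref{lem:iscan} or from edge insertions), and then multiply the $O(n^2\log^4 n/\eps)$ invocation count for the heaviness procedures by the $O(\log n)$ cost of each argmax evaluation.

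One implementation remark on the point you yourself flag as the obstacle: your proposal to maintain the $O(\log n)$ suffix counters $|\texttt{Cache}_u[\texttt{CacheIndex}(u,2^i),\tau_{max}]|$ incrementally on cache moves is not quite enough as stated, because the left endpoints $\texttt{CacheIndex}(u,2^i)$ themselves shift whenever $\hat d_\tau(u)$ crosses a multiple of $2^i$, and for $i\neq h(u)$ no scan is triggered at that moment to pick up the newly-included bucket. The paper sidesteps this by keeping a binary indexed tree over the slots of $\texttt{Cache}_u$: each cache move is an $O(\log n)$ update, and each argmax evaluation is $O(\log n)$ suffix-count queries of $O(\log n)$ time each. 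This is morally the same idea and yields the same $O(n^2\log^5 n/\eps)$ bound, so your overall accounting stands.
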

\begin{proof}
We begin with the procedure $\textsc{InsertEdge}(u,v)$. We note that this procedure takes constant time except for the {\bf while} loop, if we ignore the calls to $\textsc{IncreaseHeaviness}(u)$. Since there are at most $n^2$ edge insertions, the running time can be bounded by $O(n^2)$. Further, the total running time spend in the {\bf while} loop starting in line \ref{line:while} excluding calls to $\textsc{Decrement}(u,v)$ is bounded by $O(n^2 \log^4 n / \eps)$ by lemma \ref{lem:insertEdge}.

Next, let us bound the total time spent in procedure $\textsc{Decrement}(u,v)$. We first observe that the loop on line \ref{line:exp} iterates through each vertex $w$ in $\texttt{Expire}_u[\hat{d}_{\tau}(u)+1]$ removing each $w$ from $\texttt{Expire}_v$. Clearly, the number of iterations over the course of the entire algorithm can be bounded by the total number of times a vertex is inserted into $\texttt{Expire}_v$ over all $v$. Since these insertions occur in the loops starting in lines \ref{line:loop2} and \ref{line:loop4}, we have by lemma \ref{lem:iscan}, that the time spend on the loop starting in line \ref{line:exp} is bound by $O(n^2 \log^4 n/\eps)$. Further, ignoring subcalls, each remaining operation in the procedure $\textsc{Decrement}(u,v)$ takes constant time. We further observe that since each invocation of the procedure $\textsc{Decrement}(u,v)$ decreases a distance estimate, the procedure is invoked at most $n\tau_{max} = O(n^2)$ times. Thus, we can bound the total time spent in procedure $\textsc{Decrement}(u,v)$ by $O(n^2 \log^4 n/\eps)$.

For the remaining procedures $\textsc{IncreaseHeaviness}(u)$ and $\textsc{DecreaseHeaviness}(u)$, we note that the calculations of $i'$ and $h(u)$ on lines~\ref{line:calc1}, \ref{line:calc2}, \ref{line:i'dec}, and \ref{line:sethd} can be implemented in $O(\log n)$ time using a binary tree over the elements of array $\texttt{Cache}_u$ for each $u \in V$. 
We observe that both procedures receive at most $O(n^2 \log^4 n / \eps)$ invocations and since we already bounded the running times of the loops that call them. Thus, the total update time excluding loops can be bound by $O(n^2 \log^5 n/\eps)$. The loops take total time $O(n^2 \log^4 n/\eps)$ by Lemma \ref{lem:iscan}. 
This concludes the proof.
\end{proof}

Using $\log n$ data structures, one for each distance threshold $\tau$, we obtain the following result.

\begin{theorem} \label{thm:mainResultUnweightedUpperBound}
There is a deterministic algorithm that given an unweighted directed graph $G=(V,E)$, subject to edge insertions, a vertex $s\in V$, and $\eps>0$, maintains for every vertex $v$ an estimate $\hat{d}(v)$ such that after every update $d(s,v) \leq \hat{d}(v) \leq (1+\eps)d(s,v)$, and runs in total time $O(n^2 \log^5 n/\eps)$. A query for the approximate shortest path from $s$ to any vertex $v$ can be answered in time linear in the number of edges on the path.
\end{theorem}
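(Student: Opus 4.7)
The plan is to run, in parallel, one copy of the lazy ES-tree $\mathcal{E}_\tau$ from the preceding subsections for each $\tau$ that is a power of $2$ in the range $[1,n]$, and to define the composite estimate $\hat{d}(v) := \min_{\tau} \hat{d}_\tau(v)$, where each $\tau_{max}+1$ sentinel value is treated as $+\infty$. Correctness then splits into two halves. The lower bound $d(s,v) \leq \hat{d}(v)$ is immediate from Lemma~\ref{lem:correct1} applied uniformly across all $\tau$. For the upper bound, fix any $v$ with $d(s,v) < \infty$ and let $\tau^\star$ be the unique power of two with $d(s,v) \in [\tau^\star, 2\tau^\star)$; Lemma~\ref{lem:correct} applied to $\mathcal{E}_{\tau^\star}$ gives $\hat{d}_{\tau^\star}(v) \leq (1+\eps)\, d(s,v)$, hence $\hat{d}(v) \leq (1+\eps)\, d(s,v)$. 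If $d(s,v) = \infty$, no decrement below $\tau_{max}+1$ is ever triggered in any structure, so $\hat{d}(v) = +\infty$ as required.

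For the running time, Lemma~\ref{lma:mainResultUnweightedUpperBound} bounds the total work of each $\mathcal{E}_\tau$ by $O(n^2 \log^5 n/\eps)$ over the whole update sequence, and summing over the $O(\log n)$ choices of $\tau$ yields the stated $O(n^2 \log^5 n/\eps)$ bound (absorbing the $\log n$ factor from summation into the constants/exponents, consistent with the bookkeeping used throughout the section). One small piece of implementation glue is needed: when an edge insertion or query arrives, it must be dispatched to all $\log n$ structures, which costs only $O(\log n)$ per operation and is dominated by the per-structure running times already accounted for.

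The path-reporting guarantee will be obtained by augmenting each $\mathcal{E}_\tau$ with a parent pointer $\pi_\tau(v)$ for every vertex $v$, set to the in-neighbor that last caused $\hat{d}_\tau(v)$ to take its current value. Concretely, $\pi_\tau(v) \leftarrow u$ is written whenever $\textsc{Decrement}(u,v)$ executes and whenever an edge insertion $\textsc{InsertEdge}(u,v)$ first drops $\hat{d}_\tau(v)$ to $\hat{d}_\tau(u)+1$. A straightforward induction on decrement operations shows that at any point in time the chain $v,\pi_\tau(v),\pi_\tau(\pi_\tau(v)),\ldots$ terminates at $s$ and has length exactly $\hat{d}_\tau(v)$. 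To answer a query for the approximate shortest path to $v$, we select the $\tau$ attaining the minimum in $\hat{d}(v)$ and trace parent pointers; this is linear in the number of edges on the returned path. Maintaining these pointers adds only $O(1)$ overhead per decrement and per insertion, so the running time bound is preserved.

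The one place that requires care—essentially the only nontrivial step beyond invoking prior lemmas—is checking that the parent-pointer invariant is consistent with the laziness of the data structure. Specifically, we must ensure that whenever $\pi_\tau(v) = u$ holds, the edge $(u,v)$ is actually present in the graph and satisfies $\hat{d}_\tau(v) \leq \hat{d}_\tau(u) + 1$; the incremental nature of the update sequence (edges never disappear) together with the fact that the pointer is only (re)assigned at the moment $\hat{d}_\tau(v)$ is set to $\hat{d}_\tau(u)+1$ makes this routine, but the check must be performed at every site where $\hat{d}_\tau(v)$ is modified, namely inside $\textsc{InsertEdge}$ and inside $\textsc{Decrement}$.
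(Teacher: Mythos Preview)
Your proposal is correct and follows exactly the paper's approach: the paper's entire proof of this theorem is the single sentence ``Using $\log n$ data structures, one for each distance threshold $\tau$, we obtain the following result,'' so you are simply spelling out the combination of Lemmas~\ref{lem:correct1}, \ref{lem:correct}, and~\ref{lma:mainResultUnweightedUpperBound}, together with a standard parent-pointer scheme for path reporting that the paper leaves implicit. One small correction: the parent-pointer chain from $v$ has length \emph{at most} $\hat{d}_\tau(v)$, not exactly $\hat{d}_\tau(v)$, because after $\pi_\tau(v)\leftarrow u$ is set the estimate $\hat{d}_\tau(u)$ may drop further without $\hat{d}_\tau(v)$ following (due to laziness); the ``at most'' bound is what you actually prove by induction and is all that is needed.
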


\subsubsection{Weighted graphs}

Finally, we show how to extend our data structure to deal with weights $[1, W]$. We first show how to handle edge weights with a linear dependency in the running time on $W$. Then, we employ a standard edge-rounding technique \cite{raghavan1987randomized, cohen1998fast, zwick2002all, bernstein2009fully, madry2010faster, bernstein2016maintaining} that decreases the dependency in $W$ to $\log W$ (we will use a set-up most similar to \cite{bernstein2016maintaining}).

\begin{lemma} \label{lma:mainResultweightedUpperBound}
There is a deterministic algorithm that given a weighted directed graph $G=(V,E, w)$, subject to edge insertions and weight changes, with weights in $[1, W]$, a vertex $s\in V$, and $\eps>0$, maintains for every vertex $v$ an estimate $\hat{d}(v)$ such that after every update $d(s,v) \leq \hat{d}(v) \leq (1+\eps)d(s,v)$ if $d(s,v) \in [\tau, 2\tau)$ for some $\tau \leq n$, and runs in total time $O(n^2 \log^5 n/\eps^{1.5})$. A query for the approximate shortest path from $s$ to any vertex $v$ can be answered in time linear in the number of edges on the path.
\end{lemma}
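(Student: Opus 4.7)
I would prove this lemma by reducing the weighted problem to multiple parallel instances of the unweighted data structure from Theorem~\ref{thm:mainResultUnweightedUpperBound} via a standard edge-rounding technique \cite{bernstein2016maintaining}, applied independently at each distance scale. Specifically, I would maintain $O(\log n)$ instances indexed by $\tau \in \{1, 2, 4, \ldots, n\}$; instance $\mathcal{E}_\tau^w$ is responsible for $(1+\eps)$-approximating $d(s,v)$ whenever $d(s,v) \in [\tau, 2\tau)$, and the overall estimate is $\hat d(v) = \min_\tau \hat d_\tau(v)$. Inside $\mathcal{E}_\tau^w$, I fix a rounding width $\delta_\tau = \eps\tau/(cn)$ for a sufficiently large constant $c$ and, in place of the true weight $w(u,v)$, use the integer rounded weight $\widetilde w_\tau(u,v) = \lceil w(u,v)/\delta_\tau\rceil$ capped at $\tau_{max}^{(\tau)} = O(n/\eps)$. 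On the resulting integer-weighted graph I run a minimally modified version of the unweighted data structure: the only change is that the propagation test in line~\ref{line:ifThenDecrement} and the body of \textsc{Decrement} use the condition $\hat d_\tau(v) > \hat d_\tau(u) + \widetilde w_\tau(u,v)$ in place of $\hat d_\tau(v) > \hat d_\tau(u) + 1$, so one ``logical'' relaxation of $(u,v)$ produces a chain of up to $\widetilde w_\tau(u,v)$ unit decrements of $\hat d_\tau(v)$ down to $\hat d_\tau(u) + \widetilde w_\tau(u,v)$. The arrays $\texttt{Cache}$/$\texttt{Expire}$, the heaviness levels, the forward neighborhoods, the set $H$, and $\textsc{IncreaseHeaviness}$/$\textsc{DecreaseHeaviness}$ are all reused verbatim. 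An edge insertion or weight decrease in $G$ maps to at most one edge insertion or scaled-weight decrease in each $\mathcal{E}_\tau^w$.

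For correctness I would run the standard rounding calculation: for any $s$-to-$v$ path $\pi$ with $k \leq n$ edges, $\delta_\tau \cdot \sum_{e\in\pi}\widetilde w_\tau(e) \in [d(\pi), d(\pi) + k\delta_\tau] \subseteq [d(\pi), d(\pi) + \eps\tau/c]$, so true distances in the scaled graph approximate $d(s,v)/\delta_\tau$ with additive error $\eps\tau/c$. Composing with the $(1+\eps)$-approximation guaranteed by the unweighted data structure on the scaled graph and rescaling $\eps$ by the constant $c$ yields a $(1+\eps)$-approximation whenever $d(s,v) \in [\tau, 2\tau)$, which after taking the minimum over scales covers all $d(s,v) \in [1, 2n]$ as required.

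For the running time I would re-run the proofs of Lemma~\ref{lem:iscan}, Lemma~\ref{lem:insertEdge}, and Lemma~\ref{lma:mainResultUnweightedUpperBound} on the scaled graph, now with $\tau_{max}^{(\tau)} = O(n/\eps)$ in place of $\tau_{max} = O(n)$. Invariant~\ref{inv:time}, Lemma~\ref{lem:sizefn}, and Claim~\ref{clm:iscans} all generalize verbatim, since they depend on $\tau$ only through the ratios between the forward-neighborhood thresholds and the depth, not on the absolute value of $\tau_{max}$. Multiplying the scaled number of $i$-scans by the per-$i$-scan cost still telescopes to $\tilde{O}(n^2)$ loop work per instance, but with extra $\eps$ factors coming from the enlarged per-vertex decrement budget $\tau_{max}^{(\tau)} = O(n/\eps)$. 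Summing over the $\log n$ scales and carefully tracking the $\eps$ dependence --- in particular balancing the per-vertex decrement budget against the multi-unit propagation performed by the modified $\textsc{Decrement}$ --- gives the claimed $O(n^2 \log^5 n/\eps^{1.5})$ total time. The main technical obstacle will be precisely this $\eps$-bookkeeping: verifying that the multi-unit decrement propagation in the modified $\textsc{Decrement}$ does not degrade the charge-to-$H$-insertions amortization in Lemma~\ref{lem:insertEdge}, and that the exponent of $\eps$ works out to exactly $1.5$ rather than a larger power.
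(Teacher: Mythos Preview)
Your proposal has a genuine gap that the paper itself explicitly identifies. You claim that running the unweighted data structure with only the propagation test changed to $\hat d_\tau(v) > \hat d_\tau(u) + \widetilde w_\tau(u,v)$, while ``reusing verbatim'' the heaviness levels, forward neighborhoods, and \textsc{IncreaseHeaviness}/\textsc{DecreaseHeaviness}, yields a $(1+\eps)$-approximation on the integer-weighted rounded graph. The paper's proof calls exactly this the ``almost correct approach'' and shows it fails: Invariant~\ref{inv:correct} (namely $v\in\mathcal{FN}(u) \Rightarrow |\hat d_\tau(v)-\hat d_\tau(u)|\le 2^{h(u)}$) no longer holds, because a high-weight edge $(u,v)$ can leave $v\in\mathcal{FN}(u)$ with $\hat d_\tau(v)-\hat d_\tau(u)$ as large as $\widetilde w_\tau(u,v)$, which after your rounding can be $\Theta(n/\eps)\gg 2^{h(u)}$. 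Since the proof of Lemma~\ref{lem:correct} uses this invariant to bound the error contributed at each heaviness level, the $(1+\eps)$ guarantee you invoke ``on the scaled graph'' is simply not available, and the rounding calculation you compose with it has nothing to compose with.

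The paper's actual fix is not rounding at this stage but two non-trivial structural changes inside each $\mathcal{E}_\tau$: (i) when computing heaviness at level $i$, count only out-neighbors reached by edges of weight $<2^i$, so that forward neighborhoods become weight-aware and Invariant~\ref{inv:correct} is restored up to a factor of~$2$; and (ii) scan an edge $(u,v)$ with $v\notin\mathcal{FN}(u)$ only every $\eps\, w(u,v)$ decrements of $\hat d_\tau(u)$, incurring a per-edge $(1+\eps)$ multiplicative error. The extra $1/\eps$ in the running time comes precisely from bounding the scans in~(ii) across weight classes, and the final $\eps^{-1.5}$ exponent arises from rebalancing these two sources of $\eps$-dependence. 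None of this is captured by your $\eps$-bookkeeping, which is built on a data structure whose correctness argument does not go through. The edge-rounding technique you describe is indeed standard and is used in the paper, but only in the \emph{next} lemma to remove the dependence on $W$; it does not let you treat the weighted $\mathcal{E}_\tau$ as a black-box call to the unweighted one.
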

\begin{proof}
Let us first describe an almost correct approach to modify the data structure $\mathcal{E}_{\tau}$ for unweighted graphs to handle edge weights and maintains shortest-paths of weight at most $\tau_{max}$ as follows: we change the if-condition in line  \ref{line:ifThenhLoop} from $\hat{d}_{\tau}(v)>\hat{d}_{\tau}(u)+1$ to $\hat{d}_{\tau}(v)>\hat{d}_{\tau}(u)+ w(u,v)$ and similarly in line \ref{line:ifThenDecrement} to $\hat{d}_{\tau}(y)>\hat{d}_{\tau}(x)+w(x,y)$. Further, we need to adapt indices in $\texttt{Cache}_u$ and $\texttt{Expire}_u$ accordingly to reflect the additional offset which is straightforward. 

Unfortunately, whilst the running time can still be bound as before, the correctness of the algorithm could no longer be guaranteed since invariant \ref{inv:correct} is no longer true. Recall that the invariant states that if $v\in \mathcal{FN}(u)$ then $|\hat{d}_{\tau}(v) - \hat{d}_{\tau}(u)| \leq 2^{h(u)}$. However, a vertex $u$ might now have a vertex $v$ in its forward-neighborhood at large distance but have a large edge weight on $(u,v)$ so it can not decrease its distance estimate.

However, a rather simple fix suffices: whenever we compute the heaviness $i$ by setting it to 
\[
\argmax_{i \in \mathbb{N}}\{| \texttt{Cache}_u[\texttt{CacheIndex}(u,2^i), \tau_{max}]| \geq (2^i-1) \frac{6 n \log n }{\eps\tau} \}
\]
we now no longer want to take all vertices in $\texttt{Cache}_u[\texttt{CacheIndex}(u,2^i), \tau_{max}]$ into account but only all neighbors $v$ such that the edge $(u,v)$ is of edge weight less than $2^i$ (observe that heaviness levels now depend on different sets). Similarly, we use the restriction on the neighbors for reducing heaviness, and it is only these edges that we then consider to be in the forward neighborhood. It is straightforward to conclude that invariant \ref{inv:correct} can be restored to guarantee that $v\in \mathcal{FN}(u)$ implies $|\hat{d}_{\tau}(v) - \hat{d}_{\tau}(u)| \leq 2 * 2^{h(u)}$. 

However, this change alone is not enough to get good running time. We also stipulate that each edge $(u,v)$ is scanned only every $\eps w(u,v)$ levels if $v \not\in \mathcal{FN}(u)$. It is straightforward to verify that this might induce a multiplicative error of $(1+\eps)$ on every edge. However, by rescaling $\eps$ by a constant factor, we can still conclude that by the restored invariant \ref{inv:correct}, the proof \ref{lem:correct} works as before and guarantees a $(1+\eps)$ multiplicative error on distances in $[\tau, 2\tau)$. 

Now let us bound the running time where we only bound the running time induced by scanning the weighted edges as described above since the bounds on the remaining running time carry seamlessly over from lemma \ref{lma:mainResultUnweightedUpperBound}. It can be verified that invariant \ref{inv:time} is still enforced for our new definition. Thus, if the heaviness is $h(u) = i$ for some vertex $u$, then the number of edges of weight in $(2^j, 2^{j+1}]$ for $j > i$ is at most $(2^j-1) \frac{12 n \log n }{\eps\tau}$. Since we scan these edges only every $\eps 2^j$ decrements of $\hat{d}_{\tau}(u)$, we obtain that the total running time required for all edge scans can be bound by
\[
    \sum_{v \in V} \sum_{j \in (0, \log n]} O\left( \left(2^j \frac{ n \log n }{\eps\tau} \right) \left(\frac{\tau_{max}}{\eps2^j} \right) \right) = O(n^2 \log^2 n /\eps^2).
\]
We point out that rebalancing terms slightly, we can reduce the $\eps$ dependency to $1/\eps^{1.5}$.
\end{proof}

We now prove the following lemma which implies theorem \ref{thm:mainResultUpperBound} as a corollary by maintaining a data structure $\mathcal{E}_{\tau_{hop}, \tau_{depth}}$ with parameters $\tau_{hop} = 2^i$ and $\tau_{depth}=2^j$, for every $i \in [0, \log n)$ and $j \in [0, \log nW)$. We point out that we define \emph{length} subsequently as the number of edges on a path and \emph{weight} as the sum over all edge weights on a path.

\begin{lemma}
There is a deterministic data structure $\mathcal{E}_{\tau_{hop}, \tau_{depth}}$ that given a weighted directed graph $G=(V,E, w)$, subject to edge insertions and weight changes, with weights in $[1, W]$, that takes parameters $\tau_{hop}$ and $\tau_{depth} \geq \tau_{hop}$, a vertex $s\in V$, and $\eps>0$, and maintains for every vertex $v$ with some shortest path in $G$ consisting of $[\tau_{hop}, 2\tau_{hop})$ edges and of weight in $[\tau_{depth}, 2\tau_{depth})$, an estimate $\hat{d}(v)$ such that after every update $d(s,v) \leq \hat{d}(v) \leq (1+\eps)d(s,v)$ and runs in total time $O(n^2 \log^8 n/\eps^{2.5})$. A query for the approximate shortest path from $s$ to any vertex $v$ can be answered in time linear in the number of edges on the path.
\end{lemma}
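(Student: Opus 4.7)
The plan is to obtain $\mathcal{E}_{\tau_{hop},\tau_{depth}}$ by rescaling edge weights so that the range of ``interesting'' distances collapses into a narrow window, and then running (a mild extension of) Lemma \ref{lma:mainResultweightedUpperBound} on the rescaled graph. Fix the rounding scale $\delta = \eps\tau_{depth}/(c\,\tau_{hop})$ for a small absolute constant $c$. First, discard every edge with $w(u,v) > 2\tau_{depth}$: such an edge cannot lie on any $s$-$v$ shortest path of weight strictly below $2\tau_{depth}$, so removing it cannot affect the distances we must certify. Then, for each surviving edge, define $\tilde w(u,v) = \lceil w(u,v)/\delta\rceil$. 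The resulting weights are positive integers bounded by $2\tau_{depth}/\delta + 1 = O(\tau_{hop}/\eps)$, and the graph $\tilde G$ retains the same vertex set.

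Next I would argue correctness. If $P$ is any shortest $s$-to-$v$ path in $G$ with $k\in[\tau_{hop},2\tau_{hop})$ edges and $w(P)\in[\tau_{depth},2\tau_{depth})$, then edge-wise
\[
w(P) \;\le\; \delta\,\tilde w(P)\;\le\; w(P)+k\delta\;\le\; w(P)+2\tau_{hop}\cdot\tfrac{\eps\tau_{depth}}{c\tau_{hop}}\;\le\;(1+O(\eps))\,w(P),
\]
where the last inequality uses $w(P)\ge\tau_{depth}$. Hence distances in $\tilde G$ coincide with distances in $G$ up to a $(1+O(\eps))$ multiplicative factor after rescaling by $\delta$. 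Feeding $\tilde G$ into the data structure of Lemma~\ref{lma:mainResultweightedUpperBound} yields, for every relevant $v$, an estimate $\widehat{d}_{\tilde G}(v)$ that is a $(1+\eps)$-approximation of $d_{\tilde G}(s,v)$; scaling back by $\delta$ (and rescaling $\eps$ by a constant) gives the desired $(1+\eps)$-approximation of $d_G(s,v)$.

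For the running time I invoke Lemma~\ref{lma:mainResultweightedUpperBound} with $\tau$ corresponding to the relevant scaled depth $\Theta(\tau_{hop}/\eps)$. Because the scaled distance of interest is $\Theta(\tau_{hop}/\eps)\le O(n/\eps)$, I only need one internal ES-tree $\mathcal{E}_{\tau}$ (not the $\log n$ of them used in Theorem~\ref{thm:mainResultUnweightedUpperBound}) tuned to the relevant power-of-two value of $\tau$. The argument inside Lemma~\ref{lma:mainResultweightedUpperBound} is parameterised cleanly by $\tau_{max}$ and the maximum edge weight, and the rebalanced running-time estimate $\sum_{v}\sum_j O\!\left(2^j \tfrac{n\log n}{\eps\tau}\cdot\tfrac{\tau_{max}}{\eps 2^j}\right)$ continues to go through when $\tau_{max}=O(\tau_{hop}/\eps)$ and weights run up to $O(\tau_{hop}/\eps)$; the slack absorbs an extra $\log n$ per weight bucket and an extra $\log(n/\eps)=O(\log n)$ per depth doubling, which together with the optimisation of the $\eps$-exponent gives the claimed bound of $O(n^2\log^8 n/\eps^{2.5})$.

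The main obstacle is verifying that the internal running-time argument of Lemma~\ref{lma:mainResultweightedUpperBound}, in particular Invariant~\ref{inv:time}, Lemma~\ref{lem:iscan}, and the final edge-scanning sum, remains quantitatively correct when the weight range and depth range are simultaneously inflated to $O(\tau_{hop}/\eps)$. Concretely, one has to re-examine the bucketing of edges by weight class $(2^j,2^{j+1}]$ (now with $j$ ranging up to $\log(\tau_{hop}/\eps)$ rather than $\log n$) and the corresponding per-vertex bound on the number of $i$-scans, making sure that the extra $\log$ factors and $1/\eps$ factor incurred by this wider range are exactly the $\log^3 n/\eps$ gap between Lemma~\ref{lma:mainResultweightedUpperBound} and the claim. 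Once this bookkeeping is carried out, the correctness step above and the hop-/depth-restricted guarantee follow immediately from the already-established invariants.
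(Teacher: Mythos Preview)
Your proposal is correct and follows essentially the same route as the paper: define the scale $\delta=\eps\tau_{depth}/(c\,\tau_{hop})$ (the paper calls it $\alpha$), round and divide edge weights so that the relevant distances land in an $O(\tau_{hop}/\eps)$ window, verify that rounding costs only a $(1+O(\eps))$ factor using the hop bound $2\tau_{hop}$, and then feed the rescaled graph into the data structure of Lemma~\ref{lma:mainResultweightedUpperBound}. The only cosmetic differences are that the paper keeps the two steps ``round up to a multiple of $\alpha$'' and ``divide by $\alpha$'' separate, does not bother discarding edges of weight above $2\tau_{depth}$, and phrases the invocation as ``run $\mathcal{E}_\tau$ with $\tau=\tau_{hop}$ to depth $8\tau_{hop}/\eps$'' (so the heaviness thresholds stay anchored to $\tau_{hop}$ while the depth grows), explicitly noting this costs an extra $1/\eps$ factor over Lemma~\ref{lma:mainResultweightedUpperBound}; your parameterisation with $\tau=\Theta(\tau_{hop}/\eps)$ amounts to the same thing.
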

\begin{proof}
Let us start by defining some constant $\alpha = \frac{ \eps \tau_{depth} }{ \tau_{hop} }$ (we assume that $\alpha$ is integer by slightly perturbing $\eps$). Then, we let $G_{\alpha}$ be the graph $G$ after rounding each edge up to the nearest multiple of $\alpha$. We claim that for every vertex $t \in V$, for which we have a shortest path $\pi_{s,t}$ from $s$ to $t$ of length in $[\tau_{hop}, 2\tau_{hop})$ and weight in $[\tau_{depth}, 2\tau_{depth}]$, we have
\[
    w_{G_{\alpha}}(\pi_{s,t}) \leq w_G(\pi_{s,t}) \leq (1+2\eps)w_{G_{\alpha}}(\pi_{s,t}).
\]
To see this observe that each edge incurs additive error at most $\alpha$. However, since the path is of length at most $2 \tau_{hop}$, the additive error has to be bound by $2\alpha \tau_{hop} =  2\frac{ \eps \tau_{depth} }{ \tau_{hop} } \tau_{hop} = 2 \eps \tau_{depth}$. But since the path $\pi_{s,t}$ is of weight at least $\tau_{depth}$, we have overall at most a $(1+3\eps)$-approximation and therefore by rescaling $\eps$ by a constant factor, the claim follows.

Next, we let $G^*_{\alpha}$ be the graph $G_{\alpha}$ where each edge is scaled down by factor $\alpha$ and note that weights are all integral and positive. We next claim that for every vertex $t \in V$, for which we have a shortest path $\pi_{s,t}$ from $s$ to $t$ of length in $[\tau_{hop}, 2\tau_{hop})$ and weight in $[\tau_{depth}, 2\tau_{depth}]$, we have
\[
    w_{G^*_{\alpha}}(\pi_{s,t}) \leq \tau_{hop}/\eps
\]
To see this, observe that the path $\pi_{s,t}$ in $G_{\alpha}$ has weight at most $(1+2\eps)2\tau_{depth}$ by our preceding claim. Thus, scaling it down by $\alpha$, the path has weight at most
\[
    (1+2\eps)2\tau_{depth} / \alpha = (1+2\eps)2\tau_{depth} \frac{ \tau_{hop} }{ \eps \tau_{depth} } = (1+2\eps)2 \tau_{hop} / \eps \leq 8 \tau_{hop} / \eps.
\]
in $G^*_{\alpha}$. It now remains to run a data structure $\mathcal{E}_{\tau}$ on $G^*_{\alpha}$ with $\tau = \tau_{hop}$ as described in Theorem \ref{lma:mainResultweightedUpperBound}, however run to depth $8\tau_{hop}/ \eps$ (instead of $\tau_{max}$ which increases the running time by an $1/\eps$ factor. We then forward for each vertex $t$, the distance estimate $\hat{d}_{\tau}(t)$ scaled up by $\alpha$. This concludes the lemma.
\end{proof}

\section{Fine-grained lower bounds for partially dynamic s-t Shortest Paths}
\label{sec:fineGrained}

In this section we present several conditional lower bounds for the $s$-$t$ Shortest Paths problem in the partially dynamic, i.e. incremental or decremental, setting.
In the incremental setting, the assumption is that one starts with an empty graph and $m$ edges are inserted one by one. In the decremental setting, one is given an initial $m$-edge graph, and then its edges are deleted one by one in some order until the empty graph is reached. We will assume that no preprocessing is done, and that all the work of the algorithm is done in the updates and queries, however, in some cases we will be able to allow arbitrary polynomial preprocessing time. Our lower bounds are based on several popular hypotheses. All hypotheses are for the Word-RAM model of computation with $O(\log n)$ bit words.

The first, the {\em BMM Hypothesis} is a hypothesis about ``combinatorial'' algorithms, simple algorithms that do not use the heavy machinery of fast matrix multiplication (as in \cite{Sch81,Co97,cw90,vstoc12,dstothers,legallmmult}). The hypothesis (see e.g. \cite{popularconj,vsurvey}) states that in the Word-RAM model with $O(\log n)$ bit words, any combinatorial algorithm for computing the product of two $n\times n$ Boolean matrices requires $n^{3-o(1)}$ time. Due to the subcubic fine-grained equivalence of Boolean Matrix Multiplication (BMM) and Triangle detection \cite{williams2018subcubic}, the hypothesis is equivalent to: any combinatorial algorithm for Triangle detection in $n$-vertex graphs requires $n^{3-o(1)}$ time in the Word-RAM model of computation with $O(\log n)$ bit words.

A generalization of the BMM hypothesis is the Combinatorial $k$-Clique Hypothesis for constant $k\geq 3$ that asserts that any combinatorial algorithm for $k$-Clique detection in $n$-vertex graphs requires $n^{k-o(1)}$ time in the Word-RAM model of computation with $O(\log n)$ bit words. When one removes the restriction to combinatorial algorithms, the $k$-Clique Hypothesis becomes that the current fastest $k$-Clique algorithms are essentially optimal. For $k$ divisible by $3$, the assertion is that $n^{\omega k/3-o(1)}$ time is necessary (see \cite{AbboudBW15a,AbboudBW18,BringmannGMW18} for examples where this hypothesis is used). 

For $k$ not divisible by $3$, the best known running times for $k$-Clique are not as clean. For instance, for $4$-Clique the fastest known running time is $O(n^{3.252})$ using the fastest known rectangular matrix multiplication algorithm by Le Gall and Urrutia~\cite{legallurrutia}. As long as $\omega>2$, this algorithm would run in $O(n^{3+\delta})$ time for some $\delta>0$, i.e. in truly supercubic time.
Thus, the following quite weak $4$-Clique Hypothesis would be quite plausible: There is a $\delta>0$ so that $n^{3+\delta-o(1)}$ is needed to detect a $4$-Clique in an $n$-node graph. Looking at the current best $4$-Clique algorithms, of course, the $4$-Clique Hypothesis is plausible even for $\delta=0.252$. 

Another way to circumvent the ``combinatorial'' nature of the BMM Hypothesis when using it for lower bounds on dynamic algorithms, is to instead use the Online Matrix Vector Multiplication (OMv) Hypothesis of Henzinger et al. \cite{henzinger2015unifying}. The OMv Hypothesis is: Given an $n\times n$ Boolean matrix $A$, any algorithm that preprocesses $A$ in poly$(n)$ time needs total $n^{3-o(1)}$ time to answer $n$ online queries that give a length $n$ Boolean vector $v$ and ask for the Boolean product $Av$. The OMv Hypothesis is known to imply the related so called OuMv Hypothesis: Given an $n\times n$ Boolean matrix $A$, any algorithm that preprocesses $A$ in poly$(n)$ time needs total $n^{3-o(1)}$ time to answer $n$ online queries
$(u,v)$ where $u$ and $v$ are length $n$ Boolean vectors by returning the Boolean product $u^TA v$ right after $(u,v)$ is given.

We can generalize OuMv to define an analogous problem capturing $4$-Clique. Define OMv3 to be the following problem: Given an $n\times n$ Boolean matrix $A$, preprocess it so that $n$ queries of the following form can be answered online: the queries consist of three $n$ length Boolean vectors $u,v,w$, and the answer of the query should be the Boolean value 
$$\bigvee_{i,j,k} (u_i \wedge v_j \wedge w_k\wedge A[i,j] \wedge A[j,k]\wedge A[k,i]).$$

It is not hard to reduce $4$-Clique to OMv3, even when the queries are given offline: we can assume that $4$-Clique is given on a $4$-partite graph with partitions $V_1,V_2,V_3,V_4$. Let $A$ be the adjacency matrix of the subgraph induced by $V_1,V_2,V_3$, and for each $x\in V_4$, we can define the three Boolean vectors $u^x,v^x,w^x$, where $u^x[j]=1$ only if $x\in V_1$ and $(x,j)$ is an edge,  $v^x[j]=1$ only if $x\in V_2$ and $(x,j)$ is an edge, and  $w^x[j]=1$ only if $x\in V_3$ and $(x,j)$ is an edge. Then $(u^x_i \wedge v^x_j \wedge w^x_k\wedge A[i,j] \wedge A[j,k]\wedge A[k,i])= 1$ only when $(i,j,k,x)\in V_1\times V_2\times V_3\times V_4$ and $(i,x),(j,x),(k,x),(i,j),(j,k),(i,k)$ are all edges, i.e. whenever $(i,j,k,x)$ is a $4$-Clique.

Now, similarly to OuMv, since the queries to OMv3 are given in an online fashion, the problem seems harder than $4$-Clique.  The simple way to solve the problem, when given $u,v,w$ seems to be to take the submatrices $A^1,A^2,A^3$ where $A^1$ restricts to the rows that $u$ is one and columns that $v$ is one,  $A^2$ restricts to the rows that $v$ is one and columns that $w$ is one and $A^3$ restricts to the rows that $w$ is one and columns that $u$ is one, and then to compute the trace of $A^1\cdot A^2\cdot A^3$ in $O(n^\omega)$ time. In particular, there seems to be no way to use the fact that rectangular matrix multiplication can be done faster than by splitting into square blocks and using the fast square matrix multiplication algorithms.

We can thus make the following very plausible OMv3 Hypothesis, similar to the OuMv one, that any algorithm with polynomial preprocessing time needs $n^{\omega+1-o(1)}$ total time to solve OMv3.

The last hypothesis we will use concerns the $k$-Cycle problem (for constant $k$): given an $m$-edge graph, determine whether it contains a cycle on $k$ vertices.
All known algorithms for detecting $k$-cycles in directed graphs with $m$ edges run at best in time $m^{2-c/k}$ for various small constants $c$ \cite{YuZw04,AlYuZw97,lincolnsoda18,patternscycles19}, even using powerful tools such as fast matrix multiplication. Ancona et al.~\cite{ancona2019} formulated
a natural hypothesis completely consistent with the state of the art of cycle detection. This {\em $k$-Cycle Hypothesis} states that (in the Word-RAM model), for every constant $\eps>0$, there exists a constant $k$, so that there is no $O(m^{2-\eps})$ time algorithm that can find a $k$-cycle in an $m$-edge graph.

\subsection{Hardness from $k$-Cycle}
The {\em $k$-Cycle Hypothesis} states that (in the Word-RAM model), for every constant $\eps>0$, there exists a constant $k$, so that there is no $O(m^{2-\eps})$ time algorithm that can find a $k$-cycle in an $m$-edge graph.


We will reduce $k$-Cycle in $m$-edge, $n$-node graphs to incremental $s$-$t$ SP in undirected or directed graphs, where one starts with an empty graph and inserts $O(m)$ edges, performing $O(n)$ queries. As $n=O(m)$ in connected graphs, the $k$-Cycle Hypothesis implies that as $k$ grows,
the amortized update/query time must be at least $m^{1-o(1)}$.

\begin{theorem}\label{thm:kcycred}
In the word-RAM model with $O(\log m)$ bit words, under the $k$-Cycle Hypothesis,  there can be no constant $\eps>0$ such that incremental $s$-$t$ SP in directed or undirected $m$-edge graphs can be solved with $O(m^{2-\eps})$ preprocessing time and $O(m^{1-\eps})$ amortized update and query time.
\end{theorem}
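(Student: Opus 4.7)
The plan is to establish the contrapositive: assuming a deterministic incremental $s$-$t$ SP algorithm with $O(m^{2-\eps})$ preprocessing and $O(m^{1-\eps})$ amortized update-plus-query time, I will derive an $O(m^{2-\eps})$ algorithm for $k$-Cycle for every constant $k$, contradicting Hypothesis~\ref{hyp:cycle} once $k$ is chosen large enough relative to $\eps$. The high-level strategy is the standard one for cycle-detection lower bounds: color-code the input, build a layered ``unfolded'' graph where cycles become $s$-$t$ paths, feed the edges of $G$ into the dynamic data structure as insertions, and use the sequence of $s$-$t$ distance queries to detect a cycle.

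More concretely, first I would apply color coding (with constant overhead $k^{O(k)}$) to partition $V(G)$ into classes $V_1,\dots,V_k$ so that every fixed $k$-cycle becomes rainbow with constant probability; derandomization via universal hash families keeps the reduction deterministic. Second, I would construct a layered graph $H$ with layers $L_1,\dots,L_{k+1}$, setting $L_i$ to a copy of $V_i$ for $i\le k$ and $L_{k+1}$ to a second copy of $V_1$; an edge of $G$ between consecutive color classes becomes the corresponding directed edge between consecutive layers in $H$. In this construction, a rainbow $k$-cycle through $v\in V_1$ in $G$ corresponds precisely to a length-$k$ directed path in $H$ from the $L_1$-copy of $v$ to the $L_{k+1}$-copy of $v$. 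Third, I would attach a single global source $s$ and sink $t$ via per-vertex identity gadgets, then simulate the $k$-Cycle query by inserting the edges of $G$ (and auxiliary activation edges) into $H$ one at a time and querying $d_H(s,t)$ after each insertion; the total of $O(m)$ insertions and $O(m)$ queries, combined with the hypothetical bounds, yields an $O(m^{2-\eps})$ algorithm for $k$-Cycle.

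The main obstacle is designing the identity gadgets so that a short $s$-$t$ path actually certifies a \emph{closed} $k$-walk rather than merely a color-coded $k$-walk from some $v\in V_1$ to a \emph{different} $v'\in V_1$. A naive construction that connects $s$ to every $L_1$-copy and every $L_{k+1}$-copy to $t$ fails badly, since the existence of any such open walk can be checked by a single BFS in $O(m)$ time and hence cannot drive an $m^{2-o(1)}$ lower bound. To fix this, the gadget must enforce a \emph{matching} condition on the endpoints: when $(s,v_{L_1})$ and $(v'_{L_{k+1}},t)$ are both active, the resulting $s$-$t$ path should be short only when $v=v'$. I would enforce this by enumerating candidate closing edges $(u,v)\in V_k\times V_1$ in an order that adds activation edges in paired blocks, using an encoding of vertex identity via distinct edge weights drawn from a Sidon set (so that $2w_v\ne w_{v_a}+w_{v_b}$ whenever $v_a\ne v_b$), or by an analogous unweighted construction with shared path prefixes that keeps the total graph size $O(m)$ while making matching paths strictly shorter than crossing paths.

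Given a correct identity gadget, the running time analysis is routine: preprocessing of $H$ costs $O(m^{2-\eps})$; the $O(m)$ insertions and $O(m)$ queries contribute a further $O(m)\cdot O(m^{1-\eps})=O(m^{2-\eps})$; the color-coding stage is a constant-factor overhead. Taking $k$ so large that Hypothesis~\ref{hyp:cycle} forbids $O(m^{2-\eps})$ algorithms for $k$-Cycle, the existence of the assumed incremental $s$-$t$ SP algorithm is contradicted. The argument applies uniformly to the directed and undirected settings because the layered graph is naturally directed and undirecting the edges does not create shortcut paths of length less than $k+O(1)$.
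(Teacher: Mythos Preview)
Your overall strategy---color-code, build the layered graph $W_1,\dots,W_{k+1}$, and run the incremental $s$-$t$ SP algorithm with $O(m)$ insertions and queries---matches the paper's proof. You also correctly isolate the crux: the gadget joining $s$ and $t$ to the layered graph must force the two endpoints in $W_1$ and $W_{k+1}$ to be copies of the \emph{same} vertex.

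The Sidon-set fix as you state it does not work. The condition $2w_v\neq w_{v_a}+w_{v_b}$ gives distinctness of sums, not an ordering: even when a closed walk through $v$ exists (yielding a path of weight $2w_v+k$), an open walk from $v_a$ to $v_b$ with $w_{v_a}+w_{v_b}<2w_v$ can still be the shortest $s$-$t$ path, so the distance query returns a crossing value and you wrongly report ``no cycle.'' What actually makes the enumeration sound is not the Sidon property but the \emph{insertion order}: if you activate the vertices of $V_1$ in \emph{decreasing} weight order and at stage $i$ ask whether $d(s,t)=2w_{v_i}+k$, then every previously activated $v_j$ has $w_{v_j}>w_{v_i}$, so every crossing path and every earlier matching path is strictly longer. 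Distinct weights suffice; no Sidon set is needed.

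The paper's construction is exactly the unweighted realization of this idea---your ``shared path prefixes'' made concrete. It attaches a path $s_1\text{--}s_2\text{--}\cdots\text{--}s_n$ on the source side and $t_n\text{--}\cdots\text{--}t_1$ on the sink side, and at stage $i$ inserts the two edges $(s_{n+1-i},v_{n+1-i})$ and $(v'_{n+1-i},t_{n+1-i})$. Since reaching $s_{n+1-j}$ from $s_1$ costs $n-j$, the most recently inserted activation edge is always the cheapest on each side, and the $s_1$-$t_1$ distance at stage $i$ equals $2(n-i)+k+2$ iff a $k$-cycle passes through $v_{n+1-i}$. Two minor corrections to your sketch: enumerate over vertices of $V_1$ (giving $n$ stages and $n$ queries), not over closing edges $(u,v)\in V_k\times V_1$; and the undirected case works not because ``undirecting does not create shortcuts'' in general, but because the layered structure together with the $s$- and $t$-paths makes any backward step strictly lengthen the walk.
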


An analogous theorem holds in the decremental setting. We omit the details, but essentially one runs the reduction in reverse.

We note that with very minor modification, our reduction can be made to go from {\em minimum weight} $k$-cycle to incremental or decremental shortest $s$-$t$ path in weighted graphs. Lincoln et al.~\cite{lincolnsoda18} showed that under very believable assumptions (that min weight $k$-clique and also clique in hypergraphs require $n^{k-o(1)}$ time), min weight $k$-cycle requires $m^{2-1/k-o(1)}$ time, and hence Theorem~\ref{thm:kcycred} holds under even more standard assumptions for weighted graphs. For unweighted graphs, we do need the unweighted $k$-Cycle assumption. Even though this assumption has so far not been related to other standard hardness hypotheses, it is believable and completely consistent with the current state of algorithms.

We will now prove Theorem~\ref{thm:kcycred}. The reduction is the natural extension of the reduction from Triangle detection to $s$-$t$ SP in \cite{popularconj}. See Figure \ref{fig:kcyc}.

\begin{figure}[h]
  \centering
  \includegraphics[width=.7\linewidth]{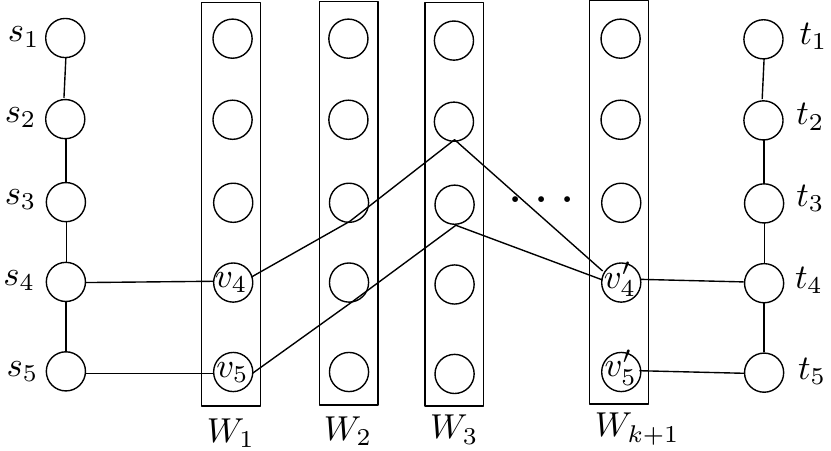}
  \caption{The edges between $W_i$ and $W_{i+1}$ are the edges of the original graph between $V_i$ and $V_{i+1}$. The figure shows the state of the dynamic graph at stage $4$ when one is searching for a $k$-cycle including $v_4$. A path from $s_1$ to $t_1$ that goes through edge $(s_5,v_5)$ instead of $(s_4,v_4)$ will have length at least $9+k$, whereas if there is a $k$-cycle through $v_4$, the shortest path will use $(s_4,v_4)$ and $(v'_4,t_4)$ and will have length $8+k$.}
  \label{fig:kcyc}
\end{figure}

First, suppose that incremental $s$-$t$ SP can be solved with $O(m^{2-\eps})$ preprocessing time and $O(m^{1-\eps})$ update and query time for some constant $\eps>0$. For that $\eps$, let $k$ be such that the $k$-Cycle Hypothesis asserts that there is no $O(m^{2-\eps})$ time algorithm for $k$-Cycle in $m$-edge graphs. We will obtain a contradiction via our reduction.

Let $G$ be an $m$-edge, $n$-vertex graph in which we want to find a $k$-cycle. First we use color-coding \cite{colorcoding,colorcodingj,colorcodingenc} so that with polylogarithmic time overhead, we can assume that the vertices of $G$ are partitioned into $V_1,V_2,\ldots,V_k$, so that if $G$ contains a $k$-Cycle, one such cycle has its $i$th vertex in $V_i$, for each $i\in \{1,\ldots,k\}$.

Now, the vertices of our incremental graph will be as follows:
\begin{itemize}
\item For every $i\in \{1,\ldots,k\}$, there is a set of vertices $W_i$ that contains for every $v\in V_i$  a vertex $v\in W_i$ representing it (slight abuse of notation here).
\item Another copy of the vertices of $V_1$ in a set $W_{k+1}$. Call the copy of $v\in V_1$ in $W_{k+1}$, $v'$.

\item A source vertex $s_1$, followed by vertices $s_1,\ldots, s_n$, all connected in a path.
\item A sink vertex $t_1$, preceded by vertices $t_n,\ldots,t_{2}$, all connected in a path $t_n\rightarrow t_{n-1}\rightarrow \ldots \rightarrow t_1$.
\end{itemize}

Besides the path edges above, the remaining edges to be inserted before any queries are as follows:
 For every $i\in \{1,\ldots,k\}$, for every $u\in W_i,v\in W_{i+1}$, insert $(u,v)$ as an edge if $(u,v)$ was an edge of $G$.

Notice now that due to the color-coding, we can assume that to detect a $k$-Cycle in $G$ we only need to check whether for some $v\in W_1$ and its copy $v'\in W_{k+1}$ there is a path of length $k$. Because of the layering, the distance between $v$ and $v'$ is $k$ if there is a $k$-cycle in $G$ going through $v$ and it is $> k$ otherwise.

Now, the rest of the dynamic stages proceed as follows. Let the vertices of $W_1$ be $v_1,\ldots,v_n$, and their corresponding copies in $W_{k+1}$ be $v'_1,\ldots,v'_n$. The stages go from $1$ to $n$. In stage $i$, we insert an edge between $s_{n+1-i}$ and $v_{n+1-i}\in W_1$ and an edge between $v'_{n+1-i}$ and $t_{n+1-i}$. Then we query the distance between $s_1$ and $t_1$.

Now, at stage $i$, we have edges between $s_{n+1-j}$ and $v_{n+1-j}$ and between $t_{n+1-j}$ and $v'_{n+1-j}$ for all $j\in \{1,\ldots,i\}$.

The shortest path from $s_1$ to $t_1$ looks like this: go from $s_1$ to $s_{n+1-j}$ (for some $j\in \{1,\ldots,i\}$) using the path of $s$-nodes, then take an edge to $W_1$, go through the layers $W_1-W_{k+1}$ to a node of $W_{k+1}$ and then to $t_{n+1-r}$ (for some $r\in \{1,\ldots,i\}$) and then to $t_1$. Since going from a vertex in $W_{1}$ to a vertex in $W_{k+1}$ gives distance at least $k$, the length of this path is at least $(n-j)+2+k+(n-r)$. 
If one of $j$ or $r$ is not equal to $i$ (i.e. it is $<i$), the length of the path is $>2(n-i)+k+2$. If $G$ contains a $k$-Cycle through $v_{n+1-i}$, however, there is a path from $s_1$ to $t_1$ going through $v_{n+1-i}, v'_{n+1-i}$ and the edges of the $k$-Cycle, having total length exactly $2(n-i)+k+2$. 

Thus, in stage $i$, the distance between $s_1$ and $t_1$ is $2(n-i)+k+2$ if there is a $k$-Cycle through $v_{n+1-i}$, and otherwise the distance is $>2(n-i)+k+2$.
The total number of edge insertions is $O(m+n)=O(m)$ and the number of queries is $O(n)=O(m)$. Thus our supposedly efficient incremental algorithm would solve the $k$-Cycle problem in time $\tilde{O}(m\cdot m^{1-\eps})=\tilde{O}(m^{2-\eps})$ time, a contradiction.

\subsection{Hardness from OMv3 and 4-Clique}
A weakness of the reduction from $k$-Cycle detection is that the result is only meaningful when the preprocessing time used by the algorithms is $O(m^{2-\eps})$ for some $\eps>0$. For incremental algorithms, one could argue that since one starts with an empty graph, it is unclear how preprocessing could help at all. For decremental graphs however, one knows all the edges so preprocessing could help.

We present a reduction from OMv3 that (1) allows for arbitrary polynomial preprocessing, and (2) gives a higher conditional lower bound than the $m^{0.5-o(1)}$ amortized update/query lower bound that follows from OMv \cite{henzinger2015unifying}, as long as $\omega>2$.

Moreover, even if instead of the OMv3 Hypothesis we only use the $4$-Clique Hypothesis (still via the same reduction below as OMv3), we still obtain a higher than $m^{0.5-o(1)}$ update lower bound.

We will prove Theorem~\ref{thm:OMv3tost} below.

\begin{theorem}\label{thm:OMv3tost}
Suppose that incremental or decremental $s$-$t$ Shortest Paths can be maintained with $P(m)$ preprocessing time and $u(m)$ amortized update and query time, then OMv3 can be solved with $P(O(n^2))$ preprocessing time and $n^2\cdot u(O(n^2))$ total query time.
\end{theorem}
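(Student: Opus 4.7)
The plan is to reduce the online OMv3 problem to incremental $s$-$t$ shortest paths on a weighted directed graph with $m=O(n^2)$ edges, by combining the $k$-cycle source/sink-path trick of Theorem~\ref{thm:kcycred} (specialised to $k=3$) with per-index \emph{anchor} gadgets and per-query \emph{weighted middle gates}. The analogous statement in the decremental setting will be handled at the end by running the entire construction in reverse, exactly as noted after Theorem~\ref{thm:kcycred}.

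During preprocessing I would build the following static graph $G$. Encode $A$ as a tripartite structure $V_1,V_2,V_3$: split each $j\in V_2$ into $j^{\mathrm{in}},j^{\mathrm{out}}$ and each $k\in V_3$ into $k^{\mathrm{in}},k^{\mathrm{out}}$, and pre-build the ``straight'' middle edges $j^{\mathrm{out}}\to k^{\mathrm{in}}$ iff $A[j,k]=1$. For every $i\in[n]$ add an \emph{anchor pair} $(\tilde i_1,\tilde i_2)$ with pre-built edges $\tilde i_1\to j^{\mathrm{in}}$ iff $A[i,j]=1$ and $k^{\mathrm{out}}\to\tilde i_2$ iff $A[k,i]=1$; so a $\tilde i_1\to\cdots\to\tilde i_2$ path carries the \emph{same} $i$ at both ends. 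For each $(q,j)$ and $(q,k)$ with $q\in[n]$ introduce pre-built gate vertices $g^q_j$ and $h^q_k$ with single incoming edges $j^{\mathrm{in}}\to g^q_j$ and $k^{\mathrm{in}}\to h^q_k$ of weight $C(n-q)$ for a small constant $C\geq 2$. Finally add unit-weight source and sink paths $s=s_0\to s_1\to\cdots\to s_{n^2}$ and $t_{n^2}\to\cdots\to t_0=t$. The total number of pre-built edges is $O(n^2)$.

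OMv3 queries are answered online. For $q=1,\dots,n$ with $(u^q,v^q,w^q)$, first \emph{open} the query's gates by inserting the dynamic zero-weight edges $g^q_j\to j^{\mathrm{out}}$ for each $j$ with $v^q_j=1$ and $h^q_k\to k^{\mathrm{out}}$ for each $k$ with $w^q_k=1$ ($O(n)$ insertions). Then, for each $i\in[n]$ with $u^q_i=1$, set $p=(q-1)n+i$, insert the two edges $s_{n^2-p}\to\tilde i_1$ and $\tilde i_2\to t_{n^2-p}$, and query $d(s,t)$. Declare OMv3 query $q$ to be YES iff for some such $i$ the returned distance equals the target $L(q,i):=2(n^2-p)+5+2C(n-q)$; otherwise NO. The whole algorithm uses $O(n^2)$ update and query operations on the $s$-$t$ SP data structure, giving total time $P(O(n^2))+O(n^2)\cdot u(O(n^2))$ as required. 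The decremental case is handled analogously by starting from the graph obtained after all of the above insertions and deleting the edges in reverse stage order.

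The main obstacle is that naively the shortest path of length close to $L(q,i)$ could be realised by a path $\tilde{i'}_1\to j\to k\to\tilde{i''}_2$ with $i'\neq i''$ (a $3$-walk that is \emph{not} a triangle), or by an older $v^{q'}$/$w^{q'}$ gate with $q'<q$; either would produce a false OMv3 positive. Two isolation mechanisms work together to rule these out. (a) \emph{Anchor and position isolation:} since the anchor pair $(\tilde i_1,\tilde i_2)$ is per-$i$ and the $n^2$ source/sink positions are distinct, at sub-stage $(q,i)$ the only entry and exit edges incident to $s_{n^2-p}$ and $t_{n^2-p}$ are precisely to $\tilde i_1$ and from $\tilde i_2$; any ``mixed'' path with $i'\neq i''$ must use an earlier source position, and exactly as in the proof of Theorem~\ref{thm:kcycred} this strictly increases the outside-length. (b) \emph{Weighted gate isolation:} the pre-built weight $C(n-q)$ implements the $k$-cycle-style stage isolation in the middle layer, so a path traversing $j^{\mathrm{in}}\to g^{q^*}_j\to j^{\mathrm{out}}$ pays $C(n-q^*)$, which is minimised by the largest $q^*\leq q$ with $v^{q^*}_j=1$, and any stale gate with $q'<q$ costs $\geq C$ extra. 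Combining (a) and (b), a routine case analysis shows that $d(s,t)=L(q,i)$ iff there exist $j,k$ with $A[i,j]=A[j,k]=A[k,i]=1$ and $v^q_j=w^q_k=1$; together with $u^q_i=1$ this is exactly the OMv3 predicate at index $i$, and OR-ing over $i$ with $u^q_i=1$ recovers the OMv3 answer for query $q$.
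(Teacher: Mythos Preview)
Your incremental reduction is essentially correct, but it proves a weaker statement than the paper's, and the decremental half has a real gap.

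\textbf{Weighted versus unweighted.} Your construction relies on edge weights $C(n-q)$ on the gate edges $j^{\mathrm{in}}\to g^q_j$ and $k^{\mathrm{in}}\to h^q_k$. This means your reduction targets the \emph{weighted} $s$-$t$ Shortest Paths problem. The paper's construction is entirely unweighted (and works undirected): in place of your weighted per-query gates, it builds for each of $u,v,w$ (and a second copy for $u$) a gadget in which every index $i\in[n]$ carries a length-$n$ path $(i,1),\dots,(i,n)$, and on query $\ell$ it \emph{inserts} the shortcut $(i,0)\to(i,\ell)$ whenever the relevant vector is $1$ at $i$. The distance $(i,0)\to(i,n)$ then becomes exactly $n+1-\ell$, which plays the role of your weight $C(n-q)$, but now with unit edges. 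The total edge count stays $O(n^2)$ because each gadget has $O(n^2)$ path edges and at most $n^2$ shortcuts are ever inserted. Your weighted gates cannot be unrolled into unit-weight paths within the $O(n^2)$ budget (that would cost $\Theta(n^3)$ edges), so to recover the paper's unweighted lower bound you would need to switch to this path-gadget idea.

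\textbf{Decremental case.} Your sentence ``starting from the graph obtained after all of the above insertions and deleting the edges in reverse stage order'' does not work for OMv3. That trick is fine after Theorem~\ref{thm:kcycred} because $k$-Cycle is an \emph{offline} problem: the entire instance is known before any updates, so the final graph can be built and then peeled in reverse. OMv3, by contrast, is \emph{online}: the triples $(u^q,v^q,w^q)$ arrive one at a time, and the initial decremental graph must be constructed from $A$ alone. You cannot place the edges $g^q_j\to j^{\mathrm{out}}$ only for $j$ with $v^q_j=1$ before $v^q$ is revealed. The standard fix is to include \emph{all} possible dynamic edges in the initial graph and, on query $q$, delete those with $v^q_j=0$ (respectively $w^q_k=0$, $u^q_i=0$), then delete the remaining level-$q$ edges after the query is answered. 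For this to give the right distances you must also reverse the monotone direction of your weights (use $Cq$ rather than $C(n-q)$, or in the paper's unweighted version reverse the path orientation inside each gadget), since otherwise the still-present ``future'' gates for $q'>q$ are strictly cheaper and destroy the isolation. None of this is hard, but it is not ``run in reverse,'' and you should spell it out.
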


If we assume the OMv3 Hypothesis, then we obtain that any incremental/decremental $s$-$t$ Shortest paths algorithm with polynomial preprocessing time needs $m^{(\omega-1)/2-o(1)}$ amortized update or query time. For the current value of $\omega$, the update lower bound is $\Omega(m^{0.686})$.

If we assume the $4$-Clique hypothesis we obtain that there exists a $\delta>0$ such that any incremental/decremental $s$-$t$ Shortest paths algorithm needs either $m^{(3+\delta)/2-o(1)}$ preprocessing time, or $m^{(1+\delta)/2-o(1)}$ amortized update or query time. For the current value of $\delta$, this update lower bound is $\Omega(m^{0.626})$.

We now begin the proof of Theorem~\ref{thm:OMv3tost}.
We begin with a gadget that encodes the row/column indices $i\in [n]$ of $A$ and dynamically encodes the $n$ queries $(u^1,v^1,w^1),\ldots,(u^n,v^n,w^n)$ when they come.

We will describe the gadget $G(u)$ which will encode the $u^i$s.  The gadgets $G(v)$ and $G(w)$ are analogous. See Figure~\ref{fig:inner}.
$G(u)$ consists of $n(n+1)$ vertices: every $i\in [n]$ gets $n+1$ copies $(i,0),(i,1),\ldots, (i,n)$. The vertices $(i,1),\ldots,(i,n)$ for each particular $i$ are chained together in a path, so that for all $j\in \{1,\ldots,n-1\}$ there is an edge between $(i,j)$ and $(i,j+1)$. This describes $G(u)$ (and also $G(v),G(w)$) before any queries come. On query $u^\ell$, one inserts an edge from $(i,0)$ to $(i,\ell)$ for each $i$ for which $u^\ell[i]=1$. (The insertions for $G(v)$ and $G(w)$ are analogous but with $v^\ell$ and $w^\ell$ instead of $u^\ell$, respectively.)

 \begin{figure}[h]
  \centering
  \includegraphics[width=.7\linewidth]{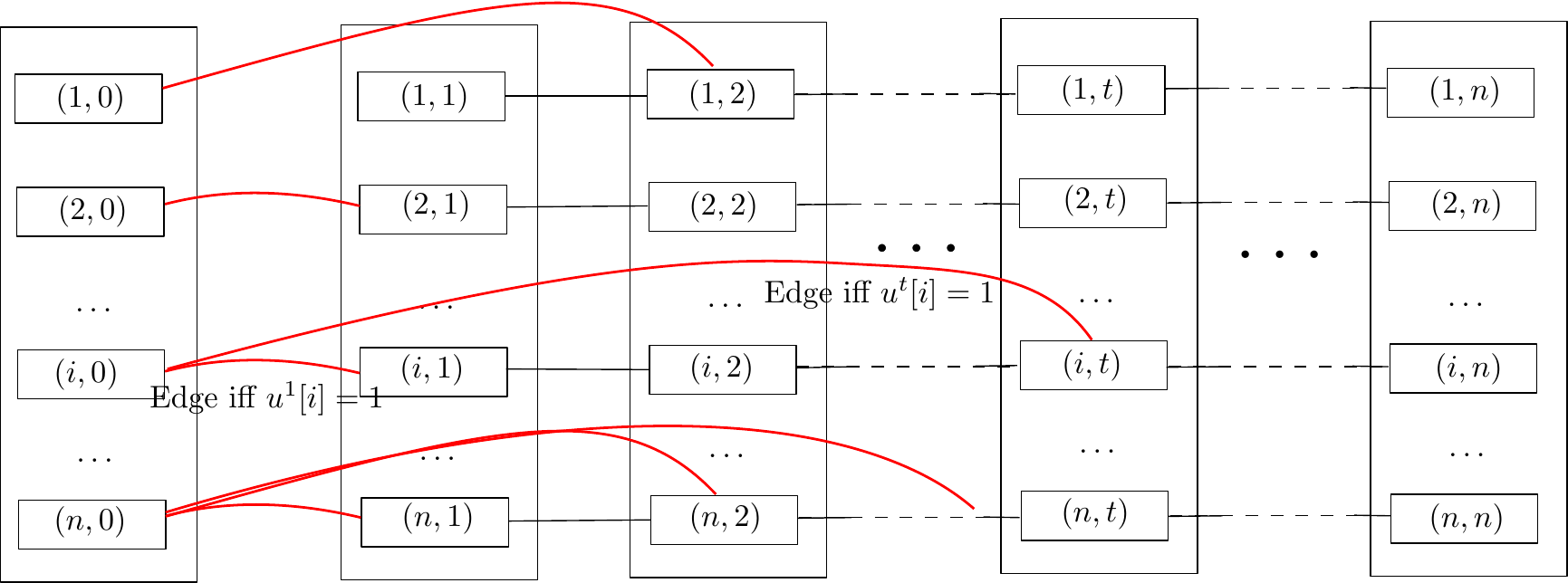}
  \caption{The gadget $G(u)$ encoding the positions in which the queries $u^j$ are $1$; in particular, if the current query is $u^t$, for each $j\leq t$ and each $i\in[n]$ there is a red edge from $(i,0)$ to $(i,j)$ whenever $u^j[i]=1$, and the edges for $u^t$ are inserted right after $u^t$ is queried.}
  \label{fig:inner}
\end{figure}

There are two copies of $G(u)$, $G(u)$ and $G'(u)$. We chain $G(u),G(v),G(w),G'(u)$ together as follows. For every $i,j$ such that $A[i,j]=1$ we add edges from $(i,n)$ of $G(u)$ to $(j,0)$ of $G(v)$, from $(i,n)$ of $G(v)$ to $(j,0)$ of $G(w)$, and from $(i,n)$ of $G(w)$ to $(j,0)$ of $G'(u)$. See Figure~\ref{fig:middle}.

 \begin{figure}[h]
  \centering
  \includegraphics[width=.9\linewidth]{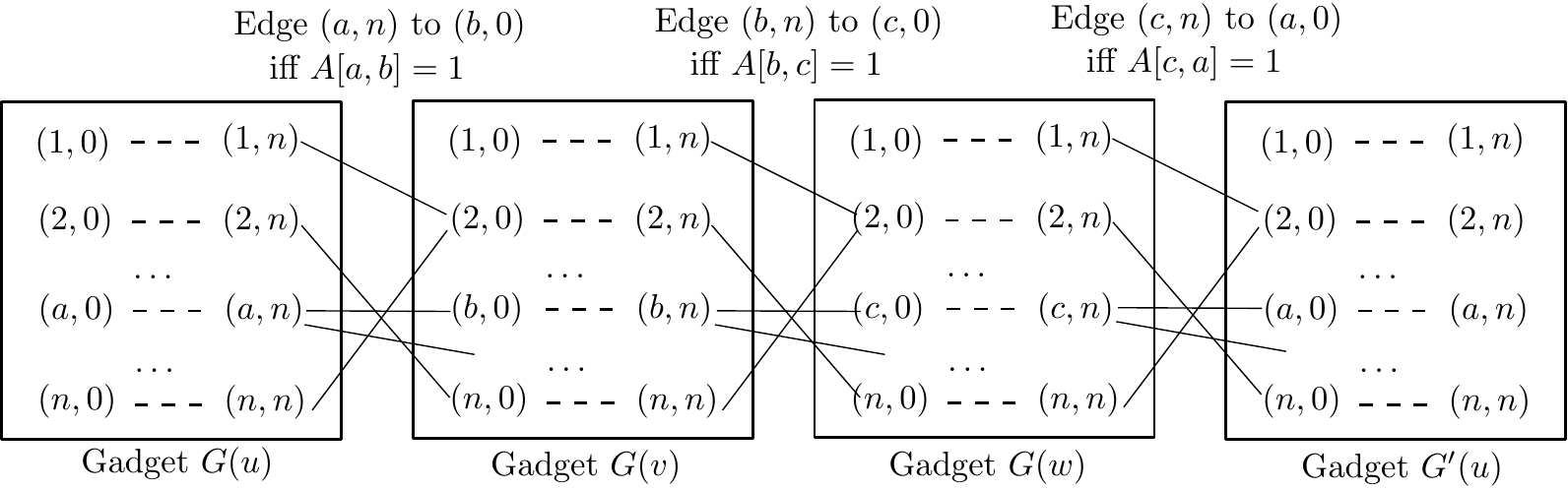}
  \caption{The ``middle'' gadget connecting the gadgets $G(u),G(v),G(w),G'(u)$.}
  \label{fig:middle}
\end{figure}

Notice that so far we have $O(n^2)$ vertices and edges. 
\begin{claim}\label{claim:1}
Right after inserting the edges for $u^\ell,v^\ell,w^\ell$ into $G(u),G(v),G(w),G'(u)$, the distance between $(i,0)$ in $G(u)$ and $(i,n)$ in $G'(u)$ is $3+4(n+1-\ell)$ if there are some $j,k$ so that $$u^\ell[i] \wedge v^\ell[j]\wedge w^\ell[k] \wedge A[i,j]\wedge A[j,k]\wedge [k,i]=1,$$ and the distance is $>3+4(n+1-\ell)$ otherwise.
\end{claim}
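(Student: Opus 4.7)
The plan is to enumerate the possible shortest paths from $(i,0)\in G(u)$ to $(i,n)\in G'(u)$ and minimize over them. First, I would prove a structural lemma about a single inner gadget: the only edges incident to $(\alpha,0)$ in $G(u)$ are the query edges $(\alpha,0)\text{--}(\alpha,j)$ for indices $j\le\ell$ with $u^j[\alpha]=1$, and from any $(\alpha,j)$ with $1\le j\le n$ one can only move along the row-$\alpha$ path. Consequently, the distance inside the gadget from $(\alpha,0)$ to $(\alpha',n)$ is $+\infty$ unless $\alpha=\alpha'$, in which case it equals $1+(n-\ell^*_\alpha)$, where $\ell^*_\alpha$ is the largest $j\le\ell$ with $u^j[\alpha]=1$. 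The analogous statement holds in $G(v),G(w),G'(u)$.

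Next, I would use the ``middle'' gadget structure to constrain the global path. The only inter-gadget edges are $(x,n)_{G(u)}\text{--}(y,0)_{G(v)}$ (iff $A[x,y]=1$), and analogously between $G(v),G(w)$ and $G(w),G'(u)$. Combined with the within-gadget observation above, any $(i,0)_{G(u)}$-to-$(i,n)_{G'(u)}$ walk must visit the four gadgets in order, exiting $G(u)$ at $(i,n)$, entering $G(v)$ at some $(j,0)$ with $A[i,j]=1$, exiting at $(j,n)$, entering $G(w)$ at some $(k,0)$ with $A[j,k]=1$, exiting at $(k,n)$, and finally entering $G'(u)$ at some $(i',0)$ with $A[k,i']=1$; since we must terminate at $(i,n)\in G'(u)$, the within-gadget constraint forces $i'=i$, hence $A[k,i]=1$.

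Combining the within- and between-gadget costs, every such walk has length
\[
\bigl[1+(n-a)\bigr]+1+\bigl[1+(n-b)\bigr]+1+\bigl[1+(n-c)\bigr]+1+\bigl[1+(n-d)\bigr] \;=\; 4n+7-(a+b+c+d),
\]
where $a,b,c,d\le\ell$ and $u^a[i]=v^b[j]=w^c[k]=u^d[i]=1$. Since each of $a,b,c,d\le\ell$, the length is at least $4n+7-4\ell=3+4(n+1-\ell)$. Equality holds iff $a=b=c=d=\ell$ together with the three $A$-conditions, i.e.\ iff there exist $j,k$ with $u^\ell[i]\wedge v^\ell[j]\wedge w^\ell[k]\wedge A[i,j]\wedge A[j,k]\wedge A[k,i]=1$. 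Conversely, when such $j,k$ exist, the explicit walk using precisely the edges $(\cdot,0)\text{--}(\cdot,\ell)$ and the length-$(n-\ell)$ tail paths achieves $3+4(n+1-\ell)$.

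The main obstacle is ruling out ``shortcuts'' in the undirected setting: one must check that backtracking into an earlier gadget, revisiting $(\alpha,0)$ mid-row, or using multiple query edges in the same row cannot shorten the walk. Each such detour revisits at least one edge or adds at least two inter-gadget traversals, so it strictly increases the length; this can be formalised by observing that the row-$\alpha$ path together with $(\alpha,0)$ forms a tree once one fixes which query edge is used, and replacing the chosen query edge by a different one only increases $n-\ell^*_\alpha$. In the directed version (edges oriented from lower index to higher index and from $G(u)$ to $G'(u)$), shortcuts are impossible by orientation, so the same count applies.
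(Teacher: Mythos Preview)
Your proposal is correct and follows essentially the same approach as the paper. Both arguments observe that any path must use at least three inter-gadget edges and must traverse from layer $0$ to layer $n$ within each of the four gadgets, with the optimal in-gadget traversal costing $1+(n-\ell)$ via the query edge to $(\cdot,\ell)$; your version is simply more explicit about the row-disconnectedness inside each gadget (forcing $i'=i$ in $G'(u)$ and fixing $j,k$ across $G(v),G(w)$) and about the arithmetic $4n+7-(a+b+c+d)$, and you spell out the shortcut-exclusion that the paper leaves implicit.
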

\begin{proof}
To get from $G(u)$ to $G'(u)$ one needs to use at least $3$ edges (between $G(u)$ and $G(v)$, between $G(v)$ and $G(w)$ and between $G(w)$ and between $G'(u)$) and then also one needs to go from layer $0$ ($(*,0)$) to layer $n$ ($(*,n)$) in each of the $4$ gadgets. The shortest possible way to do this is to go through an edge from $(j,0)$ to $(j,\ell)$ and then along the path from $(j,\ell)$ to $(j,n)$, altogether having length $n+1-\ell$. Thus the shortest a path from $(i,0)$ in $G(u)$ to $(i,n)$ in $G'(u)$ is $3+4(n+1-\ell)$. 

This minimal length is achievable if and only if (1) there are some $j$ and $k$ so that the edges $(i,0)$ to $(i,\ell)$, $(j,0)$ to $(j,\ell)$ and $(k,0)$ to $(k,\ell)$ exist in $G(u),G'(u)$ and $G(v)$ and $G(w)$, respectively, and (2) also the edges $(i,n)$ to $(j,0)$ from $G(u)$ to $G(v)$, $(j,n)$ to $(k,0)$ from $G(v)$ to $G(w)$ and $(k,n)$ to $(i,0)$ from $G(w)$ to $G'(u)$ also exist. That is, if and only if $u^\ell[i] \wedge v^\ell[j]\wedge w^\ell[k] \wedge A[i,j]\wedge A[j,k]\wedge [k,i]=1.$
\end{proof}

We will now complete the construction.
Beyond the gadgets $G(u),G'(u),G(v),G(w)$ and the connections between them, we add two paths:
\begin{itemize}
\item The first consists of vertices $s_{\ell,i}$ for $\ell,i\in [n]$ and edges $(s_{\ell,i},s_{\ell,i+1})$ when $i<n$ and $(s_{\ell,n},s_{\ell+1,1})$ for $\ell<n$. 

\item The second similarly consists of vertices $t_{\ell,i}$ for $\ell,i\in [n]$ and edges $(t_{\ell,i},t_{\ell,i+1})$ when $i<n$ and $(t_{\ell,n},t_{\ell+1,1})$ for $\ell<n$.
\end{itemize}

The source and sink for the s-t shortest paths instance are $s_{n,n}$ and $t_{n,n}$.

On query $(u^j,v^j,w^j)$ to OMv3, we insert the already described edges into $G(u),G'(u),G(v),G(w)$ and then perform the following $n$ insertions and queries: For each $a$ from $1$ to $n$, insert the edges $s_{j,a}$ to $(a,0)$ in $G(u)$ and $t_{j,a}$ to $(a,n)$ in $G'(u)$; then query the distance between $s_{n,n}$ and $t_{n,n}$. See Figure~\ref{fig:all}.

\begin{figure}[h]
  \centering
  \includegraphics[width=.8\linewidth]{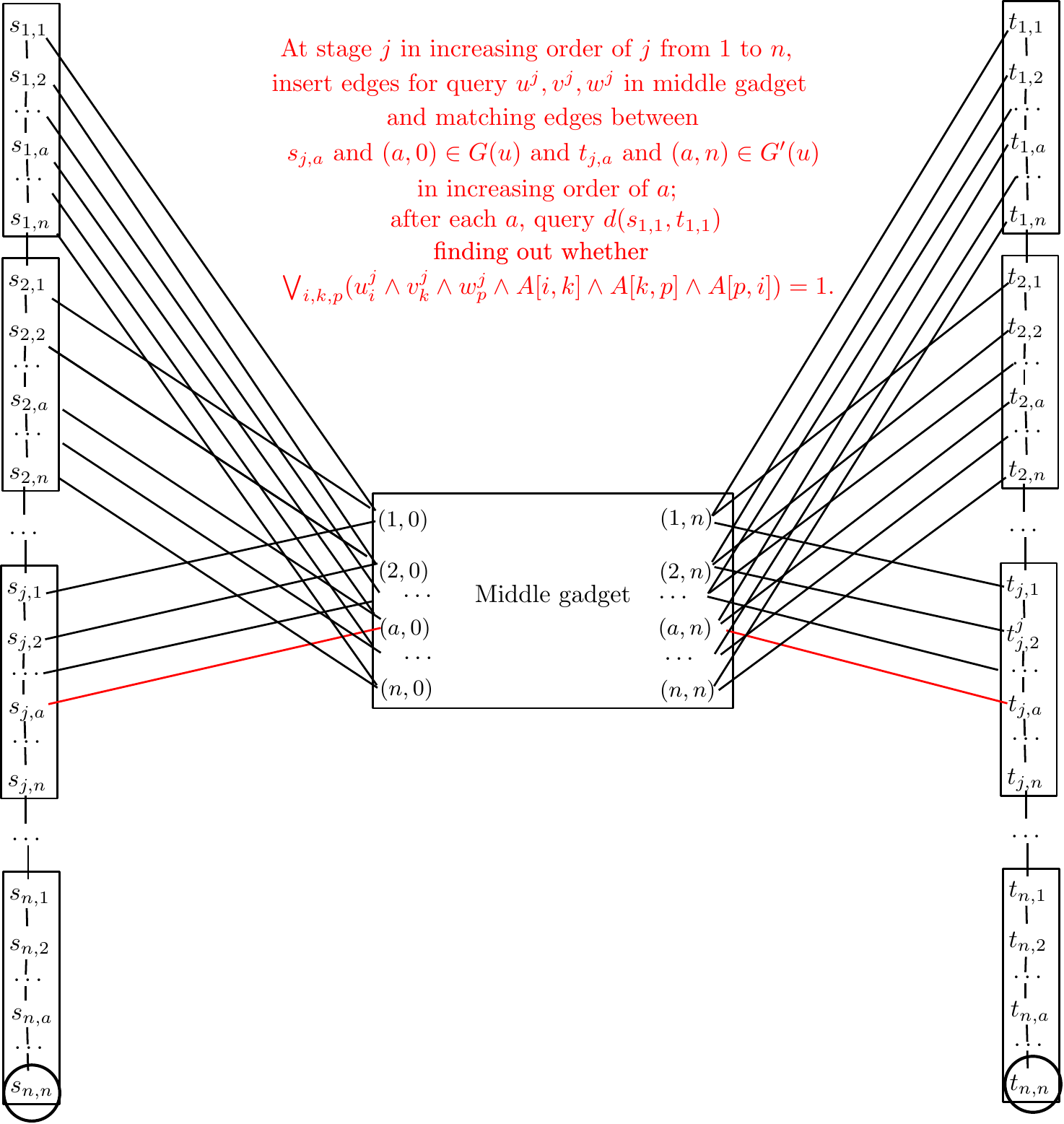}
  \caption{The full reduction connecting the middle gadget with the source and sink paths.}
  \label{fig:all}
\end{figure}

\begin{claim}\label{claim:2}
After inserting the edges $s_{j,a}$ to $(a,0)$ in $G(u)$ and $t_{j,a}$ to $(a,n)$ in $G'(u)$, the distance from $s_{n,n}$ to $t_{n,n}$ is $2(n-j)n+2(n-a+1)+3+4(n+1-j)$ if there are some $b,c$ so that $$u^j[a] \wedge v^j[b]\wedge w^j[b] \wedge A[a,b]\wedge A[b,c]\wedge A[c,a]=1,$$ and the distance is $>2(n-j)n+2(n-a+1)+3+4(n+1-j)$ otherwise.
\end{claim}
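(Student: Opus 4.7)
First I would observe that any shortest path from $s_{n,n}$ to $t_{n,n}$ crosses the middle-gadget block exactly once: the only edges connecting the source chain to the middle are the side-edges $s_{\ell,b}$--$(b,0)$ in $G(u)$, and the only edges connecting the middle to the sink chain are $(b,n)$--$t_{\ell,b}$ in $G'(u)$, so any additional crossing costs at least two extra side-edges. Hence I may restrict to paths that walk the source chain from $s_{n,n}$ to some $s_{\ell_s,b_s}$, take a side-edge into $(b_s,0)$ in $G(u)$, traverse the middle block to some $(b_t,n)$ in $G'(u)$, take a side-edge out to $t_{\ell_t,b_t}$, and walk the sink chain to $t_{n,n}$. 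Feasibility at stage $(j,a)$ demands $\ell_s\leq j$ with $b_s\leq a$ when $\ell_s=j$, and symmetrically for $(\ell_t,b_t)$.

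Next I would tally the three segments. The source-chain distance from $s_{n,n}$ to $s_{\ell_s,b_s}$ equals $(n-\ell_s)n+(n-b_s)$, and analogously on the sink side. For the middle traversal from $(b_s,0)$ in $G(u)$ to $(b_t,n)$ in $G'(u)$, I would re-run the accounting from the proof of Claim~\ref{claim:1}: any such path must use one jump edge and one row-chain traversal in each of $G(u),G(v),G(w),G'(u)$ together with three cross-gadget edges, giving total length $7+4n-\sum_{r=1}^4 \ell_r^*$ where $\ell_r^*\leq j$ is the jump index used in the $r$-th gadget; hence the middle length is at least $3+4(n+1-j)$. Summing, the total distance is at least
\[
(n-\ell_s)n+(n-\ell_t)n+(n-b_s)+(n-b_t)+2+3+4(n+1-j),
\]
which over feasible tuples is uniquely minimised by $\ell_s=\ell_t=j$ and $b_s=b_t=a$, giving exactly the target $2(n-j)n+2(n-a+1)+3+4(n+1-j)$. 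Any other feasible choice strictly increases this bound: $\ell_s<j$ adds at least $n$ (vs.\ at most $n-a$ saved on $b_s$), while $b_s<a$ or $b_t<a$ each adds at least $1$.

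Finally, with the forced optimum $\ell_s=\ell_t=j$, $b_s=b_t=a$, the middle bound is tight iff $\ell_r^*=j$ for every $r$, which by the argument in the proof of Claim~\ref{claim:1} is equivalent to the existence of $x,y$ satisfying $u^j[a]\wedge v^j[x]\wedge w^j[y]\wedge A[a,x]\wedge A[x,y]\wedge A[y,a]=1$---precisely the stated condition. If this holds, the path through $s_{j,a}$ and $t_{j,a}$ realises the target exactly; if it fails, the middle at $b_s=b_t=a$ is strictly longer than $3+4(n+1-j)$, and any sub-optimal choice of side-edges is also strictly longer, so the total distance exceeds the target. The main obstacle is lifting the middle-gadget lower bound from Claim~\ref{claim:1} (which is stated for coinciding entry and exit indices) to arbitrary $b_s,b_t$; however, the same four-block accounting yields the same $3+4(n+1-j)$ bound in general, and the extra equality requirement $u^j[b_s]=u^j[b_t]=1$ is harmless because $b_s=b_t=a$ is already forced by the side-chain cost analysis.
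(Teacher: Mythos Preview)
Your proposal is correct and follows essentially the same approach as the paper: decompose any $s_{n,n}$--$t_{n,n}$ path into a source-chain segment, a middle-gadget segment, and a sink-chain segment; lower-bound the middle segment by $3+4(n+1-j)$ via the four-block accounting from Claim~\ref{claim:1}; minimise the chain contributions over feasible side-edge choices; and finally check that equality forces the OMv3 condition. Your case analysis is more compactly stated than the paper's explicit split on whether $p,r$ equal $j$, and you are slightly more careful than the paper in two respects---you explicitly argue the single-crossing structure, and you flag (correctly) that the lower bound of Claim~\ref{claim:1} must be lifted from equal entry/exit indices to arbitrary $b_s,b_t$, which the paper uses without comment.
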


\begin{proof}
The shortest path from $s_{n,n}$ to $t_{n,n}$ goes from $s_{n,n}$ up the $s$-path (on the left in Figure~\ref{fig:all}) to some node $s^p_b$, then along the edge $(s^p_b,(b,0))$ to the first layer in the Middle gadget, then through the middle gadget, exiting it at some node $(c,n)$ in the last layer, going to a node $(r,c)$ on the $t$-path (on the right in Figure~\ref{fig:all}) down to $t_{n,n}$.

The length of this path is the length of the subpath from $(b,0)$ to $(c,b)$ inside the middle gadget $+$ \[n(n-p)+(n-b)+n(n-r)+(n-c)+2.\]

By Claim~\ref{claim:1}, the shortest possible distance between the first and last layers of the Middle gadget, after inserting the edges inside it for $u^j,v^j,w^j$ is $3+4(n+1-j)$.
Thus, for a particular choice of $p,r\leq j$ and $b,c\in[n]$, the length of the above path is at least 
\[3+4(n+1-j)+n(n-p)+(n-b)+n(n-r)+(n-c)+2.\]

If $p$ and $r$ are both $\leq j-1$, the length of the path is at least:
\[3+4(n+1-j)+2n(n-j)+2n+(n-b)+(n-c)+2> 3+4(n+1-j)+2n(n-j)+2(n-a)+2,\]
since $b,c\leq n$ and $a\geq 1$.

If $p\leq j-1$ and $r=j$ (the case $p=j$ and $r\leq j-1$ is similar), then since the only added edges from the $j$th part of the $t$ path are between $(c,n)$ and $t_{j,c}$ for $c\leq a$ at this point, we also get that the length of the path is at least \[3+4(n+1-j)+2n(n-j)+n+(n-a)+2> 3+4(n+1-j)+2n(n-j)+2(n-a)+2.\]

Similarly, if $p,r=j$, and $b$ or $c$ is $<a$, the length of the path is at least 
\[3+4(n+1-j)+2n(n-j)+2(n-a)+1+2> 3+4(n+1-j)+2n(n-j)+2(n-a)+2.\]
Finally, by Claim~\ref{claim:1}, if $p=r=j,b=c=a$, then the length of the path is $3+4(n+1-j)+2n(n-j)+2(n-a)+2$ if there are some $b,c$ such that $u^j[a]\wedge v^j[b]\wedge w^j[c]\wedge A[a,b]\wedge A[b,c]\wedge A[c,a]=1$, and the length is larger otherwise.
\end{proof}

\printbibliography[heading=bibintoc] 
\appendix
\section{Related Work}\label{app}

\paragraph{Reachability, Strongly-Connected Components and Topological Order.} The problems most related to SSSP in directed graphs are the easier problems of maintaining single-source reachability, strongly-connected components and the topological order of the graph. In decremental graphs, these problems have recently be solved to near-optimality \cite{bernstein2019decremental} following a long line of research \cite{shiloach1981line, italiano1988finding, roditty2016fully, lkacki2011improved, chechik2016decremental, italiano2017decremental} whilst in incremental graphs even the complexity of cycle detection is still open with the currently best bounds implying total update time $\tilde{O}(\min\{m^{4/3}, m\sqrt{n}, n^2\})$ \cite{bhattacharya2018improved, bernstein2018incremental, bender2016new}. Thus, it is conceivable that the incremental SSSP problem might be no easier than its decremental counterpart. Finally, we point out that the problem of all-pairs reachability, often referred to as transitive closure, has also been considered and solved to near-optimality in decremental and fully dynamic graphs \cite{lkacki2011improved, roditty2016fully}.

\paragraph{Single Source Shortest Paths in Undirected Graphs.} In undirected graphs, Bernstein and Roditty \cite{bernstein2011improved} gave the first improvement over the classic ES-tree data structure \cite{shiloach1981line} by presenting an algorithm for decremental unweighted $(1+\eps)$-approximate SSSP with total time $n^2 2^{O(\sqrt{\log (n)})}$. It was subsequently shown by Henzinger et al.~\cite{henzinger2017sublinear, HenzingerKN14} that subquadratic update time was possible and they then gave an approach \cite{henzinger2014decremental} with total update time $m^{1+O(\log^{5/4} ((\log n)/\eps))/\log^{1/4} n}\log W=m^{1+o(1)}\log W$ for weighted graphs which is also believed to work in the incremental setting (though not explicitly stated). These data structures, however, are all randomized and assume an oblivious adversary. Consequently, they can not be used as a black-box in many applications. 

To address this issue, Bernstein and Chechik gave the first deterministic partially dynamic $(1+\eps)$-approximate algorithms that improve upon the ES-tree data structure. They first presented a data structure with total update time $\tilde{O}(n^2)$ \cite{bernstein2016deterministic} which was extended to handle weights in total time $\tilde{O}(n^2 \log W)$ \cite{bernstein2017deterministicweighted}. Their data structures, however, work by contracting dense parts of graphs and they are therefore not able to output corresponding paths but only distance estimates\footnote{In the conference version of \cite{bernstein2016deterministic}, the authors claim that there data structure can be extended to return shortest paths but that they defer the proof to the full version, however, in \cite{bernstein2017deterministicweighted} one of the authors points out that this issue could not be resolved. }. Very recently, this issue was addressed by Chuzhoy and Khanna \cite{Chuzhoy:2019:NAD:3313276.3316320} who gave an algorithm with $n^{2+o(1)} \log W$ update time under vertex deletions that works against an adaptive adversary and that can return approximate shortest paths. Using their new data structure, they then showed that various flow and cut problems can be improved using this data structure in a black box fashion. However, their algorithm only works assuming vertex deletions and requires $n^{1+o(1)}$ query time for a path.  Further, Bernstein and Chechik recently gave an algorithm with total update time $\tilde{O}(mn^{3/4})$ \cite{bernstein2017deterministic} that also improves the running time in unweighted sparse graphs which in turn was improved to $O(mn^{0.5+o(1)})$ by Probst Gutenberg and Wulff-Nilsen \cite{DetDecrementalSSSP}.

Finally, we point out that for the fully dynamic SSSP problem a trivial lower bound is implied by the APSP conjecture, since with $O(n)$ updates and $O(n^2)$ queries, the source can be added via a unit-weight edge to any vertex, and then the distances can be queried and the edge can be removed. This lower bound also extends to $(1+\eps)$-approximate SSSP since even static APSP with multiplicative stretch $(1+\eps)$ and additive stretch $<2$ is believed to have no truly subcubic algorithm under the BMM Hypothesis as well as other hypotheses \cite{cohen2001all, williams2010subcubic}.

\paragraph{All Pairs Shortest Path in Undirected Graphs}

For the undirected APSP problem in decremental graphs, Henzinger et al.~\cite{henzinger2014decremental} presented an algorithm with stretch $((2+\eps)^k - 1)$ and total update time $m^{1+1/k+o(1)} \log^2 W$ for any positive integer $k$. They also gave an algorithm with stretch $(2+\eps)$ or $(1+\eps, 2)$ and with total update time $\tilde{O}(n^{2.5})$ in \cite{henzinger2016dynamic} and a $(1+\eps)$-approximate deterministic algorithm with $\tilde{O}(mn/\eps)$ update time which derandomized the construction by Roditty and Zwick \cite{roditty2012dynamic} with matching running time. Recently, Chechik  \cite{chechik2018near} presented an algorithm with $(2+\eps)k-1$-approximate algorithm with update time $mn^{1/k + o(1)} \log W$ for any positive integer $k$ and constant $\eps$, whose total update time matches the preprocessing time of static distance oracles \cite{thorup2005approximate} with corresponding stretch.

For fully dynamic graphs, Bernstein \cite{bernstein2009fully} gave an algorithm with stretch $(2+\eps)$ with $m^{1+o(1)}$ update time which constitutes the first improvement over the data structure by Demetrescu and Italiano \cite{demetrescu2004new}. The first data structure for fully dynamic graphs and update time sublinear in $n$ and constant stretch was given by Abraham et al.~\cite{abraham2014fully}.

\paragraph{All Pairs Shortest Paths in Directed Graphs}

For the all-pairs shortest paths (APSP) problem, Baswana et al.~\cite{baswana2002improved, baswana2007improved} presented an algorithm for exact decremental unweighted graphs with total update time $\tilde{O}(n^3 n)$, and also gave a $(1+\eps)$ approximate version with total update time $\tilde{O}(n^2\sqrt{ (m/\eps)})$. The latter result was then improved to $\tilde{O}(mn \log W)$ total update time by Bernstein \cite{bernstein2016maintaining}. For fully dynamic graphs, a line of research \cite{king1999fully, demetrescu2006fully} culminated in a data structure by Demetrescu and Italiano \cite{demetrescu2004new} with $\tilde{O}(n^2)$ amortized update time that is exact, deterministic and works for weighted graphs with no dependence in $W$. Their data structure was slightly improved, generalized and simplified by Thorup \cite{thorup2004fully}. Further, Thorup \cite{thorup2005worst} gave a deterministic data structure for fully dynamic APSP with $\tilde{O}(n^{2+3/4})$ \emph{worst-case} update time which was recently improved using randomization to $\tilde{O}(n^{2+2/3})$ by Abraham et al.~\cite{abraham2017fully}, and to $O(n^{2.71})$ deterministically by Probst Gutenberg and Wulff-Nilsen \cite{sodaAPSP}.

\end{document}